\newtheorem{theorem}{Theorem}[section]
\newtheorem{lemma}{Lemma}[section]
\newtheorem{corollary}{Corollary}[section]
\newtheorem{definition}{Definition}[section]
\newtheorem{proposition}{Proposition}[section]
\newcommand{\qed}{\hfill\hbox{\rlap{$\sqcap$}$\sqcup$}}
\newenvironment{proof}{\noindent \emph{Proof.\,}}{\qed}
\algnewcommand\algorithmicreturn{\textbf{return}}
\algnewcommand\RETURN{\algorithmicreturn}
\algnewcommand\algorithmicprocedure{\textbf{procedure}}
\algnewcommand\PROCEDURE{\item[\algorithmicprocedure]}%
\algnewcommand\algorithmicendprocedure{\textbf{end procedure}}
\algnewcommand\ENDPROCEDURE{\item[\algorithmicendprocedure]}%
\algnewcommand{\algvar}[1]{{\text{\ttfamily\detokenize{#1}}}}
\algnewcommand{\algarg}[1]{{\text{\ttfamily\itshape\detokenize{#1}}}}
\algnewcommand{\algproc}[1]{{\text{\ttfamily\detokenize{#1}}}}
\algnewcommand{\algassign}{\leftarrow}
\title{Men Can't Always be Transformed into Mice:  Decision Algorithms and Complexity for Sorting by Symmetric Reversals}
\author{Xin Tong\footnote{College of Computer Science and Technology, Shandong University, Qingdao, China. Email: {\tt xtong@mail.sdu.edu.cn}.}
\and
Yixiao Yu\footnote{College of Computer Science and Technology, Shandong University, Qingdao, China. Email: {\tt yixiaoyu@mail.sdu.edu.cn}.}
\and
Ziyi Fang\footnote{College of Computer Science and Technology, Shandong University, Qingdao, China. Email: {\tt fangziyi@mail.sdu.edu.cn}.}
\and
Haitao Jiang\footnote{College of Computer Science and Technology, Shandong University, Qingdao, China. Email: {\tt htjiang@sdu.edu.cn}.}
\and
Lusheng Wang\footnote{Department of Computer Science, City University of Hong Kong, Kowloon, Hong Kong, China. Email: {\tt cswangl@cityu.edu.hk}.}
\and
Binhai Zhu\footnote{Gianforte School of Computing, Montana State University, Bozeman, MT 59717, USA. Email: {\tt bhz@montana.edu}.}
\and
Daming Zhu\footnote{College of Computer Science and Technology, Shandong University, Qingdao, China. Email: {\tt dmzhu@sdu.edu.cn}.}
}
\date{}
\begin{document}

\maketitle

\begin{abstract}
Sorting a permutation by reversals is a famous problem in genome rearrangements, and has been well studied over the past thirty years.
But the involvement of repeated segments is inevitable during genome evolution, especially in reversal events.
Since 1997, quite some biological evidence were found that in many genomes the reversed regions are usually flanked by a pair of inverted repeats. 
For example, a reversal will transform $+a +x -y -z -a$ into $+a +z +y -x -a$, where $+a$ and $-a$ form a pair of inverted repeats.

This type of reversals are called symmetric reversals, which, unfortunately, were largely ignored until recently.
In this paper, we investigate the problem of sorting by symmetric reversals, which requires a series of symmetric reversals to transform one chromosome $A$ into the another chromosome $B$. The decision problem of sorting by symmetric reversals is referred to as {\em SSR} (when the input chromosomes $A$ and $B$ are given, we use {\em SSR(A,B)}, similarly for the following optimization version) and the corresponding optimization version
(i.e., when the answer for {\em SSR(A,B)} is yes, using the minimum number of symmetric reversals to convert $A$ to $B$), is referred to as {\em SMSR(A,B)}. The main results of this paper are summarized as follows, where the input is a pair of chromosomes $A$ and $B$ with $n$ repeats.
\begin{enumerate}
\item We present an $O(n^2)$ time algorithm to solve the decision problem {\em SSR(A,B)}, i.e., determine whether a chromosome $A$ can be transformed into $B$ by a series of symmetric reversals. This result is achieved by converting the problem to the circle graph, which has been augmented significantly from the traditional circle graph and a list of combinatorial properties must be proved to successfully answer the decision question. 

\item We design an $O(n^2)$ time algorithm for a special 2-balanced case of {\em SMSR(A,B)}, where chromosomes $A$ and $B$ both have duplication number 2 and every repeat appears twice in different orientations in $A$ and $B$.

\item We show that SMSR is NP-hard even if the duplication number of the input chromosomes are at most 2, hence showing that the above positive optimization result is the best possible. As a by-product, we show that the \emph{minimum Steiner tree} problem on \emph{circle graphs} is NP-hard, settling the complexity status of a 38-year old open problem.
\end{enumerate}

\end{abstract}

\section{Introduction}
In the 1980s, quite some evidence was found that some species have essentially the same set of genes, but their gene order differs~\cite{HP,PH}.
Since then, sorting permutations with rearrangement operations has gained a lot of interest in the area of computational biology in the last thirty years.
Sankoff {\em et al.} formally defined the genome rearrangement events with some basic operations on genomes, e.g., reversals, transpositions and translocations~\cite{SLAP},
where the reversal operation is adopted the most frequently~\cite{KS,GF,WPRC}.

The complexity of the problem of sorting permutations by reversals is closely related to whether the genes are signed or not.
Watterson {\em et al.} pioneered the research on sorting an unsigned permutation by reversals~\cite{WEHM}.
In 1997, Caprara established the NP-hardness of this problem \cite{Caprara}. Soon after, Berman {\em et al.} showed it to be APX-hard~\cite{PBMK}.
Kececioglu and Sankoff presented the first polynomial time approximation for this problem with a factor of 2~\cite{KS}.
The approximation ratio was improved to 1.5 by Christie~\cite{Christie}. So far as we know, the best approximation ratio for the problem of sorting an unsigned permutation by reversals is 1.375~\cite{BHK}.
As for the more realistic problem of sorting signed permutations by reversals, Hannenhalli and Pevzner proposed an $O(n^4)$ time exact algorithm for this problem, where $n$ is the number of
genes in the given permutation (genomes) \cite{SHPP}. The time complexity was later improved to $O(n^2)$ by Kaplan {\em et al.}~\cite{KST}. The current best running time is $O(n^{1.5}\sqrt{\log n})$ by Tannier {\em et al.}~\cite{TBS}.

On the other hand, some evidence has been found that the breakpoints where reversals occur could have some special property in the genomes~\cite{Longo,Sankoff}.
As early as in 1997, some studies showed that the breakpoints are often associated with repetitive elements on mammals and drosophila genomes~\cite{Thomas,Bailey,Armengol,Small}.
In fact, the well-known ``site-specific recombination'',  which has an important application in ``gene knock out"~\cite{Sauer,SauH,OCM}, also fulfills this rule.
However, it was still not clear why and how repetitive elements play important roles in genome rearrangement.
Recently, Wang {\em et al.} conducted a systematic study on comparing different strains of various bacteria
such as Pseudomonas aeruginosa, Escherichia coli, Mycobacterium tuberculosis and Shewanella \cite{Dan1,Dan2}. Their study further illustrated  that repeats are associated with the ends of rearrangement segments for various rearrangement events such as reversal, transposition, inverted block interchange, etc, so that the left and right neighborhoods of those repeats remain unchanged  after the rearrangement events.
Focusing on reversal events, the reversed regions are usually flanked by a pair of inverted repeats \cite{Small}. 
The following real example is from Pseudomonas aeruginosa strains in \cite{Dan2}. Such a phenomenon can also better explain why the famous ``breakpoint reuse'' (which were an interesting finding and discussed in details when comparing human with mouse) happen \cite{PT}.
\begin{figure}[htbp]
  \centering
  \includegraphics[width=0.6\textwidth]{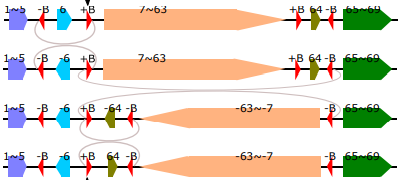}
  \caption{Three symmetric reversals use the repeat `$+B$' three times.}
  \label{consistentornot}
\end{figure} 

In this paper,
we propose a new  model called {\it sorting by symmetric reversals}, which requires each inverted region on the chromosomes  being flanked by a pair of mutually inverted repeats.
We investigate the decision problem of sorting by symmetric reversals (SSR for short), which asks whether a chromosome can be transformed into the other  by a series of symmetric reversals.
We devise an $O(n^2)$ time algorithm to solve this decision problem.
We also study the optimization version (referred to as SMSR) that uses a minimum number of symmetric reversals to transform one chromosome into the other. We design an $O(n^2)$ time algorithm for a special 2-balanced case of SMSR,  where chromosomes have duplication number 2 and every repeat appears twice in different orientations in each chromosome.
 We finally show that the optimization problem SMSR is NP-hard even if each repeat has at most 2 duplications in each of the input chromosome.

In the NP-hardness proof, we set up the relationship between our problem and the \emph{minimum Steiner tree} problem on \emph{circle graphs}.
The minimum Steiner tree problem on circle graphs has been considered to be in $P$ as indicated by Johnson in 1985 \cite{Johnson}. Recently,
Figueiredo {\em et al.} revisited Johnson's table and  still marked the problem   as in  $P$ \cite{FMSS}, while leaving the  reference as ``ongoing''.
Here we clarify that the \emph{minimum Steiner tree} problem on \emph{circle graphs} is in fact NP-hard, settling this 38-year old open problem.

This paper is organized as follows. In Section 2, we give some  definitions.  We then present an algorithm to  solve SSR  under a special case,  where the duplication number of the input chromosomes is 2 in Section 3.
In Section 4, we present a polynomial algorithm for SMSR for the special 2-balanced case.
In Section 5, we present an algorithm to solve SSR for the general case. In Section 6, we show that SMSR is NP-hard for the case that chromosomes have duplication number 2, with the help of the new NP-hardness result on the minimum Steiner tree problem on circle graphs. 
Finally, conclusions are  given in Section 7.

\section{Preliminaries}
In the literature of genome rearrangement, we always have a set of integers $\Sigma_1=\{1, \cdots, g\}$, where  each integer stands for a long DNA sequence (syntenic block or a gene).
For  simplicity, we use ``gene'' hereafter.
Since we will study symmetric reversals, we define $\Sigma_2=\{r_0, r_1, r_2, \cdots, r_t\}$ to be a set of symbols, each of them is referred to as a {\it repeat} and represents a relative shorter DNA sequence compared with genes. We then set $\Sigma =\Sigma_1\cup \Sigma _2$ to be the alphabet for the whole chromosome.

Since reversal operations work on a chromosome internally, a genome can be considered as a chromosome for our purpose, i.e., each genome is a singleton and contains only one chromosome.
Here we assume that each gene appears exactly once on a chromosome, on the other hand, by name, a repeat could appear multiple times.
A gene/repeat $x$ on a chromosome may appear in two different orientations, i.e., either as $+x$ or $-x$.
Thus, each chromosome of interest is presented by a sequence of signed integers/symbols.

The number of occurrences of a gene/repeat $x$ in both orientations is called the {\em duplication number} of $x$ on the chromosome $\pi$, denoted by $dp[x,\pi]$.
The duplication number of a chromosome $\pi$, denoted by $dp[\pi]$, is the maximum duplication number of the repeats on it.
For example, chromosome $\pi=[+r_0,+1,-r,+2,$ $+r,-r_0]$, $dp[1,\pi]=dp[2,\pi]=1$, $dp[r_0,\pi]=dp[r,\pi]=2$, and $dp[\pi]=2$.
Two chromosomes $\pi_{1}$ and $\pi_{2}$ are \emph{related} if their duplication numbers for all genes and repeats  are identical.
Let $|x|\in \Sigma$ be an integer or symbol, and $+|x|$ and $-|x|$ be two occurrences of $|x|$, where the orientations of $+|x|$ and $-|x|$ are different. A chromosome of $n$ genes/repeats is denoted as $\pi= [x_{1}, x_{2}, \dots, x_{n-1}, x_{n}]$.
A linear chromosome has two ends, and it can be read from either end to the other,
so the chromosome $\pi= [x_{1}, x_{2}, \dots, x_{n-1}, x_{n}]$ can also be described as $[-x_{n}, -x_{n-1}, \dots, -x_{2}, -x_{1}]$,
which is called the {\em reversed and negated} form of $\pi$.

A \emph{reversal} is an operation that reverses a segment of continuous integers (or symbols) on the chromosome.
A \emph{symmetric reversal} is a reversal, where   the reversed segment is flanked by pair of identical repeats with different orientations, i.e, either $(+r, \cdots, -r)$ or
$(-r, \cdots, +r)$ for some $r\in \Sigma_2$.
In other words, let  $\pi= [x_{1}, x_{2}, \dots, x_{n}]$ be a chromosome.  The reversal $\rho(i, j)$ ($1\leq i<j\leq n$) reverses the segment $[x_{i}, x_{i+1}, \dots, x_{j-1}, x_{j}]$, and yields $\pi'=[x_{1}, x_{2},\dots, x_{i-1}, -x_{j}, -x_{j-1}$, $\dots, -x_{i+1}, -x_{i}, x_{j+1},$ $\dots, x_{n}]$.
If $x_{i}=-x_{j}$, we say that $\rho(i, j)$ is a symmetric reversal on $|x_{i}|$.
Reversing a whole chromosome will not change the relative order of the integers but their signs, so we assume that each chromosome is flanked by
$+r_0$ and $-r_0$, then a chromosome will turn into its reversed and negated form by performing a symmetric reversal between $+r_0$  and $-r_0$.

Again, as a simple example, let $\pi=[+r_0, +1, -r_{1}, +2, +r_{2}, +r_{1}, +r_{2}, -r_0]$, then a symmetric reversal on $r_{1}$ yields
$\pi'=[+r_0, +1, -r_{1}, -r_{2}, -2, +r_{1}, +r_{2}, -r_0]$.

Now, we formally define the problems to be investigated in this paper.
\begin{definition}{Sorting by Symmetric Reversals,  \textbf{SSR} for short.}

\textbf{Instance:}~~Two related chromosomes $\pi$ and $\tau$, such that $dp[\pi]=dp[\tau]\geq2$.

\textbf{Question:}~~Is there a sequence of symmetric reversals that transform $\pi$ into $\tau$?.

\end{definition}

\begin{definition}{Sorting by the Minimum Symmetric Reversals, \textbf{SMSR} for short.}

\textbf{Instance:}~~Two related chromosomes $\pi$ and $\tau$ with $dp[\pi]=dp[\tau]\geq2$, and an integer $m$.

\textbf{Question:}~~Is there a  sequence of symmetric reversals $\rho_{1}, \rho_{2}, \dots, \rho_{m}$ that transform $\pi$ into $\tau$, such that $m$ is minimized?

\end{definition}

There is a standard way to make a signed gene/repeat unsigned. Let $\pi= [x_{0}, x_{1}, \dots, x_{n+1}]$ be a chromosome, each occurrence of gene/repeat of $\pi$, say $x_{i}$ ($0\leq i\leq n+1$), is represented by a pair of ordered nodes, $l(x_{i})$ and $r(x_{i})$. If the sign of $x_{i}$ is +, then $l(x_{i})=|x_{i}|^{h}$ and $r(x_{i})=|x_{i}|^{t}$; otherwise, $l(x_{i})=|x_{i}|^{t}$ and $r(x_{i})=|x_{i}|^{h}$. Note that, if $x_i$ and $x_j$ ($i\neq j$) are different occurrences of the same repeat, i.e., $|x_i|=|x_j|$,  $l(x_{i})$, $l(x_{j})$, $r(x_i)$ and $r(x_j)$   correspond to two nodes $|x_{i}|^{h}$ and $|x_{i}|^{t}$ only.
Consequently, $\pi$ will also be described as $[l(x_{0}), r(x_{0}), l(x_{1}), r(x_{1}), \dots, l(x_{n+1}), r(x_{n+1})]$.
We say that $r(x_{i})$ and $l(x_{i+1})$, for $0\leq i\leq n$, form an \emph{adjacency}, denoted by $\langle r(x_{i}), l(x_{i+1})\rangle$. (Note that in the signed representation
of a chromosome $\pi$, we simply say that $\langle x_i,x_{i+1}\rangle$ forms an adjacency; moreover, $\langle x_i,x_{i+1}\rangle=\langle -x_{i+1},-x_i\rangle$.) Also, we say that the adjacency $\langle r(x_{i}), l(x_{i+1})\rangle$ is associated with $x_{i}$ and $x_{i+1}$.
Let $\mathcal{A}[\pi]$ represent the multi-set of adjacencies of $\pi$.
We take the chromosome $\pi=[+r_0, +1, -r_1, +2, +r_1, -r_0]$ as an example to explain the above notations. The multi-set of adjacencies is $\mathcal{A}[\pi]=\{\langle r_{0}^{t}, 1^{h}\rangle$, $\langle1^{t}, r_{1}^{t}\rangle$, $\langle r_{1}^{h}, 2^{h}\rangle$, $\langle2^{t}, r_{1}^{h}\rangle$, $\langle r_{1}^{t}, r_{0}^{t}\rangle\}$,
$\pi$ can also be viewed as $[r_{0}^{h}, r_{0}^{t}, 1^{h}, 1^{t}, r_{1}^{t}, r_{1}^{h}, 2^{h}, 2^{t}, r_{1}^{h}, r_{1}^{t}, r_{0}^{t}, r_{0}^{h}]$.
\begin{lemma}
\label{preserve}
Let  $\pi$ be a chromosome and $\pi'$ is obtained from $\pi$ by performing a symmetric reversal.
Then
$\mathcal{A}[\pi]=\mathcal{A}[\pi']$.
\end{lemma}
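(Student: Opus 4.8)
The plan is to track exactly which adjacencies can possibly change when we apply the symmetric reversal $\rho(i,j)$, where $x_i=-x_j$ by hypothesis. Writing $\pi=[x_1,\dots,x_{i-1},x_i,\dots,x_j,x_{j+1},\dots,x_n]$ and $\pi'=[x_1,\dots,x_{i-1},-x_j,\dots,-x_i,x_{j+1},\dots,x_n]$, I would first observe that every adjacency lying entirely to the left of position $i$ or entirely to the right of position $j$ is literally untouched, so those contribute identically to $\mathcal{A}[\pi]$ and $\mathcal{A}[\pi']$. Hence it suffices to compare the adjacencies that meet the reversed block, namely the two boundary adjacencies and the $j-i$ internal ones.

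The two facts I would record at the outset are the node-swapping identities $l(-x)=r(x)$ and $r(-x)=l(x)$ (immediate from the definition of $l(\cdot)$ and $r(\cdot)$), together with the symmetry $\langle a,b\rangle=\langle b,a\rangle$ of an adjacency, which is exactly the stated identity $\langle x_k,x_{k+1}\rangle=\langle -x_{k+1},-x_k\rangle$ read in node form. With these in hand the comparison reduces to a short substitution.

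For the boundaries, the left one in $\pi'$ is $\langle r(x_{i-1}),l(-x_j)\rangle$; using $l(-x_j)=r(x_j)$ and then $x_i=-x_j$ gives $r(x_j)=l(-x_j)=l(x_i)$, so it equals the original $\langle r(x_{i-1}),l(x_i)\rangle$. Symmetrically, the right boundary in $\pi'$ is $\langle r(-x_i),l(x_{j+1})\rangle$, and $r(-x_i)=l(x_i)=r(x_j)$ recovers the original $\langle r(x_j),l(x_{j+1})\rangle$. For the internal adjacencies, the generic one in $\pi'$ (between the images of $x_k$ and $x_{k-1}$) is $\langle r(-x_k),l(-x_{k-1})\rangle=\langle l(x_k),r(x_{k-1})\rangle$, which by the symmetry of adjacencies is the original $\langle r(x_{k-1}),l(x_k)\rangle$; letting $k$ range over $i+1,\dots,j$ shows the internal adjacencies of $\pi'$ are merely a reordering of those of $\pi$. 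The three groups together give $\mathcal{A}[\pi]=\mathcal{A}[\pi']$ as multi-sets.

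I expect the only real subtlety to be the boundary computation, since this is the single place where the hypothesis $x_i=-x_j$ is used: for an arbitrary reversal the flanking node $l(-x_j)=r(x_j)$ would in general differ from $l(x_i)$, creating a new adjacency (a breakpoint), and it is precisely the inverted-repeat condition that forces $r(x_j)=l(x_i)$ and leaves the boundary intact. Everything else is the routine bookkeeping that negating a symbol interchanges the left/right roles of its two nodes.
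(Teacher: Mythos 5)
Your proof is correct and takes essentially the same approach as the paper's: both reduce the claim to the two boundary adjacencies and use $x_i=-x_j$ (via $l(x_i)=l(-x_j)$ and $r(x_j)=r(-x_i)$) to conclude that the broken and created adjacencies coincide. The only difference is that you explicitly check the internal adjacencies via the symmetry $\langle a,b\rangle=\langle b,a\rangle$, a bookkeeping step the paper leaves implicit.
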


\begin{proof}
It is apparent that performing the symmetric reversal between $+r_0$ and $-r_0$ will not change $\mathcal{A}[\pi]$.
Assume that the symmetric reversal $\rho(i, j)$ is performed on the chromosome $\pi= [x_{0}, x_{1}, \dots, x_{n+1}]$, such that $x_{i}=-x_{j}$, where $1\leq i<j\leq n$,
and yields $\pi'=\pi\bullet\rho(i, j)=[x_{0}, x_{1},\dots, x_{i-1}, -x_{j}, -x_{j-1}, \dots$, $-x_{i+1}, -x_{i}, x_{j+1}, \dots, x_{n+1}]$.
Then $\rho(i, j)$ breaks two adjacencies $\langle r(x_{i-1}), l(x_{i})\rangle$ and $\langle r(x_{j}), l(x_{j+1})\rangle$, and creates two
new adjacencies $\langle r(x_{i-1}), l(-x_{j})\rangle$ and $\langle r(-x_{i}), l(x_{j+1})\rangle$.
Since $x_{i}=-x_{j}$, $l(x_{i})=l(-x_{j})$ and $r(x_{j})=r(-x_{i})$, thus
$\langle r(x_{i-1}), l(x_{i})\rangle = \langle r(x_{i-1}), l(-x_{j})\rangle$ and $\langle r(x_{j}), l(x_{j+1})\rangle = \langle r(-x_{i}), l(x_{j+1})\rangle$.
Consequently, $\mathcal{A}[\pi]=\mathcal{A}[\pi']$.
\end{proof}

Actually, Lemma~\ref{preserve} implies a necessary condition for answering the decision question of \textbf{SSR}.
\begin{theorem}
\label{necessary}
Chromosome $\pi$ cannot be transformed into $\tau$ by a series of symmetric reversals if $\mathcal{A}[\pi]\neq\mathcal{A}[\tau]$.
\end{theorem}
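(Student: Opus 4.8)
The plan is to observe that the statement is simply the contrapositive of an invariance property already established for a single symmetric reversal, so that Lemma~\ref{preserve} does all the heavy lifting and only a routine induction remains. First I would prove the positive direction: if $\pi$ can be transformed into $\tau$ by a series of symmetric reversals, then necessarily $\mathcal{A}[\pi]=\mathcal{A}[\tau]$. The theorem then follows immediately by contraposition.

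To establish the positive direction, let $\rho_1,\rho_2,\dots,\rho_m$ be a sequence of symmetric reversals transforming $\pi$ into $\tau$, and set $\pi_0=\pi$ together with $\pi_k=\pi_{k-1}\bullet\rho_k$ for $1\le k\le m$, so that $\pi_m=\tau$. I would argue by induction on $k$ that $\mathcal{A}[\pi_k]=\mathcal{A}[\pi]$. The base case $k=0$ is immediate. For the inductive step, $\pi_k$ is obtained from $\pi_{k-1}$ by the single symmetric reversal $\rho_k$, so Lemma~\ref{preserve} gives $\mathcal{A}[\pi_k]=\mathcal{A}[\pi_{k-1}]$, and combining this with the induction hypothesis $\mathcal{A}[\pi_{k-1}]=\mathcal{A}[\pi]$ yields $\mathcal{A}[\pi_k]=\mathcal{A}[\pi]$. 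Taking $k=m$ then gives $\mathcal{A}[\tau]=\mathcal{A}[\pi]$, as desired.

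Finally I would pass to the contrapositive: if $\mathcal{A}[\pi]\ne\mathcal{A}[\tau]$, then no transforming sequence can exist, which is exactly the assertion of the theorem. There is no genuine obstacle here — the only point requiring a moment's care is that each intermediate object $\pi_k$ is itself a valid chromosome to which Lemma~\ref{preserve} applies, which holds automatically because every $\rho_k$ is, by hypothesis, a symmetric reversal performed on the current chromosome $\pi_{k-1}$. In effect the entire content of the theorem is contained in Lemma~\ref{preserve}; the present statement merely promotes one-step adjacency invariance into a many-step necessary condition for the decision problem \textbf{SSR}.
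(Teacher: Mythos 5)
Your proposal is correct and matches the paper's intent exactly: the paper states this theorem as an immediate consequence of Lemma~\ref{preserve} (``Lemma~\ref{preserve} implies a necessary condition\dots'') without writing out the argument, and your routine induction over the reversal sequence followed by contraposition is precisely the omitted reasoning.
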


A simple negative example would be $\pi=[+r_0$, $+r_1$, $-2$, $+r_1$, $-1$, $-r_0]$ and
$\tau=[+r_0$,$-r_1$,$+2$,$-r_1,$ $+1$,$-r_0]$.
One can easily check that $\mathcal{A}[\pi]\neq\mathcal{A}[\tau]$, which means that there is no way to convert $\pi$ to $\tau$ using symmetric reversals.
In the next section, as a warm-up, we first solve the case when each
repeat appears at most twice in $\pi$ and $\tau$. Even though the method is not
extremely hard, we hope the presentation and some of the concepts can help readers understand the details for the general case in Section 5 better.

\section{An $O(n^2)$ Algorithm for SSR with Duplication Number 2}
In this section, we consider a special case, where the duplication numbers for
the two related chromosomes $\pi$ and $\tau$ are both $2$. That is, $\mathcal{A}[\pi]=\mathcal{A}[\tau]$ and $dp[\pi]=dp[\tau]=2$.
We will design an algorithm with running time $O(n^2)$ to determine if there is a sequence of symmetric reversals that transform $\pi$ into $\tau$.

Note that $\mathcal{A}[\pi]$ is a multi-set, where an adjacency  may appear more than once. 
When the duplication number of each repeat in the chromosome is at most 2, the same adjacency can appear at most twice in $\mathcal{A}[\pi]$.

Let $\pi= [x_{0}, x_{1}, \dots, x_{n+1}]$ be a chromosome.
Let 
$x_i$ and $x_j$ be the two occurrences of a repeat $x$, 
and $x_{i+1}$ and $x_{j+1}$ the two occurrences of the other repeat 
$x'$ in $\pi$.
We say that $|x_i|$ and $|x_{i+1}|$ are {\it redundant}, if 
 $r(x_{i})=r(x_{j})$ and $l(x_{i+1})=l(x_{j+1})$ (or $r(x_{i})=l(x_{j})$ and $l(x_{i+1})=r(x_{j-1})$). 
In this case, the adjacency $\langle r(x_i), l(x_{i+1} \rangle $  appears twice. In fact, it is the only case that an adjacency can appear twice.
 An example is as follows: $\pi=[+r_0, +r_1, -r_2, +1, +r_2, -r_1,-r_0]$, where the adjacency $\langle +r_1,-r_2\rangle$ appears twice (the second negatively), hence $r_1$ and $r_2$ are redundant.
 The following lemma tells us that if $x_i$ and $x_j$ are redundant, 
 we only need to use one of them to do reversals and the other can be deleted from the chromosome so that each adjacency appears only once.
 
 \begin{lemma}
\label{norepetitive}
Given two chromosomes  $\pi= [x_{0}, x_{1}, \dots, x_{n+1}]$ and $\tau$, such that $\mathcal{A}[\pi]=\mathcal{A}[\tau]$.
Let $|x_i|$ and $|x_{i+1}|$ be two    repeats in $\pi$ that are redundant.
Let $\pi'$ and $\tau'$ be the chromosomes after deleting the two occurrences of $|x_{i+1}|$ from both $\pi$ and $\tau$, respectively.
Then $\pi$ can be transformed into $\tau$ by a series of symmetric reversals if and only if $\pi'$ can be transformed into $\tau'$ by a series of symmetric reversals.
\end{lemma}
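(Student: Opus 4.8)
The plan is to show that the redundant repeat $x'=|x_{i+1}|$ is entirely slaved to $x=|x_i|$, so that erasing it costs no reachability information. I first record the two facts that drive everything. Since redundancy of $x$ and $x'$ is equivalent to the adjacency $\langle r(x_i),l(x_{i+1})\rangle$ occurring twice, and $\mathcal{A}[\pi]=\mathcal{A}[\tau]$, the repeats $x$ and $x'$ are redundant in $\tau$ as well; hence ``deleting the two occurrences of $x'$'' is well defined in both chromosomes and yields $\pi'$ and $\tau'$. Moreover, by Lemma~\ref{preserve} every symmetric reversal preserves $\mathcal{A}$, so it preserves this doubled adjacency; consequently, at every stage of any reversal sequence the two copies of $x$ stay glued to the two copies of $x'$, forming two consecutive two-letter blocks that are either both $(x,x')$ (the direct case) or one $(x,x')$ and its reversed--negated copy (the inverted case).

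The crux is that, whenever a symmetric reversal on $x$ or on $x'$ is possible, the two have exactly the same effect. Because the glued blocks are either both direct or one direct and one inverted, the two occurrences of $x$ have opposite orientation precisely when the two occurrences of $x'$ do, namely in the inverted case; there the chromosome reads, up to signs, $[\dots,x,x',M,-x',-x,\dots]$, and both the reversal between the two $x$'s and the reversal between the two $x'$'s replace $M$ by its reversed--negated form and fix the two blocks, so they output the same chromosome. In the direct case neither repeat admits a symmetric reversal at all, so nothing need be said. I also observe that a symmetric reversal on a repeat $y\notin\{x,x'\}$ can never split a block, since its endpoints are occurrences of $y$ while a block consists of an $x$ and an $x'$; thus such reversals treat each block as an indivisible unit, and deleting $x'$ commutes with them.

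These facts give the two directions by a projection/lifting correspondence. For the forward direction, take a sequence of symmetric reversals from $\pi$ to $\tau$; replace each reversal on $x'$ (which occurs only in the inverted configuration) by the identical reversal on $x$, then delete $x'$ throughout. Every surviving reversal acts on a repeat still present in $\pi'$, no block is split, and deletion commutes with each reversal, so the projected sequence carries $\pi'$ to $\tau'$. For the backward direction, take a sequence from $\pi'$ to $\tau'$ and apply the very same reversals to $\pi$ (none acts on $x'$, since $x'\notin\pi'$). Since reversals preserve the glued adjacency, the copies of $x'$ are dragged along with the copies of $x$, and the resulting chromosome $\sigma$ satisfies $\sigma'=\tau'$ together with $\mathcal{A}[\sigma]=\mathcal{A}[\pi]=\mathcal{A}[\tau]$.

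The step I expect to be the main obstacle is the final identification $\sigma=\tau$: I must argue that once the non-$x'$ part $\sigma'=\tau'$ and the adjacency multiset are fixed, the positions and orientations of the two copies of $x'$ are forced, i.e.\ re-inserting $x'$ so as to realize the doubled adjacency $\langle r(x_i),l(x_{i+1})\rangle$ together with the matching adjacency on the other flank of each $x'$ can be done in only one way. Pinning down this uniqueness of re-insertion, and dually checking that deletion commutes with reversals in the direct as well as the inverted configuration, is the technical heart; everything else is bookkeeping licensed by Lemma~\ref{preserve}.
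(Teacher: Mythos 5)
Your proposal is correct and follows essentially the same route as the paper. The forward direction is identical: the paper likewise shows that whenever a reversal on $|x_{i+1}|$ is possible, the two occurrences of $|x_i|$ also have opposite signs and the reversal on $|x_i|$ produces the very same chromosome, so the sequence can be assumed never to touch $|x_{i+1}|$ and then projects after deletion. The only real divergence is in the backward direction, precisely at the step you flag as the obstacle: you apply the reversals to $\pi$, obtain $\sigma$ with $\sigma'=\tau'$ and $\mathcal{A}[\sigma]=\mathcal{A}[\tau]$, and still owe the identification $\sigma=\tau$. The paper orients the correspondence the other way, which makes that step invisible: it defines the lift of each intermediate chromosome $\pi'_{k'}$ explicitly by the substitution $x_i\mapsto[x_i,x_{i+1}]$ and $-x_i\mapsto[-x_{i+1},-x_i]$, checks that each reversal $\rho'_{k'}$ commutes with this substitution (your ``blocks move as units'' observation), and then the endpoints come for free: the lift of $\pi'$ is $\pi$ by the definition of redundancy, and the lift of $\tau'$ is $\tau$ because the doubled adjacency in $\mathcal{A}[\tau]$ forces every occurrence of $|x_{i+1}|$ in $\tau$ to sit in exactly the substitution pattern. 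That same fact closes your version in two lines: the two copies of $\langle |x_i|^a,|x_{i+1}|^b\rangle$ must be realized by distinct occurrences of $|x_i|$ and distinct occurrences of $|x_{i+1}|$ (a single occurrence cannot present the same node on both of its sides), so each occurrence of $|x_{i+1}|$ is glued to its own occurrence of $|x_i|$, on the side where that occurrence shows $|x_i|^a$, oriented so as to show $|x_{i+1}|^b$; a chromosome with this block structure is therefore determined by its projection, giving $\sigma=\tau$. So your structure is sound, and the uniqueness-of-re-insertion you called the technical heart is a short argument rather than a genuine obstacle; defining the lift as a substitution is exactly the bookkeeping device that dispenses with it.
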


\begin{proof}
Without loss of generality, we assume that $r(x_{i})=r(x_{j})=|x_{i}|^{a}$ and $l(x_{i+1})=l(x_{j+1})=|x_{i+1}|^{b}$, where $a, b\in \{h, t\}$. The proof of the other case is similar.

$(\Rightarrow)$ Assume that there is a series of symmetric reversals, $\rho_{1}, \rho_{2}, \dots, \rho_{m}$, that transforms $\pi$ into $\tau$,
to be specific, $\pi_{0}=\pi$, $\pi_{k}=\pi_{k-1}\bullet\rho_{k}$ for each $1\leq k\leq m$, and $\pi_{m}=\tau$.
Suppose that there exists a symmetric reversal $\rho_{k}$($1\leq k\leq m$), which is on $|x_{i+1}|$.
Lemma~\ref{preserve} guarantees that $\langle |x_{i}|^{a}, |x_{i+1}|^{b}\rangle$ still appears twice in $\mathcal{A}[\pi_{k-1}]$.
Because the two occurrences of $|x_{i+1}|$ have distinct sign in $\pi_{k-1}$, the two $|x_{i}|^{a}$s are located at different sides of the two $|x_{i+1}|^{b}$s respectively,
which implies that some $|x_{i}|^{a}$ is the left node of some occurrence of $|x_{i}|$ and the other $|x_{i}|^{a}$ is the right node of the other occurrence of $|x_{i}|$.
Therefore the signs of the two occurrences of $|x_{i}|$ are also distinct in $\pi_{k-1}$.
It is apparent that performing the symmetric reversal on $|x_{i}|$ would also transform $\pi_{k-1}$ into $\pi_{k}$.

$(\Leftarrow)$ Assume that there is a series of symmetric reversals, $\rho'_{1}, \rho'_{2}, \dots, \rho'_{m'}$ which transforms $\pi'$ into $\tau'$.
Let the corresponding chromosomes be $\pi'_{0}=\pi'$, $\pi'_{1}$, $\dots$, $\pi'_{m'}=\tau'$, where $\pi_{k'}=\pi_{k'-1}\bullet\rho'_{k'}$ for each $1\leq k'\leq m'$.
We can obtain $\overline{\pi'}_{k'}$ by substituting $x_{i}$ with $[x_{i}, x_{i+1}]$ and $-x_{i}$ with $[-x_{i+1}, -x_{i}]$ in $\pi'_{k'}$.
Clearly, $\overline{\pi'}_{0}=\pi$ and $\overline{\pi'}_{m'}=\tau$.
For each $1\leq k'\leq m'$, $\rho'_{k'}$ is applicable to $\overline{\pi'}_{k'-1}$, since all the elements on $\overline{\pi'}_{k'-1}$ have the same signs with those on $\pi'_{k'-1}$;
also, $\overline{\pi'}_{k'}=\overline{\pi'}_{k'-1}\bullet \rho'_{k'}$, since the two adjacencies of the form $\langle |x_{i}|^{a}, |x_{i+1}|^{b}\rangle$ can not be changed by $\rho'_{k'}$.
\end{proof}

Regarding the previous example, $\pi=[+r_0, +r_1, -r_2, +1, +r_2, -r_1,-r_0]$, where $r_1$ and $r_2$ are redundant, following the above lemma, one can obtain $\pi'=[+r_0, +r_1, +1, -r_1,-r_0]$. 
This is in fact equivalent to replacing the adjacency
$\langle +r_1,-r_2\rangle$ by $r_1$, and $\langle +r_2,-r_1\rangle$ by $-r_1$.

A chromosome $\pi$ is \emph{simple} if every adjacency in   $\mathcal{A}[\pi]$ appears only once.
Based on Lemma~\ref{norepetitive}, we can remove the two occurrences of a redundant repeat  from the chromosomes.
Thus, 
if $dp[\pi]=dp[\tau]=2$, we can always assume that both $\pi$ and $\tau$ are simple.
Consequently, there is a unique bijection between two corresponding adjacency sets $\mathcal{A}[\pi]$ and $\mathcal{A}[\tau]$. We say that any pair of identical adjacencies are matched to each other.

For each repeat $x$ with $dp[\pi, x]=dp[\tau, x]=2$, let $x_{i}$, $x_{j}$ be the two occurrences of $x$ in $\pi$, and $y_{i'}$, $y_{j'}$ be the two occurrences of $x$ in $\tau$,
there are four adjacencies associated with $x_{i}$ and $x_{j}$ in $\pi$: $\langle r(x_{i-1}), l(x_{i})\rangle$, $\langle r(x_{i}), l(x_{i+1})\rangle$, $\langle r(x_{j-1}), l(x_{j})\rangle$, $\langle r(x_{j}), l(x_{j+1})\rangle$. Similarly, there are four adjacencies associated with $y_{i'}$ and $y_{j'}$ in $\tau$.
We say that $x$ is an \emph{neighbor-consistent} repeat, if $\langle r(x_{i-1}), l(x_{i})\rangle$ and $\langle r(x_{i}), l(x_{i+1})\rangle$ are matched to two adjacencies both associated with $y_{i'}$ or both associated with $y_{j'}$. That is, the left and right neighbors of $x
_i$ are identical in both chromosomes.
Note that $\mathcal{A}[\pi]=\mathcal{A}[\tau]$ also implies that  the left and right neighbors of the other  occurrences $x_j$ are also identical in both two chromosomes if $x$ is neighbor-consistent.
If $\langle r(x_{i-1}), l(x_{i})\rangle$ and $\langle r(x_{i}), l(x_{i+1})\rangle$ are matched to two adjacencies, one of which is associated with $y_{i'}$ and the other is associated with $y_{j'}$, 
then $x$ is an \emph{neighbor-inconsistent} repeat. The genes and the repeats which appear once in $\pi$ are also defined to be neighbor-consistent. (See Figure~\ref{consistentornot} for an example.)
By definition and the fact that $\mathcal{A}[\pi]=\mathcal{A}[\tau]$, we have 
\begin{proposition}
\label{w1}
Performing a symmetric reversal on a repeat will turn the repeat  from neighbor-consistent to neighbor-inconsistent or vice versa. (See Figure~\ref{consistentornot}.)
\end{proposition}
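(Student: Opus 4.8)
The plan is to track the four adjacencies incident to the two occurrences of the reversed repeat and to show that a symmetric reversal toggles the way they are grouped by occurrence, while the target $\tau$ pins down exactly one admissible grouping. By Lemma~\ref{norepetitive} I may assume $\pi$ and $\tau$ are simple, so $\mathcal{A}[\pi]=\mathcal{A}[\tau]$ yields the identity bijection on adjacencies, which is precisely the matching used in the definition of neighbor-consistency. Fix the repeat $x$ on which $\rho(i,j)$ acts, with occurrences $x_i=-x_j$, and say (without loss of generality) $x_i=+x$. The adjacencies touching a node $x^h$ or $x^t$ are exactly $A=\langle r(x_{i-1}),l(x_i)\rangle$, $B=\langle r(x_i),l(x_{i+1})\rangle$, $C=\langle r(x_{j-1}),l(x_j)\rangle$, $D=\langle r(x_j),l(x_{j+1})\rangle$; in $\pi$ they are grouped by occurrence as $\{A,B\}$ (for $x_i$) and $\{C,D\}$ (for $x_j$), a partition I call $P_1$. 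Since the node labels contained in an adjacency are intrinsic to it, I record that $A,D$ are the two adjacencies containing $x^h$ and $B,C$ the two containing $x^t$ (from $l(x_i)=x^h,\ r(x_i)=x^t,\ l(x_j)=x^t,\ r(x_j)=x^h$).

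The key computational step is the effect of $\rho(i,j)$ on this grouping. Running the bookkeeping from the proof of Lemma~\ref{preserve} — using $l(-x_j)=l(x_i)$, $r(-x_i)=r(x_j)$, and the convention $\langle a,b\rangle=\langle -b,-a\rangle$ — I would verify that in $\pi'=\pi\bullet\rho(i,j)$ the occurrence now at position $i$ carries $\{A,C\}$ and the one at position $j$ carries $\{B,D\}$. Thus the reversal sends the occurrence-grouping from $P_1$ to the partition $P_2=\{\{A,C\},\{B,D\}\}$.

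Now I exploit the rigidity of admissible groupings. In any chromosome, each occurrence of $x$ contributes one $x^h$-end and one $x^t$-end, so the two adjacencies of a single occurrence consist of exactly one $x^h$-adjacency and one $x^t$-adjacency. Because $\{A,D\}$ carry $x^h$ and $\{B,C\}$ carry $x^t$, the only two partitions of $\{A,B,C,D\}$ into valid occurrence-pairs are $P_1$ and $P_2$. Chromosome $\pi$ realizes $P_1$, chromosome $\pi'$ realizes $P_2$, and $\tau$ must realize exactly one of them. By definition $x$ is neighbor-consistent in a chromosome (relative to $\tau$) iff its grouping coincides with $\tau$'s grouping; hence $x$ is neighbor-consistent in $\pi$ iff $\tau$ realizes $P_1$, and neighbor-consistent in $\pi'$ iff $\tau$ realizes $P_2$. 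Since these alternatives are mutually exclusive and exhaustive, exactly one holds, which is the asserted toggle.

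The main obstacle is the orientation bookkeeping in the second step: establishing that the reversal carries $P_1$ to $P_2$ requires careful use of the $\langle a,b\rangle=\langle -b,-a\rangle$ convention and of how $l$ and $r$ swap under negation, and it is easy to mis-assign a node label. I would guard against this with the explicit instance $\pi=[+r_0,+x,+1,+2,-x,-r_0]$, where one checks by hand that $\rho$ produces $[+r_0,+x,-2,-1,-x,-r_0]$ and that the incident adjacencies regroup from $\{A,B\},\{C,D\}$ into $\{A,C\},\{B,D\}$ exactly as claimed. A minor remaining point is the degenerate situation where the two occurrences are adjacent, so some of $A,B,C,D$ coincide; there the symmetric reversal is the identity, no genuine reversal is performed, and the statement is vacuous.
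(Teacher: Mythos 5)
Your proof is correct and takes essentially the same route as the paper: the paper asserts this proposition directly ``by definition and the fact that $\mathcal{A}[\pi]=\mathcal{A}[\tau]$,'' with the cited figure depicting exactly the adjacency-regrouping you formalize, namely that the reversal swaps the occurrence-grouping of the four incident adjacencies from $\{A,B\},\{C,D\}$ to $\{A,C\},\{B,D\}$ while $\tau$ fixes one of the only two valid groupings. Your explicit $x^h$/$x^t$ bookkeeping and your handling of the degenerate adjacent-occurrence case are more careful than anything the paper records, but the underlying idea is identical.
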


\begin{figure}[htbp]
  \centering
  \includegraphics[width=1.0\textwidth]{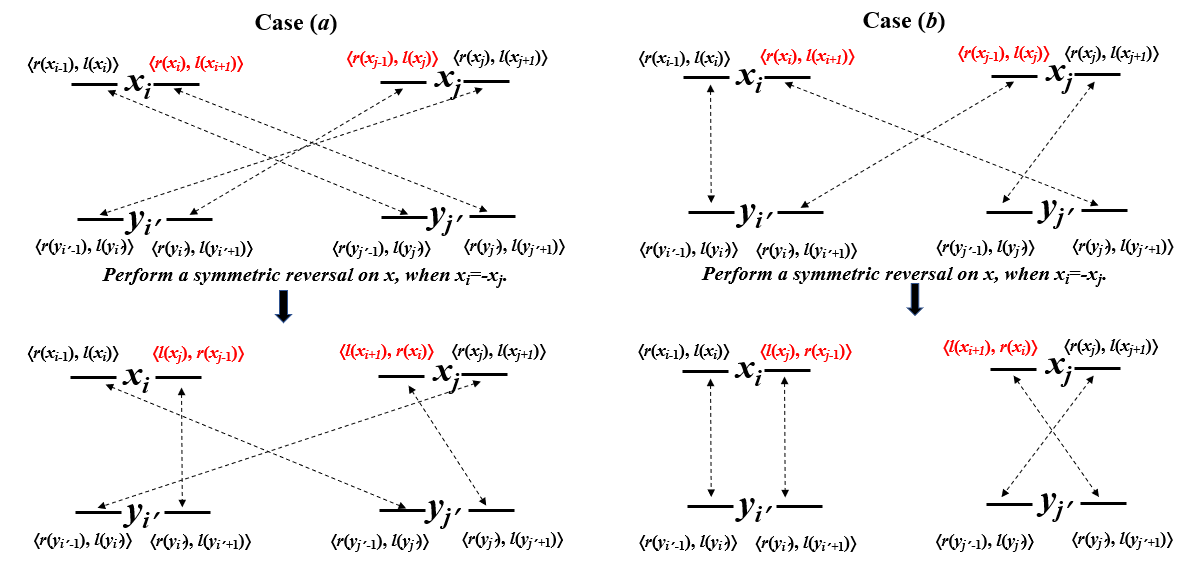}
  \caption{$x_{i}$, $x_{j}$ are the two occurrences of $x$ in $\pi$ with $x_{i}=-x_{j}$, and $y_{i'}$, $y_{j'}$ be the two occurrences of $x$ in $\tau$.
   Case ($a$): $x$ is neighbor-consistent, and will turn to neighbor-inconsistent by a reversal on itself. 
   Case ($b$): $x$ is neighbor-inconsistent, and will turn to neighbor-consistent by a reversal on itself. }
  \label{consistentornot}
\end{figure} 

\begin{theorem}
\label{equcnd}
Given two simple related chromosomes $\pi^*$ and $\tau$ with $dp[\pi^*]=dp[\tau]=2$,
$\pi^*=\tau$ if and only if $\mathcal{A}[\pi^*]=\mathcal{A}[\tau]$ and every repeat is neighbor-consistent. 
\end{theorem}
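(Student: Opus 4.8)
The plan is to prove the forward direction by inspection and to reserve all the work for the converse, which I will establish by a left-to-right reconstruction argument that reads $\pi^*$ and $\tau$ in lockstep. The forward direction is immediate: if $\pi^* = \tau$ then the two chromosomes carry literally the same adjacencies, so $\mathcal{A}[\pi^*] = \mathcal{A}[\tau]$, and since every occurrence of every repeat has exactly the same left and right neighbors in $\pi^*$ and in $\tau$, every repeat is neighbor-consistent. Thus the content of the theorem lies entirely in showing that $\mathcal{A}[\pi^*] = \mathcal{A}[\tau]$ together with neighbor-consistency of every repeat forces $\pi^* = \tau$.

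For the converse I would write both chromosomes in flanking-normalized form, $\pi^* = [+r_0, x_1, \ldots, x_n, -r_0]$ and $\tau = [+r_0, y_1, \ldots, y_n, -r_0]$, and prove by induction on $i$ that the signed prefixes agree, $x_0 x_1 \cdots x_i = y_0 y_1 \cdots y_i$. Because both chromosomes are simple, $\mathcal{A}[\pi^*] = \mathcal{A}[\tau]$ is a genuine set equality and the matching between the two adjacency sets is the identity; this is exactly what lets the induction transport information between the two chromosomes. In the inductive step I set $u = r(x_i) = r(y_i)$ (legal since $x_i = y_i$) and note that the next symbol is completely pinned down by the departing adjacency $\langle u, l(x_{i+1})\rangle$: once I force $l(x_{i+1}) = l(y_{i+1})$, I get $x_{i+1} = y_{i+1}$ because a left-node label determines both the underlying symbol and its orientation.

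The step splits into two cases. If $|x_i|$ has duplication $1$ (a gene, or $r_0$-type boundary symbol read as such), then $u$ occurs in only one adjacency of $\mathcal{A}[\pi^*] = \mathcal{A}[\tau]$, so the departing adjacency is forced and there is nothing to prove. The substance is the case where $|x_i|$ is a repeat $s$ of duplication $2$: now $u$ lies in \emph{two} adjacencies, and I must exclude the possibility that $\pi^*$ departs along one while $\tau$ departs along the other. Here I invoke neighbor-consistency of $s$ together with the induction hypothesis. Since $x_{i-1} = y_{i-1}$ and $x_i = y_i$, the left adjacencies coincide, $L := \langle r(x_{i-1}), l(x_i)\rangle = \langle r(y_{i-1}), l(y_i)\rangle$, and by simplicity $L$ occurs once and is associated with the occurrence $y_i$ of $s$ in $\tau$. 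Neighbor-consistency says the flanking pair $\{L, \langle r(x_i), l(x_{i+1})\rangle\}$ of $x_i$ in $\pi^*$ is matched to the flanking pair of a single occurrence of $s$ in $\tau$; as $L$ already pins that occurrence to $y_i$, the pairs are $\{L, \langle r(x_i), l(x_{i+1})\rangle\}$ and $\{L, \langle r(y_i), l(y_{i+1})\rangle\}$, and cancelling the common adjacency $L$ (the two flanking adjacencies of an occurrence are distinct by simplicity) forces $\langle r(x_i), l(x_{i+1})\rangle = \langle r(y_i), l(y_{i+1})\rangle$, hence $x_{i+1} = y_{i+1}$.

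The main obstacle is precisely this two-way ambiguity at a duplicated repeat: as unordered node pairs the two adjacencies incident to $u$ are a priori interchangeable, and without neighbor-consistency the reconstruction genuinely need not be unique, so the whole weight of the proof is the observation above that neighbor-consistency is exactly the hypothesis eliminating this ambiguity. Several boundary situations will need explicit care. The base case $x_0 = y_0 = +r_0$ is fixed by the flanking convention, and I would argue that the reading direction is forced by neighbor-consistency of $r_0$: a spurious whole-chromosome reversal is a symmetric reversal on $r_0$, which by Proposition~\ref{w1} flips $r_0$ from neighbor-consistent to neighbor-inconsistent and is therefore excluded, so we may assume $\pi^*$ and $\tau$ are read in the same direction with $x_1 = y_1$. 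The only other delicate configuration is two consecutive occurrences of the same repeat, where $L$ is associated with two $s$-occurrences at once; there I would use the already-committed prefix, rather than $L$ alone, to identify which occurrence of $\tau$ neighbor-consistency refers to. Apart from these, the induction closes and yields $\pi^* = \tau$.
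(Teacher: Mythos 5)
Your proof is correct and takes essentially the same route as the paper's: a left-to-right induction on the prefix, in which the inductive hypothesis pins the left flanking adjacency $\langle r(x_{i-1}), l(x_i)\rangle$ to the occurrence $y_i$, and neighbor-consistency (the paper's ``evenness'' of $|x_i|$) then forces $\langle r(x_i), l(x_{i+1})\rangle$ to be matched to $\langle r(y_i), l(y_{i+1})\rangle$, giving $x_{i+1}=y_{i+1}$. The additional care you give to the reading direction at $r_0$ and to consecutive occurrences of the same repeat addresses edge cases the paper's proof glosses over, but the core argument is identical.
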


\begin{proof}
Assume that $\pi^*=[x_{0}, x_{1}, \dots, x_{n}, x_{n+1}]$ and $\tau=[y_{0}, y_{1}, \dots, y_{n}, y_{n+1}]$, where $x_{0}=y_{0}=+r_0$ and $x_{n+1}=y_{n+1}=-r_0$.

The sufficiency part is surely true, since we have $x_{i}=y_{i}$ for $0\leq i\leq n+1$.

Now we prove the necessity inductively.
Our inductive hypothesis is that $x_{i}=y_{i}$ for $0\leq i\leq n+1$. Initially, we have $x_{0}=y_{0}=+r_0$ and $x_1=y_1$ since $\langle r_0^t,l(x_1)\rangle = \langle r_0^t,l(y_1)\rangle$.
For the inductive step, consider $x_{i+1}$. Because $\mathcal{A}[\pi^*]=\mathcal{A}[\tau]$, the adjacency $\langle r(x_{i}), l(x_{i+1})\rangle$ appears in both $\mathcal{A}[\pi^*]$,
and $\mathcal{A}[\tau]$.
Since $|x_{i}|$ is even, the two adjacencies $\langle r(x_{i-1}), l(x_{i})\rangle$ and $\langle r(x_{i}), l(x_{i+1})\rangle$ must be matched to two adjacencies
which are associated with a single occurrence of $|x_{i}|$ in $\tau$.
From the inductive hypothesis, $x_{i-1}=y_{i-1}$ and $x_{i}=y_{i}$, $\langle r(x_{i-1}), l(x_{i})\rangle$ has been matched to $\langle r(y_{i-1}), l(y_{i})\rangle$,
Thus, $\langle r(x_{i}), l(x_{i+1})\rangle$ must be matched to $\langle r(y_{i}), l(y_{i+1})\rangle$,
together with $x_{i}=y_{i}$, we have, $l(x_{i+1})=l(y_{i+1})$ and $x_{i+1}=y_{i+1}$.
\end{proof}

Based on proposition \ref{w1} and  Theorem~\ref{equcnd}, to transform $\pi$ into $\tau$, it is sufficient to perform an odd number (at least 1) of symmetric reversals on each neighbor-inconsistent repeat,
and an even number (might be 0) of symmetric reversals on each neighbor-consistent repeat.
Hereafter, we also refer an neighbor-consistent (resp. neighbor-inconsistent) repeat as an {\it even} (resp. {\it odd}) repeat.


The main difficulty to find a sequence of symmetric reversals between
$\pi$ and $\tau$ is to choose a "correct" symmetric reversal at a time.
Note that, for a pair of occurrences $(x_i, ,x_j)$ of a repeat $x$, the orientations may be the same at present and after some reversals, the orientations of $x_i$ and $x_j$ may differ. We can only perform a reversal on a pair of occurrences of a repeat with different orientations. Thus, it is crucial to choose a "correct" symmetric reversal at the right time. In the following, We will use "intersection" graph to handle this. 

Suppose that we are given two simple related chromosomes $\pi$ and $\tau$ with $dp[\pi]=dp[\tau]=2$ and $\mathcal{A}[\pi]=\mathcal{A}[\tau]$. In this case, each repeat in the chromosomes represent an interval indicated by the two occurrences of the repeat. 
Thus, we can construct an  \emph{intersection graph} $IG(\pi, \tau)=(V[\pi], E[\pi])$.
For each repeat $x$ with $dp[\pi,x]=2$, construct a vertex $x\in V_{\pi}$, and set its weight, $\omega(x)=2$ if $x$ is even,
and $\omega(x)=1$ if $x$ is odd; set the color of $x$ black if the signs of the two occurrences of $x$ in $\pi$ are different, and white otherwise.
Construct an edge between two vertices $x$ and $y$ if and only if the occurrences of $x$ and $y$ appear alternatively in $\pi$,
i.e., let $x_{i}$ and $x_{j}$ ($i<j$) be the two occurrences of $x$, and $x_{k}$ and $x_{l}$ ($k<l$) be the two occurrences of $y$ in $\pi$, there will be an edge between the vertices $x$ and $y$ if and only if $i<k<j<l$ or $k<i<l<j$.
There are three  types of vertices in $V[{\pi}]$: black vertices (denoted as $V_{b}[\pi]$), white vertices of weight 1 (denoted as $V_{w}^{1}[\pi]$) and white vertices of weight 2 (denoted as $V_{w}^{2}[\pi]$). Thus, $V[{\pi}]=V_{b}[\pi]\cup V_{w}^{1}[\pi]\cup V_{w}^{2}[\pi]$.
In fact, the intersection graph is a circle graph while ignoring the weight and color of all the vertices. 
\begin{lemma}
\label{cannot1white}
A single white vertex of weight 1 cannot be a connected component in $IG(\pi, \tau)$ .
\end{lemma}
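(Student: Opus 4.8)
The plan is to prove the contrapositive in a sharper form: if $x$ is a white vertex (both occurrences of the repeat $x$ have the same sign in $\pi$) that is isolated in $IG(\pi,\tau)$, then $x$ must be neighbor-consistent, i.e. $\omega(x)=2$. This rules out an isolated white vertex of weight $1$. Since reading $\pi$ and $\tau$ in reversed-and-negated form preserves $\mathcal{A}[\pi]$, the colors, and the weights, I may assume at the outset that both occurrences of $x$ in $\pi$ are $+x$; the case of two $-x$ is symmetric. Write $\pi=[P,\,x_i,\,S,\,x_j,\,Q]$, where $x_i,x_j$ are the two occurrences of $x$ ($i<j$) and $S$ is the (possibly empty) segment strictly between them, so that $P$ contains $+r_0$ and $Q$ contains $-r_0$.

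First I would translate ``isolated'' into a structural statement about $S$. An edge of $IG$ incident to $x$ exists precisely when some other repeat has its interval crossing $[i,j]$; hence isolation means that every repeat having an occurrence in $S$ has \emph{both} of its occurrences in $S$. Consequently $S$ is self-contained, and since $S$ meets the rest of $\pi$ only at $x_i$ and $x_j$, the only adjacencies joining the node-ends of $S$ to the outside are the two boundary adjacencies $\langle x^t, l(x_{i+1})\rangle$ and $\langle r(x_{j-1}), x^h\rangle$ (here $r(x_i)=x^t$ and $l(x_j)=x^h$ because $x_i=x_j=+x$). Viewing both chromosomes as alternating trails through the common multigraph on node-ends determined by the gene-edges and by $\mathcal{A}[\pi]=\mathcal{A}[\tau]$, the node-ends of $S$ are therefore separated from all remaining node-ends by a cut consisting of exactly these two edges.

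The key step uses that $\mathcal{A}[\tau]=\mathcal{A}[\pi]$ contains the same two boundary adjacencies and the same internal adjacencies of $S$, while the endpoints of $\tau$ (the flanking $\pm r_0$) lie outside $S$. An Eulerian-trail argument across the two-edge cut then forces $\tau$ to traverse the elements of $S$ as a single contiguous block, glued on one side to $x^t$ through $l(x_{i+1})$ and on the other side to $x^h$ through $r(x_{j-1})$. From this I read off that the element of $\tau$ immediately preceding the block is an occurrence of $x$ whose right node is $x^t$ (hence $+x$), and the element immediately following it is a \emph{distinct} occurrence whose left node is $x^h$ (hence $+x$). Matching the four $x$-adjacencies of $\tau$ against those of $\pi$ then pins down the remaining two, and shows that the occurrence preceding the block carries exactly the pair $\langle r(x_{i-1}), l(x_i)\rangle,\ \langle r(x_i), l(x_{i+1})\rangle$, namely the left and right adjacencies of $x_i$. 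By definition this makes $x$ neighbor-consistent, so $\omega(x)=2$, contradicting $\omega(x)=1$.

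I expect two points to need care. The main obstacle is justifying rigorously that $S$ remains contiguous in $\tau$: this is exactly where the two-edge cut, together with the fact that $\tau$'s endpoints lie outside $S$, forces the block to be entered and left exactly once. The second point is the degenerate case $S=\emptyset$, where the two occurrences of $x$ are adjacent and share the single adjacency $\langle x^t, x^h\rangle$; here I would argue directly that this adjacency forces the two occurrences of $x$ in $\tau$ to be adjacent and equally signed, and a short check on the matched $x$-adjacencies again yields neighbor-consistency. The reversal-and-negation reduction to the $+x,+x$ case is a minor preliminary to record at the start.
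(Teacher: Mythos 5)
Your proposal is correct, and it reaches the conclusion by a route that is recognizably related to, but organized quite differently from, the paper's. The paper assumes from the start that the isolated white vertex $x$ is odd, uses that to fix (w.l.o.g.) a matching in which the two ``inner'' adjacencies $\langle r(x_i),l(x_{i+1})\rangle$ and $\langle r(x_{j-1}),l(x_j)\rangle$ are both matched to adjacencies of a single occurrence $y_k$ in $\tau$, and then walks along $\tau$ starting at $y_k$: by self-containment of the segment $S$ (which both proofs extract from isolation in exactly the same way), every element reached stays inside $S$ until the walk is forced to hit the other occurrence of $x$, contradicting the chosen matching. You instead prove the sharper statement that an isolated white vertex must be even: your two-edge-cut count (each maximal run of $S$-elements in $\tau$ consumes two crossing adjacencies, the multiset $\mathcal{A}[\tau]=\mathcal{A}[\pi]$ contains exactly two such adjacencies by simplicity, and $\tau$'s endpoints lie outside $S$) shows $S$ occupies a single contiguous block of $\tau$ flanked by the two occurrences of $x$, after which the forced matching of the four $x$-adjacencies yields neighbor-consistency, contradicting weight $1$. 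This is essentially the same contiguity fact the paper's recursion establishes, but your packaging buys a cleaner case structure (no w.l.o.g.\ on how the odd matching splits), an explicit treatment of the degenerate case $S=\emptyset$ that the paper glosses over, and a stronger structural byproduct, at the cost of needing the cut-counting made rigorous — which, as you anticipate, goes through using simplicity of $\pi$ and $\tau$. One small slip to record: the block may be traversed in either orientation in $\tau$, so the two flanking occurrences could both be $-x$ rather than both $+x$; this is harmless, since reading $\tau$ in reversed-and-negated form (the same normalization you apply to $\pi$) reduces one case to the other, and the adjacency-matching step is symmetric in the two cases.
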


\begin{proof}
We prove it by contradiction. Assume that $x$ is a white vertex of weight 1, which forms a connected component of $IG(\pi, \tau)$.
Let the two occurrences of $x$ be $x_{i}$ and $x_{j}$ ($i<j$) in $\pi$, and $y_{k}$ and $y_{l}$ in $\tau$.
Since $x$ is odd, w.l.o.g, assume that $\langle r(x_{i}), l(x_{i+1})\rangle$ and $\langle r(x_{j-1}), l(x_{j})\rangle$ are matched to two adjacencies both associated with $y_{k}$.
Because $x$ is an isolated vertex in $IG(\pi, \tau)$, all the other occurrences of $|x_{i+1}|, \dots, |x_{j-1}|$ must also locate in between $x_{i}$ and $x_{j}$ in $\pi$.

Note that each adjacency is unique in $\mathcal{A}[\pi]$, as well as in $\mathcal{A}[\tau]$.
In case that $y_{k+1}$ is an occurrence of $|x_{i+1}|$, the adjacency $\langle r(y_{k+1}), l(y_{k+2})\rangle$ must be matched to an adjacency
located in between $x_{i}$ and $x_{j}$ in $\pi$,
thus, an occurrence of $|y_{k+2}|$ also locates in between $x_{i}$ and $x_{j}$ in $\pi$, so does the other occurrence of $|y_{k+2}|$ (if exist),
hence, all the adjacencies associate with the occurrences of $|y_{k+2}|$ locate in between $x_{i}$ and $x_{j}$ in $\pi$, which implies that $y_{k+3}$ exists.
The recursion can not terminate until there is some $y_{k+t}$ which is an occurrence of $|x_{j}|$, also the adjacency $\langle r(y_{k+t}), l(y_{k+t+1})\rangle=\langle r(x_{j-1}), l(x_{j})\rangle$. It is a contradiction since $\langle r(y_{k-1}), l(y_{k})\rangle=\langle r(x_{i-1}), l(x_{i})\rangle$.

The argument when $y_{k+1}$ is an occurrence of $|x_{j-1}|$ is similar.
\end{proof}


For each vertex $x$ in $IG(\pi, \tau)$, let $N(x)$ denote the set of  vertices incident to  $x$.
For a black vertex, say $x$, in $IG(\pi, \tau)$, performing a symmetric reversal of $x$ in $\pi$, yields $\pi'$, where the intersection graph $IG(\pi')=(V[\pi'], E[\pi'])$ can be derived from $IG(\pi, \tau)$ following the three rules:
\begin{itemize}
\item rule-I: for each vertex $v\in N(x)$ in $IG(\pi, \tau)$, change its color from black to white, and vice versa.
\item rule-II: for each pair of vertices $u, v\in N(x)$ of $IG(\pi, \tau)$, if $(u, v)\in E[\pi]$, then $E[\pi']=E[\pi]-\{(u, v)\}$; and if $(u, v)\notin E[\pi]$, then
$E[\pi']=E[\pi]\cup\{(u, v)\}$.
\item rule-III: subtract the weight of $x$ by one, if $\omega(x)>0$, then $V[\pi']=V[\pi]$; and if $\omega(x)=0$, then $V[\pi']=V[\pi]-\{x\}$.
\end{itemize}

\begin{figure}[htbp]
  \centering
  \includegraphics[width=0.8\textwidth]{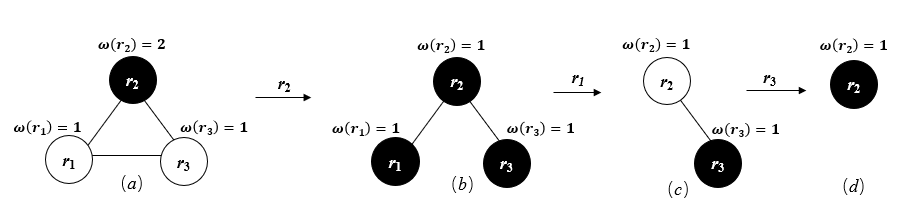}
  \caption{$\pi =[+r_0, +r_1, +1, +r_2, +r_3, +r_1, -r_2, -2, +r_3, -r_0]$, and $\tau=[+r_0, +r_1, -r_2, -2, +r_3, +r_1, +1, +r_2, +r_3, -r_0]$.
  $\mathcal{A}[\pi]=\mathcal{A}[\tau]=$$\{\langle r_0^{t}, r_1^{h}\rangle$, $\langle r_1^{t}, 1^h\rangle$, $\langle 1^t, r_2^{h}\rangle$, $\langle r_2^{t}, r_3^{h}\rangle$, $\langle r_3^{t}, r_1^{h}\rangle$, $\langle r_1^{t}, r_2^{t}\rangle$, $\langle r_2^{h}, 2^t\rangle$,  $\langle 2^{h}, r_3^{h}\rangle$, $\langle r_3^{t}, r_0^{t}\rangle\}$. The repeats $r_1$ and $r_3$ are odd, while the repeat $r_2$ is even.  ($a$) The intersection graph $IG(\pi, \tau)$, performing the symmetric reversal on repeat $r_2$ will transform $\pi$ into $\pi'=$$[+r_0$, $+r_1$, $+1$, $+r_2$, $-r_1$, $-r_3$, $-r_2$, $-2$, $+r_3$, $-r_0]$, ($b$) The intersection graph $IG(\pi',\tau)$. ($c$) The intersection graph after performing the symmetric reversal on the repeat $r_1$. ($d$) The intersection graph after performing the symmetric reversal on the repeat $r_3$.}
  \label{fig:1}
\end{figure}

If $x$ is a black vertex in $IG(\pi, \tau)$ and $\omega(x)=1$, then performing the symmetric reversal of $x$ in $\pi$ yields $\pi'$. Let $C_{1}, C_{2}, \dots, C_{m}$ be the connected components introduced by the deletion of $x$ in $IG(\pi',\tau)$, we go through some properties of performing this symmetric reversal. 
\begin{lemma}
\label{hasneighbor}
In each $C_{i}$ ($1\leq i\leq m$), there is at least one vertex $z_{i}$ such that $z_{i}\in N(x)$ in $IG(\pi, \tau)$.
\end{lemma}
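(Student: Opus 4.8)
The plan is to argue by contradiction, exploiting the fact that the three rewriting rules touch the edge set only locally around $x$. First I would record precisely how $IG(\pi',\tau)$ arises from $IG(\pi,\tau)$: since $\omega(x)=1$, rule-III removes $x$ from the vertex set; rule-I merely recolours the vertices of $N(x)$ and is therefore irrelevant to connectivity; and rule-II complements the adjacencies inside $N(x)$, i.e.\ it toggles an edge $(u,v)$ \emph{only} when both $u,v\in N(x)$. The decisive structural observation is thus that every edge with at least one endpoint outside $N(x)$ is identical in $IG(\pi,\tau)$ and $IG(\pi',\tau)$. By definition the components $C_{1},\dots,C_{m}$ are the pieces into which the connected component of $x$ in $IG(\pi,\tau)$ breaks once $x$ is deleted and the $N(x)$-edges are toggled.

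Next I would fix an arbitrary $C_{i}$ and assume, for contradiction, that $C_{i}\cap N(x)=\emptyset$. Under this assumption no vertex of $C_{i}$ lies in $N(x)$, so by the observation above none of the edges incident to $C_{i}$ — neither the edges internal to $C_{i}$ nor the edges joining $C_{i}$ to the rest of the vertex set — is altered by rule-II. Consequently $C_{i}$ has the same internal connectivity and the same (empty) set of outgoing edges in both graphs: it is connected in $IG(\pi,\tau)$, and since it is a connected component of $IG(\pi',\tau)$ it has no edge leaving it there, hence none in $IG(\pi,\tau)$ either. Because $C_{i}\cap N(x)=\emptyset$, there is in addition no edge between $C_{i}$ and $x$ in $IG(\pi,\tau)$. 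Collecting these facts, $C_{i}$ is already a full connected component of $IG(\pi,\tau)$ that is disjoint from $x$.

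This contradicts the fact that $C_{i}$ was obtained by splitting the connected component containing $x$ in $IG(\pi,\tau)$: a subset of that component cannot at the same time be a separate component disjoint from $x$. Hence $C_{i}\cap N(x)\neq\emptyset$, and any vertex $z_{i}\in C_{i}\cap N(x)$ is the required neighbour. I expect the only genuinely delicate point to be the bookkeeping in the middle step — checking that when $C_{i}$ avoids $N(x)$ rule-II changes no edge touching $C_{i}$, including edges running from $C_{i}$ \emph{into} $N(x)$ (these survive because toggling demands \emph{both} endpoints in $N(x)$). Once this invariant is pinned down the contradiction is immediate; it is worth remarking that the argument never uses the colour of $x$ beyond the fact that being black is exactly what makes the reversal on $x$ admissible, and that it does not require Lemma~\ref{cannot1white}.
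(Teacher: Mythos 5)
Your proof is correct and takes essentially the same route as the paper's: the paper's (much terser) argument simply notes that $C_{1},\dots,C_{m}$ together with $x$ lie in a single connected component just prior to deleting $x$ and become $m$ separate components immediately after, which forces each $C_{i}$ to contain a neighbour of $x$. Your contrapositive phrasing, built on the observation that rule-II toggles an edge only when both endpoints lie in $N(x)$ (so all edges incident to a component avoiding $N(x)$ survive unchanged), just makes explicit the edge-locality facts that the paper's one-liner leaves implicit.
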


\begin{proof}
Consider the scenario of $IG(\pi',\tau)$ just prior to deleting $x$, all the connected components $C_{1}, C_{2}, \dots, C_{m}$ are in a single connected component, which also contains $x$. But immediately after deleting $x$, they become $m$ separate connected components.
\end{proof}

\begin{lemma}
\label{keepcolor}
Let $x'$ be a black vertex, $\omega(x)=\omega(x')=1$, and $x'\in N(x)$ in $IG(\pi, \tau)$. After performing the symmetric reversal of $x$ in $\pi$, let $y$ be a vertex in the connected component $C_{i}$, and $x'$ is in the connected component $C_{j}$, $i\neq j$. Let $\pi''$ be the resulting chromosome after performing the symmetric reversal of $x'$ in $\pi$, then the color of $y$ is the same in $IG(\pi', \tau)$ and $IG(\pi'', \tau)$.
\end{lemma}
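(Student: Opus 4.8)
The plan is to translate the statement about colors into a statement about neighborhoods in the \emph{original} graph $IG(\pi,\tau)$, and then read off the needed neighborhood equality from the separation hypothesis. First I would record how a single reversal on a black weight-$1$ vertex affects the color of a fixed vertex $y$: by rule-I, reversing $x$ flips the color of $y$ precisely when $y\in N(x)$ in $IG(\pi,\tau)$, and likewise reversing $x'$ flips the color of $y$ precisely when $y\in N(x')$. Since the color of $y$ in $IG(\pi,\tau)$ is the common starting point for both $\pi'$ and $\pi''$, the colors of $y$ in $IG(\pi',\tau)$ and in $IG(\pi'',\tau)$ agree if and only if the two flip-conditions agree, i.e.\ if and only if $y\in N(x)\Leftrightarrow y\in N(x')$. (Here $y\neq x'$ because $i\neq j$, and $y\neq x$ because $x$ has already been deleted from $IG(\pi',\tau)$, so the color of $y$ is well defined in all three graphs.) Thus the whole lemma reduces to proving this single biconditional.

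Next I would extract that biconditional from the separation hypothesis. Because $y$ lies in $C_i$ and $x'$ lies in $C_j$ with $i\neq j$, the vertices $y$ and $x'$ belong to distinct connected components of $IG(\pi',\tau)$, so $(y,x')\notin E[\pi']$. On the other hand, rule-II describes exactly how the edge $(y,x')$ is affected by reversing $x$: it is toggled if and only if both $y$ and $x'$ lie in $N(x)$. Since $x'\in N(x)$ is given, this toggling happens if and only if $y\in N(x)$. I would then combine these two facts by a short case analysis.

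If $y\in N(x)$, the edge $(y,x')$ is toggled by the reversal, so from $(y,x')\notin E[\pi']$ we conclude $(y,x')\in E[\pi]$, that is $y\in N(x')$. If instead $y\notin N(x)$, the edge $(y,x')$ is left unchanged by the reversal, so $(y,x')\notin E[\pi']$ forces $(y,x')\notin E[\pi]$, that is $y\notin N(x')$. In either case $y\in N(x)\Leftrightarrow y\in N(x')$, which is exactly the reduced statement, and the lemma follows.

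I do not expect a genuine obstacle here; the only thing to be careful about is the bookkeeping — making sure the color and edge modifications are read off from the \emph{original} graph $IG(\pi,\tau)$, the common ancestor of $\pi'$ and $\pi''$, rather than from an already-modified graph. The hypotheses $\omega(x)=\omega(x')=1$ and $x'\in N(x)$ are used only to guarantee that $x'$ survives as a vertex of $IG(\pi',\tau)$ (so that ``$x'\in C_j$'' is meaningful, since rule-III lowers only the weight of $x$) and that $x'$ participates in the toggling test of rule-II. The separation $i\neq j$ enters in exactly one place, to assert $(y,x')\notin E[\pi']$.
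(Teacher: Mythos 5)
Your proposal is correct and follows essentially the same route as the paper's proof: the same case split on whether $y\in N(x)$, the same use of the separation hypothesis plus rule-II to pin down whether $(y,x')\in E[\pi]$, and the same conclusion via rule-I that both reversals affect the color of $y$ identically. If anything, you make explicit a step the paper leaves implicit (deducing $(y,x')\in E[\pi]$ from $(y,x')\notin E[\pi']$ and the toggling rule), so your write-up is a slightly more careful rendering of the same argument.
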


\begin{proof}
We conduct the proof by considering the following two cases: (1) $y\in N(x)$, (2) $y\notin N(x)$ in $IG(\pi, \tau)$.
In case (1), $(y, x')\in E[\pi]$, then the color of $y$ would be changed by performing the symmetric reversal on either $x$ or $x'$ in $\pi$.
In case (2), $(y, x')\notin E[\pi]$, then the color of $y$ would not be changed by performing the symmetric reversal on either $x$ or $x'$ in $\pi$.
\end{proof}

\begin{lemma}
\label{keepedge}
Let $x'$ be a black vertex, $\omega(x)=\omega(x')=1$, and $x'\in N(x)$ in $IG(\pi, \tau)$. After performing the symmetric reversal of $x$ in $\pi$, let $y, z$ be two vertices in the connected component $C_{i}$, and $x'$ is in the connected component $C_{j}$, $i\neq j$. Let $\pi''$ be the resulting chromosome after performing the symmetric reversal of $x'$ in $\pi$. If $(y, z)\in E[\pi']$, then $(y, z)\in E[\pi'']$.
\end{lemma}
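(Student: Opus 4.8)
The plan is to reduce everything to the edge-toggling operation (rule-II) read off the \emph{common} starting graph $IG(\pi,\tau)$, and then to exploit the separation of $C_i$ from $C_j$. Throughout, let $N(\cdot)$ denote neighborhoods in $IG(\pi,\tau)$. Since both $\pi'$ and $\pi''$ arise from the same chromosome $\pi$ by a single symmetric reversal (on $x$ and on $x'$ respectively), rule-II describes their edge sets relative to $E[\pi]$: for any pair of vertices distinct from the reversed vertex, the pair is an edge of the new graph exactly when its membership in $E[\pi]$ is flipped precisely when both endpoints lie in the neighborhood of the reversed vertex. Concretely, $(u,v)\in E[\pi']$ iff $(u,v)\in E[\pi]$ is flipped precisely when $u,v\in N(x)$, and $(u,v)\in E[\pi'']$ iff it is flipped precisely when $u,v\in N(x')$. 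Note here that $x'$ survives the reversal on $x$, since only the weight-$1$ vertex $x$ is deleted, so $x'$ is a genuine vertex of $IG(\pi',\tau)$.

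Applying this to the pair $(y,z)$, which avoids both $x$ and $x'$, the statements ``$(y,z)\in E[\pi']$'' and ``$(y,z)\in E[\pi'']$'' differ from ``$(y,z)\in E[\pi]$'' by the Boolean conditions ``$y,z\in N(x)$'' and ``$y,z\in N(x')$'' respectively. Hence it suffices to prove that these two conditions coincide, which will follow once I show $y\in N(x)\Leftrightarrow y\in N(x')$ and $z\in N(x)\Leftrightarrow z\in N(x')$. In fact this yields the stronger equivalence $(y,z)\in E[\pi']\Leftrightarrow(y,z)\in E[\pi'']$, which is more than the claimed implication.

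The crux is to extract these neighborhood equalities from the hypothesis that $y,z\in C_i$ while $x'\in C_j$ with $i\neq j$. Since $C_i$ and $C_j$ are distinct connected components of $IG(\pi',\tau)$, there is no edge of $E[\pi']$ joining $y$ to $x'$, nor $z$ to $x'$. Now I apply rule-II to the pair $(y,x')$ for the reversal on $x$: because $x'\in N(x)$ by hypothesis, the flipping condition ``$y,x'\in N(x)$'' reduces to ``$y\in N(x)$'', so $(y,x')\in E[\pi']$ holds iff ``$(y,x')\in E[\pi]$'' is flipped by ``$y\in N(x)$''. As $(y,x')\notin E[\pi']$, this forces ``$(y,x')\in E[\pi]$'' to agree with ``$y\in N(x)$''; but $(y,x')\in E[\pi]$ is exactly the statement $y\in N(x')$, giving $y\in N(x')\Leftrightarrow y\in N(x)$. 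The identical argument with $z$ in place of $y$ gives $z\in N(x')\Leftrightarrow z\in N(x)$, and combining the two equivalences closes the proof.

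The main obstacle I anticipate is not any single calculation but keeping the bookkeeping of rule-II consistent: both reversals must be read off the same baseline graph $IG(\pi,\tau)$, and one must verify that the non-adjacency $(y,x')\notin E[\pi']$ is meaningful, i.e. that $x'$ is not deleted by the reversal on $x$. Once the flipping identities are set up against the common baseline $E[\pi]$, the component-separation hypothesis does all the remaining work, and no properties of colors or weights are needed beyond guaranteeing that the reversals on $x$ and $x'$ are legal, both vertices being black.
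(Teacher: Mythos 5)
Your proof is correct and rests on the same two ingredients as the paper's own proof: the rule-II toggling of edges read against the common baseline $E[\pi]$, and the component-separation hypothesis forcing $(y,x'),(z,x')\notin E[\pi']$, which pins down the adjacencies of $y$ and $z$ to $x'$ in $E[\pi]$. The paper organizes exactly this as a three-case analysis (both, one, or neither of $y,z$ in $N(x)$), whereas you unify the cases into the single equivalence $y\in N(x)\Leftrightarrow y\in N(x')$ (likewise for $z$) and thereby get the slightly stronger biconditional $(y,z)\in E[\pi']\Leftrightarrow (y,z)\in E[\pi'']$ --- a cleaner write-up of the same underlying argument.
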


\begin{proof}
We conduct the proof by considering the following three cases: (1) both $y, z\in N(x)$, (2) only one, say $y$, is in $N(x)$, and (3) neither of them belongs to $N(x)$ in $IG(\pi, \tau)$.
We illustrate the three cases in Figure~\ref{fig:2}.

(1) both $y, z\in N(x)$. Then $(y, z)\notin E[\pi]$, since $y$ and $x'$ are in distinct connected component,
$(y, x')\in E[\pi]$ and $(z, x')\in E[\pi]$, thus, $(y, z)\in E[\pi'']$.

(2) $y\in N(x)$, $z\notin N(x)$. Then $(y, z)\in E[\pi]$, since $y$ and $x'$ are in distinct connected component,
$(y, x')\in E[\pi]$ and  $(z, x')\notin E[\pi]$, thus, $(y, z)\in E[\pi'']$.

(3) $y\notin N(x)$, $z\notin N(x)$. Then $(y, z)\in E_{\pi}$, since $y$ and $x'$ are in distinct connected component,
$(y, x')\notin E[\pi]$ and  $(z, x')\notin E[\pi]$, thus, $(y, z)\in E[\pi'']$.
\end{proof}

\begin{figure}[htbp]
  \centering
  \includegraphics[width=0.8\textwidth]{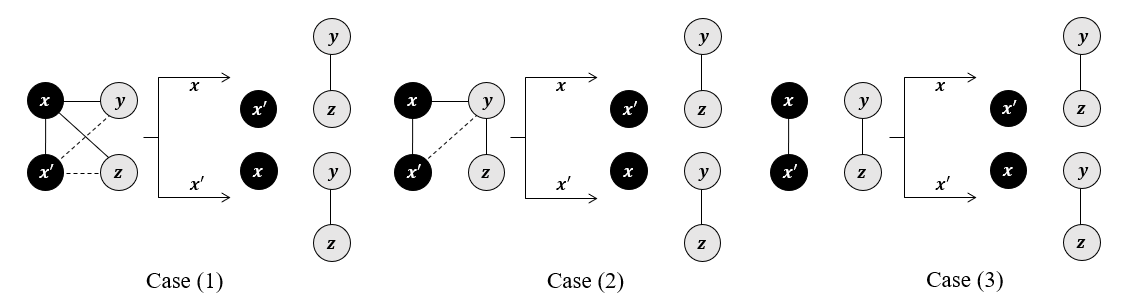}
  \caption{Three cases in the proof of Lemma~\ref{keepedge}.}
  \label{fig:2}
\end{figure}

\begin{theorem}
\label{existareversal}
If  a connected component of $IG(\pi, \tau)$ contains at least one black vertex, then there exists a symmetric reversal,
 after performing it,
 any  newly created connected component   containing a white vertex of weight 1  also  contains a black vertex.
\end{theorem}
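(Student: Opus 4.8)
The plan is to split the argument according to the weights of the black vertices in the given connected component $H$, and first to dispose of an easy case. Suppose $H$ contains a black vertex $x$ with $\omega(x)=2$. I would reverse on $x$. By rule-III its weight only drops to $1$, so $x$ is \emph{not} deleted, and the sign computation behind Lemma~\ref{preserve} shows that the two occurrences of $x$ keep their signs, so $x$ stays black. Since rule-II toggles edges only \emph{inside} $N(x)$, every former neighbor of $x$ remains adjacent to $x$; a short connectivity argument (any far vertex reaches $x$ along a path that first meets $N(x)$, using only edges rule-II leaves untouched) shows that $H$ does not split at all. Hence no new connected component is created and the single surviving component still contains the black vertex $x$, so the claim holds. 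This reduces everything to the case where every black vertex of $H$ has weight $1$.

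In the hard case every black vertex has weight $1$, so any reversal deletes the reversed vertex and may split $H$ into components $C_1,\dots,C_m$. I would run an exchange argument. Start from an arbitrary black vertex $x$ and reverse on it; if every component that contains a white weight-$1$ vertex also contains a black vertex, we are done. Otherwise some $C_i$ is \emph{bad} (all white, yet carrying a white weight-$1$ vertex). By Lemma~\ref{hasneighbor}, $C_i$ contains a vertex $z_i\in N(x)$; since $C_i$ is all white, rule-I forces $z_i$ to have been black before the reversal, and by the case assumption $z_i$ is a black weight-$1$ vertex of $H$. I then discard $x$ as a candidate and re-attempt the reversal on $z_i$.

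The engine that makes this exchange safe is the pair Lemma~\ref{keepcolor}--Lemma~\ref{keepedge}, applied with $x'=z_i$ (both black, weight $1$, $z_i\in N(x)$): every component other than the one now containing $z_i$ keeps exactly the same vertex colors and the same internal edges when we reverse on $z_i$ instead of $x$. Thus none of the previously good components carrying a black vertex can turn bad, and by Lemma~\ref{hasneighbor} any \emph{new} bad component produced by reversing $z_i$ must contain a neighbor of $z_i$ and so is confined to the region $\{x\}\cup(C_i\setminus\{z_i\})$. This is what should drive a termination measure: I would argue the bad component strictly shrinks along the sequence $x,z_i,\dots$, so after finitely many exchanges no bad component survives and the last reversal is the desired safe one. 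Lemma~\ref{cannot1white}, which applies verbatim to the graph $IG(\pi',\tau)$ obtained after any reversal, guarantees that a bad component is never a lone white weight-$1$ vertex, so it always carries the internal slack the descent exploits.

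The main obstacle is making this descent rigorous. I must pin down a strictly decreasing measure — the natural candidate is the size of the bad component, for which I would try to prove a strict nesting $C^{(k+1)}\subsetneq C^{(k)}$. Two delicate points remain: (i) controlling the reversed-out vertex $x^{(k)}$, which rule-I turns white, and showing it cannot re-enter and enlarge the next bad component; and (ii) ruling out that a new bad component absorbs a previously good \emph{all-white} component (one carrying only weight-$2$ whites) and thereby grows. Both points must be settled by arguing from Lemmas~\ref{keepcolor} and~\ref{keepedge} that outside the shrinking $C_i$-region nothing relevant changes. This bookkeeping, rather than any single inequality, is where the real difficulty lies.
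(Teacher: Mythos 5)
Your skeleton is the same as the paper's: dispose of a weight-2 black vertex first (your connectivity argument for why the component does not split is correct, and slightly more detailed than the paper's one-line assertion), then, in the all-weight-1 case, exchange the reversed vertex $x$ for a black neighbor $x'$ inside a bad component, protected by Lemmas~\ref{hasneighbor}, \ref{keepcolor} and \ref{keepedge}. But the part you defer --- ``pin down a strictly decreasing measure'' --- is precisely the content of the theorem, and the measure you propose does not work. Strict nesting $C^{(k+1)}\subsetneq C^{(k)}$ of \emph{a single} bad component fails because there may be several bad components $C_1,\dots,C_i$ after reversing $x$: when you switch to reversing $z_i\in C_i$, Lemma~\ref{keepcolor} keeps the vertices of the \emph{other} bad components white, and rule-II can create new edges among vertices of $N(z_i)$, so a new bad component may be a union of pieces of several old bad components together with the re-whitened $x$. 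In particular your confinement claim --- that every new bad component lies inside $\{x\}\cup(C_i\setminus\{z_i\})$ --- is false in general, so your points (i) and (ii) are not mere bookkeeping; the descent as formulated can stall or oscillate.

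The paper resolves this with a \emph{global} potential and an extremal argument rather than an iterative exchange: define $\Delta(x)$ as the total number of white vertices over \emph{all} bad components created by reversing $x$, take $x$ minimizing $\Delta(x)$, and derive a contradiction from $\Delta(x)>0$. The inequality $\Delta(x')\le\Delta(x)$ follows because Lemmas~\ref{keepcolor} and \ref{keepedge} confine the vertices countable by $\Delta(x')$ to $\{x\}\cup\bigl(C_1\cup\dots\cup C_i\bigr)\setminus\{x'\}$, a set of size exactly $\Delta(x)$ --- note this absorbs your worry (i) about $x$ re-entering. Strictness is then obtained by a step you gesture at (``internal slack'') but never execute: by Lemma~\ref{cannot1white}, $x'$ has a neighbor $x''$ inside its bad component, and a two-case check (whether or not $(x,x'')\in E[\pi]$) shows $x''$ is black after reversing $x'$, hence uncounted, giving $\Delta(x')\le\Delta(x)-1$. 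Finally, your worry (ii) about all-white weight-2 components is dissolved outright: under the case assumption such components cannot exist, since by Lemma~\ref{hasneighbor} they would contain a neighbor of $x$, which rule-I forces to have been a black weight-2 vertex before the reversal. Without the global potential and the $x''$ argument, your proof is incomplete at its crux.
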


\begin{proof}
It is apparent that performing a symmetric reversal of a black vertex $x$ of weight 2 will not introduce any new connected component, and the vertex $x$ is still black.
Assume that there is no black vertex of weight 2 in the connected component.
For each black vertex $x$, let $\Delta(x)$ be the number of white vertices in the connected components, which contains white vertices of weight 1 but not any black vertex,
introduced by performing the symmetric reversal of $x$.
Next, we show that, there must be some vertex $x$, such that $\Delta(x)=0$.

On the contrary, let $x$ be the black vertex with minimum $\Delta(x)>0$. Let $C_{1}, C_{2}, \dots, C_{m}$ be the introduced connected components after performing the symmetric reversal of $x$. There could not be a connected component which is composed of white vertices of weight 2, since otherwise, following Lemma~\ref{hasneighbor}, the neighbor of $x$ in this connected component is black prior to performing the symmetric reversal of $x$, which is in contradiction with our assumption.
W.l.o.g, assume that each of $C_{1}, C_{2}, \dots, C_{i}$ ($1\leq i\leq m$) contains a white vertex of weight 1 but no black vertex,
and each of $C_{i+1}, C_{i+2}, \dots, C_{m}$ contains at least one black vertex.

Let $x'$ be the neighbor of $x$ in $C_{1}$. Then, $x'$ is a black vertex of weight 1 prior to performing the symmetric reversal of $x$.
It is sufficient to show that $\Delta(x')<\Delta(x)$, and hence contradicting with that $\Delta(x)$ is minimum.
$\Delta(x)$ only counts the vertices in $C_{1}, C_{2}, \dots, C_{i}$.
From Lemma~\ref{keepcolor} and Lemma~\ref{keepedge}, the color of all the vertices in $C_{i+1}, C_{i+2}, \dots, C_{m}$ are preserved, and
all the edges in $C_{i+1}, C_{i+2}, \dots, C_{m}$ are preserved after performing the symmetric reversal of $x'$.
Therefore, $\Delta(x')$ will also not count any vertex in $C_{i+1}, C_{i+2}, \dots, C_{m}$,
which implies, $\Delta(x')\leq\Delta(x)$.

From Lemma~\ref{cannot1white}, $x'$ must have a neighbor in $C_{1}$. Let $x''$ be a neighbor of $x'$ in $C_{1}$.
In $IG(\pi, \tau)$, if $(x, x'')\in E_{\pi}$, then $(x', x'')\notin E_{\pi}$, and $x''$ is a black vertex prior to and after performing the symmetric reversal of $x'$.
If $(x, x'')\notin E_{\pi}$, then $(x', x'')\in E_{\pi}$, and $x''$ is a white vertex prior to and after performing the symmetric reversal of $x$,
but become a black vertex after performing the symmetric reversal of $x'$.
In either case, $\Delta(x')$ will not count $x''$, thus, $\Delta(x')<\Delta(x)$. We illustrate the proof in Figure~\ref{fig:3}.
\end{proof}

\begin{figure}[htbp]
  \centering
  \includegraphics[width=0.4\textwidth]{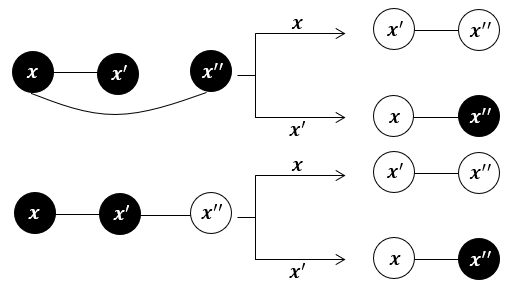}
  \caption{Whether $(x, x'')\in E_{\pi}$ or not, $x''$ will always be black after performing the symmetric reversal of $x'$, thus $\Delta(x')<\Delta(x)$.}
  \label{fig:3}
\end{figure}

The main contribution of this section is the following theorem.
\begin{theorem}
\label{determine-2}
A chromosome $\pi$ can be transformed into the other chromosome $\tau$ if and only if (I) $\mathcal{A}[\pi]=\mathcal{A}[\tau]$,
and (II) each white vertex of weight 1 belongs to a connected component of $IG(\pi, \tau)$ containing a black vertex.
\end{theorem}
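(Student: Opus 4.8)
The plan is to prove the two directions separately, keeping $\mathcal{A}[\pi]=\mathcal{A}[\tau]$ as a standing assumption that is preserved by every symmetric reversal (Lemma~\ref{preserve}), so that the status of $IG(\pi,\tau)$ evolves only through rules I--III and Proposition~\ref{w1}.

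\emph{Necessity.} Condition (I) is exactly Theorem~\ref{necessary}. For condition (II) I would argue by contradiction: suppose some white vertex $v$ of weight $1$ lies in a connected component $C$ of $IG(\pi,\tau)$ that contains no black vertex, i.e. $C$ is entirely white. Two observations finish this case. First, no symmetric reversal can ever be performed on a vertex of $C$, because a symmetric reversal requires a black vertex (two occurrences of opposite orientation). Second, a reversal on any vertex $x\notin C$ cannot affect $C$ at all: rules I--III touch only $N(x)$ and the edges inside $N(x)$, and since $C$ is a separate component we have $C\cap N(x)=\emptyset$. Hence $C$ is frozen throughout the whole process; in particular its weights never change, so $v$ remains an odd (neighbor-inconsistent) repeat forever. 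By Theorem~\ref{equcnd}, reaching $\tau$ would require every repeat to be even, so $\pi$ can never be transformed into $\tau$, a contradiction. Thus (II) is necessary.

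\emph{Sufficiency.} Assume (I) and (II). I would run a greedy process that repeatedly invokes Theorem~\ref{existareversal}, maintaining (I) (automatic by Lemma~\ref{preserve}) together with (II) as an invariant. The first point to check is that whenever a weight-$1$ vertex still exists, a black vertex exists as well: a black weight-$1$ vertex is itself black, and a white weight-$1$ vertex lies, by the invariant (II), in a component that contains a black vertex. So some component contains a black vertex, and Theorem~\ref{existareversal} supplies a symmetric reversal after which every component containing a white weight-$1$ vertex still contains a black vertex. Since a reversal on a vertex $x$ affects only $N(x)$ and the edges inside it, the components disjoint from $N(x)$ are untouched and retain their status; combined with the conclusion of Theorem~\ref{existareversal} for the reversed component (which, by its proof, keeps a black vertex even when no split occurs), the invariant (II) is preserved.

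For termination I would use the potential $W=\sum_{v}\omega(v)$, the total weight of the graph. Each symmetric reversal decreases the weight of exactly the reversed vertex by one (rule~III) and changes no other weight, so $W$ strictly decreases by $1$ at every step while remaining a nonnegative integer. Therefore the process halts after at most $W$ reversals, and by the previous paragraph it can halt only when no weight-$1$ vertex remains. A graph with no weight-$1$ vertex corresponds to a chromosome in which every repeat is even, so by Theorem~\ref{equcnd} the current chromosome equals $\tau$; this exhibits the desired sequence of symmetric reversals. The main obstacle I expect is precisely the verification that invariant (II) survives a single reversal, since the reversal may split the active component and flip the colors of many neighbors at once; here the combinatorial content of Lemmas~\ref{hasneighbor}--\ref{keepedge}, packaged into Theorem~\ref{existareversal}, does the real work, and choosing $W$ rather than the count of odd repeats as the termination measure is the clean way to absorb the fact that reversing a weight-$2$ vertex can temporarily create a new odd repeat.
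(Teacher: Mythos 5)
Your proposal is correct and follows essentially the same route as the paper: necessity via the observation that an all-white component is frozen (no reversal inside it is possible, and reversals elsewhere cannot touch it), contradicting Theorem~\ref{equcnd}, and sufficiency via repeated application of Theorem~\ref{existareversal} until every repeat becomes even. Your write-up is in fact more careful than the paper's, making the preservation of invariant (II) and the termination argument (the total-weight potential $W$) explicit where the paper only asserts that each weight-1 vertex is reversed once and each weight-2 vertex zero or two times.
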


\begin{proof}
$(\Rightarrow)$ If there exists an connected component of $IG(\pi, \tau)$, say $C$, which is composed of white vertices including a white vertex $x$ of wight 1,
and all the vertex in $C$ do not admit any symmetric reversal. Moreover, $x$ is odd in $\pi$.
Thus, it is impossible to find a series of symmetric reversals to make $x$ black, and then make $x$ even.
According to Theorem~\ref{equcnd}, $\pi$ can not be transformed into $\tau$.

$(\Leftarrow)$ As each odd repeat in $\pi$ corresponds to a vertex of weight 1, and each even repeat in $\pi$ corresponds to a vertex of weight 2.
Theorem~\ref{existareversal} guarantees that, each vertex of weight 1 will be reversed once, and each vertex of weight 2 will either not be reversed or be reversed twice.
Finally, all the repeats will become even, from Theorem~\ref{equcnd}, $\pi$ has been transformed into $\tau$.
\end{proof}

The above theorem implies that a breadth-first search of $IG[\pi,\tau]$ will determine whether $\pi$ can be transformed into $\tau$,
which takes $O(n^2)$ time, because $IG(\pi, \tau)$ contains at most $n$ vertices and $n^2$ edges.
We will show the details of the algorithm in Section 5, since it also serves as a decision algorithm for the general case.

In the next section, we show that the optimization version is also surprisingly polynomially solvable when the input genomes have some constraints.

\section{An Algorithm for the 2-Balanced Case of SMSR}

In this section, we consider a special case of SMSR, which we call {\em 2-Balanced}, where the duplication numbers of the two simple related chromosomes $\pi$ and $\tau$ are both 2, and $\tau$ contains both $+r$ and $-r$ for each repeat $r\in \Sigma_2$. The algorithm presented here will give a shortest
 sequence of symmetric reversals  that transform $\pi$ into $\tau$.

As $\mathcal{A}[\pi]=\mathcal{A}[\tau]$ for two related chromosomes $\pi$ and $\tau$, there is a bijection between identical adjacencies of $\mathcal{A}[\pi]$ and $\mathcal{}{A}[\tau]$.
For each pair of identical adjacencies $\langle r(x_{i}), l(x_{i+1}) \rangle\in \textit{A}[\pi]$ and $\langle r(y_{j}), l(y_{j+1}) \rangle \in \textit{A}[\tau]$,
define the sign of $\langle r(x_{i}), l(x_{i+1}) \rangle$ to be \emph{positive} if $r(x_{i})=r(y_{j})\neq l(x_{i+1})=l(y_{j+1})$,
and \emph{negative} if $r(x_{i})=l(y_{j+1})\neq l(x_{i+1})=r(y_{j})$.
We can always assume that $\langle r(x_{0}), l(x_{1}) \rangle$ and $\langle r(x_{n}), l(x_{n+1}) \rangle$ are positive, since otherwise we can reverse $\pi$ in a whole.
Note that if $r(x_{i})=l(x_{i+1})=r(y_{j})=l(y_{j+1})$, $\langle r(x_{i}), l(x_{i+1}) \rangle$ can be either positive or negative, we call it an {\em entangled} adjacency.

A segment $I$ of $\pi$ is referred to as \emph{positive} (resp. \emph{negative}) if all the adjacencies in it have positive (resp. \emph{negative}) directions,
and $I$ is \emph{maximal}, if it is not a subsegment of any other positive (resp. negative) segment in $\pi$ than $I$.
A maximal positive (resp. negative) segment is abbreviated as an \emph{MPS} (resp. \emph{MNS}). (See Figure~\ref{mps-mns} for an example.)

\begin{figure}[htbp]
  \centering
  \includegraphics[width=0.7\textwidth]{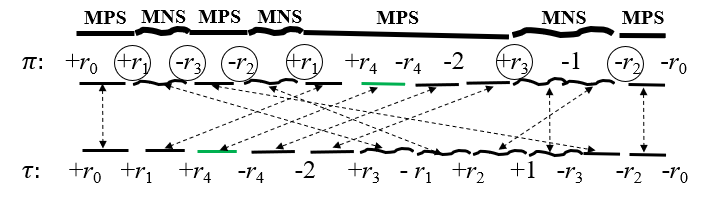}
  \caption{An example of \emph{MPS} and \emph{MNS}. Positive adjacencies and \emph{MPS} are marked with straight lines, while negative adjacencies and \emph{MNS} are marked with curved lines, the boundaries are circled, and
  the entangled adjacency $\langle r_4,-r_4\rangle$ is colored green.}
  \label{mps-mns}
\end{figure}

Now we assign proper directions to the entangled adjacencies such that the number of \emph{MNS} of $\pi$ is minimized.
Let $\langle r(x_{i}), l(x_{i+1}) \rangle$ be an entangled adjacency, then $r(x_{i})=l(x_{i+1})=r(y_{j})=l(y_{j+1})$, accordingly $l(x_{i})=r(x_{i+1})=l(y_{j})=r(y_{j+1})$.
Since $\pi$ is simple, $r(x_{i-1})\neq l(x_{i+2})$; hence, if $r(x_{i-1})=r(y_{j-1})$ and $l(x_{i+2})=l(y_{j+2})$, then both $\langle r(x_{i-1}), l(x_{i}) \rangle$ and $\langle r(x_{i+1}), l(x_{i+2}) \rangle$ are positive; if $r(x_{i-1})=l(y_{j+2})$ and $l(x_{i+2})=r(y_{j-1})$, then both $\langle r(x_{i-1}), l(x_{i}) \rangle$ and $\langle r(x_{i+1}), l(x_{i+2}) \rangle$ are negative. Thus, we have:

\begin{proposition}\label{alg111}
The two neighbors of any entangled adjacency always have the same direction.
The number of \emph{MNS} is minimized provided that the direction of each entangled adjacency is the same as its neighbors.
\end{proposition}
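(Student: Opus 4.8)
The plan is to first pin down the combinatorial meaning of an entangled adjacency and then handle the two assertions separately. An entangled adjacency $\langle r(x_i),l(x_{i+1})\rangle$ forces $r(x_i)=l(x_{i+1})$, so $x_i$ and $x_{i+1}$ are the two occurrences of a single repeat $r$ in opposite orientations; writing $x_i=+r$, $x_{i+1}=-r$ (the reversed case is symmetric), the adjacency itself is $\langle r^t,r^t\rangle$, while its left neighbor is $\langle r(x_{i-1}),r^h\rangle$ and its right neighbor is $\langle r^h,l(x_{i+2})\rangle$. First I would record the key structural consequence of $dp[\pi]=2$: neither neighbor can itself be entangled, because that would force a third occurrence of $r$; hence every entangled adjacency is flanked by two \emph{non}-entangled adjacencies, and in particular no two entangled adjacencies are adjacent. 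This single observation is what makes both parts go through, and it is the point I would be most careful about.

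For the first assertion, I would exploit the fact that the node $r^h$ occurs in exactly two adjacencies of $\mathcal{A}[\pi]$ — precisely the left and right neighbors — and, by the same count, in exactly two adjacencies of $\mathcal{A}[\tau]$, namely the two neighbors $\langle r(y_{j-1}),r^h\rangle$ and $\langle r^h,l(y_{j+2})\rangle$ of the matched entangled adjacency $\langle r(y_j),l(y_{j+1})\rangle$ in $\tau$ (here $y_j=+r$, $y_{j+1}=-r$ are forced). Since $\mathcal{A}[\pi]=\mathcal{A}[\tau]$ and both chromosomes are simple, the bijection between identical adjacencies must send the two $r^h$-incident adjacencies of $\pi$ to the two $r^h$-incident adjacencies of $\tau$, leaving exactly two pairings. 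In the first, the left neighbor of $\pi$ matches the left neighbor of $\tau$ and the right matches the right; reading off the sign definition, $r^h$ sits in the same coordinate of both members of each matched pair, so both neighbors are positive. In the second, the cross pairing occurs and $r^h$ sits in opposite coordinates, so both neighbors are negative. Either way the two neighbors receive the same direction, which is exactly the case split already sketched before the statement.

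For the second assertion I would view the sequence of adjacencies of $\pi$ as a word over $\{P,N,E\}$, with each $E$ (entangled) a wildcard to be set to $P$ or $N$, and the number of \emph{MNS} equal to the number of maximal runs of $N$. By the structural observation, every $E$ is isolated and is flanked by two fixed letters of a common direction $d(E)$. I would then run a simple exchange argument: starting from any assignment, flip any $E$ that disagrees with $d(E)$ to agree with it, and show the number of $N$-runs never increases. If $d(E)=P$, an $E$ set to $N$ is a size-one $N$-run wedged between two fixed $P$'s, so flipping it to $P$ deletes one run; if $d(E)=N$, an $E$ set to $P$ separates the two fixed $N$-neighbors into distinct runs, so flipping it to $N$ merges them and deletes one run. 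Because the neighbors of every $E$ are fixed, these flips are independent and can be applied simultaneously, so the all-agree assignment attains the minimum number of \emph{MNS}.

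The routine but necessary bookkeeping is the sign computation in the second paragraph; the only genuinely load-bearing step is the $dp=2$ argument that entangled adjacencies are isolated, since it simultaneously guarantees that ``the direction of the neighbors'' is well defined (the neighbors carry fixed signs) and that the wildcard choices in the \emph{MNS}-minimization decouple. I expect no deeper obstacle beyond being careful with the head/tail labels when reading off positive versus negative.
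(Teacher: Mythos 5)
Your proof is correct and takes essentially the same route as the paper: your argument for the first assertion (the two $r^h$-incident adjacencies of $\pi$ must be matched to the two $r^h$-incident adjacencies of $\tau$, with the two possible pairings yielding positive--positive or negative--negative) is exactly the paper's case analysis given just before the proposition, relying on the same use of simplicity to rule out $r(x_{i-1})=l(x_{i+2})$. Your exchange argument for the second assertion, together with the observation that $dp[\pi]=2$ forces entangled adjacencies to be isolated (so the flips decouple), simply spells out details that the paper treats as an immediate consequence of the first assertion.
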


The two neighbors of any entangled adjacency always have the same direction.
Therefore, the number of \emph{MNS} is minimized provided that the direction of each entangled adjacency is the same as its neighbors.

Once the directions of all the adjacencies of $\pi$ are fixed, the positive segments and negative segments appear alternatively on $\pi$,
in particular, both the leftmost and rightmost segments are positive.
Let $N_{\emph{MPS}}[\pi]$ and $N_{\emph{MNS}}[\pi]$ be the number of \emph{MPS} and \emph{MNS} respectively on $\pi$, then $N_{\emph{MPS}}[\pi]-N_{\emph{MNS}}[\pi]=1$.
Accordingly,
\begin{theorem}
$\pi=\tau$ if and only if $N_{\emph{MNS}}[\pi]=0$.
\end{theorem}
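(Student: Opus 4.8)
The plan is to reduce the statement to Theorem~\ref{equcnd}, which already characterizes equality of two simple related chromosomes of duplication number $2$ in terms of neighbor-consistency. Since $\pi$ and $\tau$ are given to be related and simple we always have $\mathcal{A}[\pi]=\mathcal{A}[\tau]$, so the bijection between adjacencies is fixed and each adjacency carries a well-defined sign (with entangled adjacencies oriented as prescribed by Proposition~\ref{alg111}). The first thing I would record is the purely combinatorial equivalence: $N_{\emph{MNS}}[\pi]=0$ if and only if every adjacency of $\pi$ is positive. This is immediate from the alternating structure, because positive and negative segments alternate along $\pi$ and both extreme segments are positive (together with $N_{\emph{MPS}}[\pi]-N_{\emph{MNS}}[\pi]=1$), so the absence of any \emph{MNS} is the same as the absence of any negative adjacency. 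Thus the theorem becomes: $\pi=\tau$ if and only if all adjacencies of $\pi$ are positive.

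For the backward direction I would assume every adjacency of $\pi$ is positive and show that every repeat is then neighbor-consistent, concluding $\pi=\tau$ by Theorem~\ref{equcnd}. Fix an occurrence $x_i$ of a repeat $x$ and consider its two flanking adjacencies $\langle r(x_{i-1}), l(x_i)\rangle$ and $\langle r(x_i), l(x_{i+1})\rangle$. Positivity of the first forces it to be matched to a $\tau$-adjacency whose second endpoint is $l(x_i)$, and positivity of the second forces it to be matched to a $\tau$-adjacency whose first endpoint is $r(x_i)$. Each of these two $\tau$-adjacencies is therefore associated with the occurrence of $x$ in $\tau$ that has the same orientation as $x_i$. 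Here the $2$-balanced hypothesis is exactly what is needed: $\tau$ contains both $+x$ and $-x$, so this same-orientation occurrence is unique, and hence both flanking adjacencies of $x_i$ are matched to adjacencies associated with one and the same occurrence of $x$ in $\tau$. This is precisely the definition of neighbor-consistency; since genes and single-occurrence repeats are neighbor-consistent by convention, every repeat is neighbor-consistent, and Theorem~\ref{equcnd} yields $\pi=\tau$.

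The forward direction is the easy one: if $\pi=\tau$ then, as both chromosomes are simple, every adjacency value occurs exactly once, so the adjacency bijection must match position $i$ of $\pi$ to position $i$ of $\tau$; each adjacency is matched to an identical copy of itself and is therefore positive by definition, with any entangled adjacency oriented consistently with its positive neighbors via Proposition~\ref{alg111}. Hence no adjacency is negative and $N_{\emph{MNS}}[\pi]=0$.

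The step I expect to require the most care is the uniqueness claim in the backward direction: one must verify that the two $\tau$-adjacencies produced by the two flanking adjacencies of $x_i$ really are associated with the \emph{same} occurrence of $x$ in $\tau$, and in particular rule out the alternative in which they point to the two distinct occurrences of $x$ (the neighbor-inconsistent case). This is exactly where the $2$-balanced assumption is indispensable. A second point deserving a careful sentence is the treatment of entangled adjacencies, so that orienting them by Proposition~\ref{alg111} cannot secretly introduce a negative adjacency when $N_{\emph{MNS}}[\pi]=0$.
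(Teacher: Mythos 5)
Your proof is correct, but it takes a genuinely different route from the paper's. The paper proves the nontrivial direction ($N_{\emph{MNS}}[\pi]=0 \Rightarrow \pi=\tau$) by a direct induction along the chromosome: assuming $x_t=y_t$ up to position $i$, it uses the 2-balanced hypothesis to argue that $y_i$ is the \emph{unique} occurrence in $\tau$ whose right node equals $r(x_i)$ (since the other occurrence $y_j$ has opposite sign, $r(y_i)=l(y_j)\neq r(y_j)$), so the positive adjacency $\langle r(x_i),l(x_{i+1})\rangle$ must be matched to $\langle r(y_i),l(y_{i+1})\rangle$, forcing $x_{i+1}=y_{i+1}$. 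You instead route the argument through Theorem~\ref{equcnd}: you show that when every adjacency is positive, the two flanking adjacencies of each occurrence $x_i$ must both point to the unique occurrence of $x$ in $\tau$ having the same orientation as $x_i$, i.e., every repeat is neighbor-consistent, and then invoke the Section~3 characterization. Both proofs use the 2-balanced hypothesis at exactly the same critical uniqueness step, and both handle entangled adjacencies without trouble (all four nodes of an entangled adjacency coincide, so the node identities you need still hold). What your approach buys is modularity and a clean conceptual bridge --- ``all adjacencies positive'' is the same phenomenon as ``all repeats neighbor-consistent'' --- while the paper's induction keeps Section~4 self-contained and yields the position-by-position equality directly, in the same style as its proofs of Theorem~\ref{equcnd} and Lemma~\ref{same-segment}. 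One small caveat: your opening claim that related and simple chromosomes ``always'' satisfy $\mathcal{A}[\pi]=\mathcal{A}[\tau]$ is not literally true (relatedness only fixes duplication numbers); equality of the adjacency multisets is a standing hypothesis of this section --- without it the signs and \emph{MNS} are undefined --- so your proof is fine, but that sentence should be restated as an assumption rather than a consequence.
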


\begin{proof}
The sufficient part is surely true.

We prove the efficient part inductively.
Our inductive hypothesis is that $x_{i}=y_{i}$ for $0\leq i\leq n+1$. Initially, we have $x_{0}=y_{0}=+0$.
For the inductive step,
let $y_{j}$ be the other occurrence of $|y_{i}|$, since $y_{i}$ and $y_{j}$ have distinct signs in $\tau$, $r(y_{i})=l(y_{j})$,
together with the inductive hypothesis that $x_{i}=y_{i}$, we have, $r(x_{i})=r(y_{i})=l(y_{j})$.
As the adjacency $\langle r(x_{i}), l(x_{i+1})\rangle$ is positive,
there must be some $y_{k}$ of $\tau$, such that $r(y_{k})=r(x_{i})$ and $l(y_{k+1})=l(x_{i+1})$.
Since $y_{i}$ is the unique occurrence satisfying $r(y_{i})=r(x_{i})$, then $k=i$, and consequently, $l(y_{i+1})=l(x_{i+1})$ and $x_{i+1}=y_{i+1}$.
This completes the induction.
\end{proof}

An occurrence of $x_{i}$ ($1\leq i\leq n$), is called a {\em boundary} if the directions of its left adjacency $\langle r(x_{i-1}), l(x_{i}) \rangle$
and right adjacency $\langle r(x_{i}), l(x_{i+1}) \rangle$ are distinct. Thus, any boundary is shared by an \emph{MPS} and an \emph{MNS}. (See Figure~\ref{mps-mns} for an example.)
Let $x_{i}$ and $x_{j}$ ($i<j$) be a pair of occurrences of $|x_{i}|$ in $\pi$ with distinct signs, performing a symmetric reversal $\rho(i, j)$ on $|x_{i}|$ yields $\pi'$.
We have,
\begin{lemma}
\label{atmost1}
$N_{\emph{MNS}}[\pi]-N_{\emph{MNS}}[\pi']\leq1$.
\end{lemma}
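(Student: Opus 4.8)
The plan is to pin down exactly which adjacencies change direction under the symmetric reversal $\rho(i,j)$, reduce the count $N_{\emph{MNS}}$ to a statement about sign-transitions in a $\pm$ string, and then bound how a single contiguous flip can move that count.

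First, using the same node bookkeeping as in Lemma~\ref{preserve}, I would show that $\rho(i,j)$ leaves the directions of the two boundary adjacencies $\langle r(x_{i-1}), l(x_i)\rangle$ and $\langle r(x_j), l(x_{j+1})\rangle$ (and of every adjacency outside the reversed segment) unchanged, while it reverses the order of the internal adjacencies $\langle r(x_i),l(x_{i+1})\rangle,\dots,\langle r(x_{j-1}),l(x_j)\rangle$ and flips the direction of each of them. Indeed, the boundary adjacencies reappear as the identical ordered node pairs (since $x_i=-x_j$ gives $l(-x_j)=l(x_i)$ and $r(-x_i)=r(x_j)$), so their sign is unchanged; an internal adjacency $\langle r(x_k),l(x_{k+1})\rangle$ reappears as $\langle l(x_{k+1}),r(x_k)\rangle$ with its two nodes transposed, which by the definition of positive/negative inverts its sign.

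Next I would translate to the $\pm$ string $a_0 a_1\cdots a_n$ of adjacency directions. Since the leftmost and rightmost segments are positive we have $a_0=a_n=+$, and because positive and negative segments alternate with $N_{\emph{MPS}}[\pi]-N_{\emph{MNS}}[\pi]=1$, the quantity $N_{\emph{MNS}}[\pi]$ equals half the number of sign-transitions (equivalently, half the number of boundaries). The reversal complements and reverses the contiguous block $a_i\cdots a_{j-1}$; both complementation and order-reversal preserve the number of transitions strictly inside the block, so only the two transitions at the block ends --- the boundaries at $x_i$ and at $x_j$ --- can change, each by at most one, and a short parity check shows the net change in the number of transitions is even, hence in $\{-2,0,2\}$. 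Thus $N_{\emph{MNS}}$ changes by at most one in absolute value. Finally I would account for the entangled adjacencies and the fact that $N_{\emph{MNS}}$ is minimized over their directions on each chromosome. Because $\mathcal{A}[\pi]=\mathcal{A}[\pi']$ by Lemma~\ref{preserve}, the set of entangled adjacencies is identical for $\pi$ and $\pi'$, and Proposition~\ref{alg111} (which holds for both) forces each entangled adjacency to blend into its equal-direction neighbors in any optimal assignment, so it never contributes a boundary. To respect the minimization cleanly, I would take an optimal direction assignment realizing $N_{\emph{MNS}}[\pi']$, pull it back along the reversal (an involution) to a valid assignment of $\pi$; the string-level bound above shows this assignment of $\pi$ has at most $N_{\emph{MNS}}[\pi']+1$ negative runs, and since $N_{\emph{MNS}}[\pi]$ is the minimum over all assignments of $\pi$, we obtain $N_{\emph{MNS}}[\pi]\le N_{\emph{MNS}}[\pi']+1$, which is exactly the claim.

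The main obstacle is the first-step bookkeeping combined with the order-reversal of the block: because $\rho(i,j)$ does not merely complement the block but also reverses it, in $\pi'$ the left end of the block is the flipped $a_{j-1}$ and the right end is the flipped $a_i$, so the two boundary comparisons that may toggle are $(a_{i-1}, \overline{a_{j-1}})$ and $(\overline{a_i}, a_j)$ rather than the naive $(a_{i-1},\overline{a_i})$ and $(\overline{a_{j-1}},a_j)$; tracking this correctly is what secures the parity argument. A secondary subtlety, handled by pulling back the optimal assignment of $\pi'$ rather than pushing forward that of $\pi$, is ensuring the bound survives the per-chromosome minimization over entangled-adjacency directions.
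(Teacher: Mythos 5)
Your proof is correct, but it takes a genuinely different route from the paper's. The paper argues by case enumeration: it classifies each of $x_i$, $x_j$ as a boundary, interior of an \emph{MPS}, or interior of an \emph{MNS}, and checks all configurations, observing that a decrease of $1$ occurs only when both are boundaries and the reversed substring starts and ends in segments of the same type. You instead prove the same fact globally: you first pin down how $\rho(i,j)$ acts on the string of adjacency directions (external and boundary adjacencies keep their directions because $x_i=-x_j$ makes the flanking node pairs literally identical; internal adjacencies are order-reversed and direction-flipped), then reduce $N_{\emph{MNS}}$ to half the number of sign transitions in a $\pm$ string with positive ends, and bound the change via the observation that complement-plus-reversal of a contiguous block preserves internal transitions, so only the two block-end comparisons can toggle, with an even net change by parity. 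Your approach buys three things the paper's sketch does not: a fully explicit argument (the paper never writes out its cases), the stronger two-sided bound $\lvert N_{\emph{MNS}}[\pi]-N_{\emph{MNS}}[\pi']\rvert\leq 1$, and a careful treatment of the entangled adjacencies --- the pullback of an optimal direction assignment from $\pi'$ to $\pi$ is exactly the right device to make the bound compatible with the per-chromosome minimization in Proposition~\ref{alg111}, a point the paper's proof silently skips. The paper's enumeration, in exchange, directly identifies \emph{which} reversals achieve the decrease of $1$, which is what Theorem~\ref{properboundaries} and the correctness of Algorithm 1 subsequently exploit; your transition-counting argument also yields this (the decrease forces both block-end comparisons to change from transition to non-transition, i.e., both $x_i$ and $x_j$ are boundaries of the appropriate kind), but you would need to state that corollary explicitly for later use.
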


\begin{proof}
The proof is conducted by enumerating all possible cases that $x_{i}$ and $x_{j}$ may be boundaries, in the interior of a \emph{MPS} or a \emph{MNS}.
For a revenue of 1, the unique case is that both $x_{i}$ and $x_{j}$ are boundaries and the substring $[x_{i},\dots, x_{j}]$ starts and ends with both \emph{MNS} or both \emph{MPS}.
\end{proof}

Lemma~\ref{atmost1} implies that a scenario will be optimal, provided that each symmetric reversal subtracts the number of \emph{MNS} by one.
Luckily, we can always find such symmetric reversals till $\pi$ has been transformed into $\tau$.

\begin{lemma}
\label{3-properties}
Let $x_{i}$ and $x_{j}$ ($i<j$) be the two occurrences of the repeat $x$ in $\pi$, (I) either they are both boundaries or none of them is a boundary.
(II) If $x_{i}$ and $x_{j}$ are both boundaries, the adjacencies $\langle r(x_{i}), l(x_{i+1}) \rangle$ and $\langle r(x_{j-1}), l(x_{j}) \rangle$ have the same direction,
and the adjacencies $\langle r(x_{i-1}), l(x_{i}) \rangle$ and $\langle r(x_{j}), l(x_{j+1}) \rangle$ have the same direction.
(III) Moreover, if $x_{i}=x_{j}$, then $\langle r(x_{i}), l(x_{i+1}) \rangle$ and $\langle r(x_{j-1}), l(x_{j}) \rangle$ are matched to two consecutive adjacencies in $\tau$, and $\langle r(x_{i-1}), l(x_{i}) \rangle$ and $\langle r(x_{j}), l(x_{j+1}) \rangle$ are matched to two consecutive adjacencies in $\tau$.
\end{lemma}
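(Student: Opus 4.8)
The plan is to reduce everything to bookkeeping of which of the two nodes $|x|^h,|x|^t$ sits on the left or on the right of each of the four adjacencies incident to the two occurrences of $x$, in $\pi$ and in its matched partner in $\tau$. Write a matched pair as $\langle a,b\rangle$ (in $\pi$) and $\langle c,d\rangle$ (in $\tau$), with $a,c$ the right-nodes and $b,d$ the left-nodes; the sign definition says the adjacency is positive iff $a=c,\ b=d$ and negative iff $a=d,\ b=c$. Hence for either node $v$ shared by such a pair, the adjacency is \emph{positive exactly when $v$ occupies the same side (left or right) in $\pi$ and in its $\tau$-partner}, and negative when it switches sides. I will use this ``same-side $\Leftrightarrow$ positive'' rule throughout, denoting by $d_L(x_k),d_R(x_k)\in\{+1,-1\}$ the directions of the left adjacency $\langle r(x_{k-1}),l(x_k)\rangle$ and the right adjacency $\langle r(x_k),l(x_{k+1})\rangle$ of an occurrence $x_k$, so that $x_k$ is a boundary iff $d_L(x_k)d_R(x_k)=-1$.

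First I would record the sides. In the $2$-balanced case the two occurrences of $x$ in $\tau$ carry opposite signs, so (WLOG) $y_{i'}=+x$ and $y_{j'}=-x$; then $|x|^h$ is the left-node of the left adjacency of $y_{i'}$ and the right-node of the right adjacency of $y_{j'}$, while $|x|^t$ is the right-node of the right adjacency of $y_{i'}$ and the left-node of the left adjacency of $y_{j'}$. On the $\pi$-side there are two cases. If $x_i=x_j$ (same sign, say both $+x$), then $|x|^h$ is the left-node of both left adjacencies and $|x|^t$ the right-node of both right adjacencies. If $x_i=-x_j$, then $|x|^h$ is the left-node of the left adjacency of $x_i$ and the right-node of the right adjacency of $x_j$, and symmetrically for $|x|^t$. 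Because $\mathcal{A}[\pi]=\mathcal{A}[\tau]$ and both chromosomes are simple, the two $\pi$-adjacencies containing $|x|^h$ are matched bijectively with the two $\tau$-adjacencies containing $|x|^h$ (and likewise for $|x|^t$); reading off the ``same-side'' rule then yields, in the same-sign case, $d_L(x_i)=-d_L(x_j)$ and $d_R(x_i)=-d_R(x_j)$, and in the opposite-sign case $d_R(x_i)=d_L(x_j)$ and $d_L(x_i)=d_R(x_j)$.

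Part (I) then drops out: in either case $d_L(x_i)d_R(x_i)=d_L(x_j)d_R(x_j)$ (in the same-sign case $(-d_L(x_j))(-d_R(x_j))=d_L(x_j)d_R(x_j)$, and in the opposite-sign case the two products have literally the same two factors), so $x_i$ is a boundary iff $x_j$ is. For part (II) I assume both are boundaries, i.e.\ $d_R(x_i)=-d_L(x_i)$. In the opposite-sign case the equalities $d_R(x_i)=d_L(x_j)$ and $d_L(x_i)=d_R(x_j)$ already hold unconditionally, and these are exactly the two claims. In the same-sign case I combine $d_L(x_i)=-d_L(x_j)$, $d_R(x_i)=-d_R(x_j)$ with $d_R(x_i)=-d_L(x_i)$ to recover $d_R(x_i)=d_L(x_j)$ and $d_L(x_i)=d_R(x_j)$. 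Finally, for part (III) I stay in the same-sign case under the standing both-boundaries hypothesis and push the equalities of (II) back through the side table: the right adjacency of $x_i$ carries $|x|^t$ on its right, so when positive it must match the unique $\tau$-adjacency with $|x|^t$ on the right --- the right adjacency of $y_{i'}$ --- and when negative it matches the left adjacency of $y_{j'}$; the left adjacency of $x_j$ carries $|x|^h$ on its left, matching the left adjacency of $y_{i'}$ when positive and the right adjacency of $y_{j'}$ when negative. Since $d_R(x_i)=d_L(x_j)$, both land on the two adjacencies of the \emph{same} $\tau$-occurrence ($y_{i'}$ if positive, $y_{j'}$ if negative), i.e.\ on two consecutive adjacencies; the symmetric computation using $d_L(x_i)=d_R(x_j)$ settles the second pair (landing on the other occurrence, consistently, since both boundaries forces the two governing signs to differ).

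I expect the main obstacle to be the coordination step in (III): knowing the two signs separately is not enough, one must see that \emph{equal} signs force the two $\pi$-adjacencies onto a single $\tau$-occurrence rather than onto halves of two different occurrences, and this is precisely what the side-versus-side table delivers once (II) is in hand. Two bookkeeping points need care: the WLOG identification $y_{i'}=+x$ must be tracked consistently across all four adjacencies, and the degenerate \emph{entangled} situation where the two occurrences of $x$ are adjacent (so a single shared adjacency such as $\langle |x|^t,|x|^t\rangle$ occurs) should be dispatched separately --- there $d_R(x_i)=d_L(x_j)$ holds trivially and consecutivity is immediate, so it is consistent with the generic argument.
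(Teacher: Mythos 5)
Your proof is correct and follows essentially the same route as the paper's: both arguments are bookkeeping of how the four $\pi$-adjacencies incident to the two occurrences of $x$ are matched to the four $\tau$-adjacencies of the two oppositely-signed occurrences in $\tau$, using the observation that a matched adjacency is positive exactly when the shared node $|x|^h$ or $|x|^t$ keeps its side. The paper packages this as a partition of the $\tau$-adjacencies into Group-(I) (left adjacencies) and Group-(II) (right adjacencies), where matching into Group-(I) versus Group-(II) is precisely your ``same-side $\Leftrightarrow$ positive'' rule, while you package it as $\pm 1$ direction variables with sign algebra; the underlying content, including the way (II) feeds into (III) to force both members of a pair onto a single $\tau$-occurrence, is the same.
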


\begin{proof}
(I) Without loses of generality, assume that $x_{i}$ is a boundary.
Let $y_{k}$ and $y_{l}$ be the occurrences of $|x_{i}|$ in $\tau$ with distinct signs, thus $l(y_{k})=r(y_{l})\neq r(y_{k})=l(y_{l})$.
We partition the adjacencies associated with $y_{k}$ and $y_{l}$ into two groups: $\langle r(y_{k-1}), l(y_{k}) \rangle$ and $\langle r(y_{l-1}), l(y_{l}) \rangle$ form Group-(I)
and $\langle r(y_{k}), l(y_{k+1}) \rangle$ and $\langle r(y_{l}), l(y_{l+1}) \rangle$ form Group-(II). The group partition guarantees that $l(y_{k})$ and $r(y_{l})$ are in different groups, and $r(y_{k})$ and $l(y_{l})$ are also in different groups.
Since the two adjacencies $\langle r(x_{i-1}), l(x_{i}) \rangle$ and $\langle r(x_{i}), l(x_{i+1}) \rangle$ have distinct directions, so they are either matched to Group-(I) or Group-(II), accordingly, the two adjacencies $\langle r(x_{j-1}), l(x_{j}) \rangle$ and $\langle r(x_{j}), l(x_{j+1}) \rangle$ have to be matched to the other group.
In either case, $\langle r(x_{j-1}), l(x_{j}) \rangle$ and $\langle r(x_{j}), l(x_{j+1}) \rangle$ have distinct directions. Thus, $x_{j}$ is a boundary.

(II) If $\langle r(x_{i-1}), l(x_{i}) \rangle$ and $\langle r(x_{i}), l(x_{i+1}) \rangle$ are matched to group-(I), then $\langle r(x_{i}), l(x_{i+1}) \rangle$ is negative,
$l(x_{j})$ must be matched to either $r(y_{k})$ or $r(y_{l})$, which implies that $\langle r(x_{j-1}), l(x_{j}) \rangle$ is also negative.
If $\langle r(x_{i-1}), l(x_{i}) \rangle$ and $\langle r(x_{i}), l(x_{i+1}) \rangle$ are matched to group-(II), then $\langle r(x_{i}), l(x_{i+1}) \rangle$ is positive,
$l(x_{j})$ must be matched to either $l(y_{k})$ or $l(y_{l})$, which implies that $\langle r(x_{j-1}), l(x_{j}) \rangle$ is also positive.
Since the direction of $\langle r(x_{i-1}), l(x_{i}) \rangle$ is different from $\langle r(x_{i}), l(x_{i+1}) \rangle$, and the direction of $\langle r(x_{j}), l(x_{j+1}) \rangle$ is different from $\langle r(x_{j-1}), l(x_{j}) \rangle$, $\langle r(x_{i-1}), l(x_{i}) \rangle$ and $\langle r(x_{j}), l(x_{j+1}) \rangle$ have the same direction.

(III) When $x_{i}=x_{j}$, the positive pair of adjacencies must be matched to the two adjacencies associated with the occurrence with sign ``+'', and the negative pair of adjacencies must be matched to the two adjacencies associated the occurrence with sign ``-'', so they are consecutive in $\tau$.
\end{proof}

\begin{figure}[htbp]
  \centering
  \includegraphics[width=0.7\textwidth]{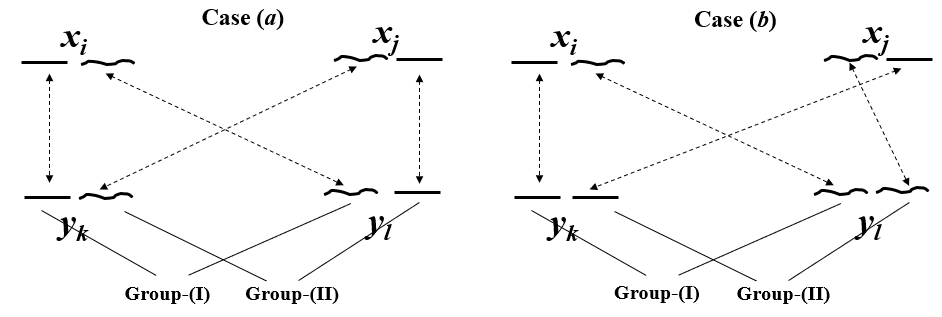}
  \caption{An illustration of Lemma~\ref{3-properties}. Assume that $x_{i}=y_{k}=-y_{l}$, $\langle r(x_{j-1}), l(x_{j}) \rangle$ and $\langle r(x_{j}), l(x_{j+1}) \rangle$
  are matched to Group-(I). In case ($a$), the signs of $x_{i}$ and $x_{j}$ are distinct; and in case ($b$), the signs of $x_{i}$ and $x_{j}$ are identical.}
  \label{bdrmatching}
\end{figure}

\begin{lemma}
\label{same-segment}
Every \emph{MPS} in $\pi$ has an identical segment in $\tau$, and every \emph{MNS} in $\pi$ has a reversed and negated segment in $\tau$. 	
\end{lemma}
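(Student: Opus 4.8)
The plan is to exploit the defining feature of the 2-balanced case --- that $\tau$ contains both $+r$ and $-r$ for every repeat $r$ --- to turn the matched adjacencies into a completely rigid chain. First I would record the elementary but crucial observation that in $\tau$ every node is unambiguous: since the two occurrences of any repeat $r$ in $\tau$ carry opposite signs, the $+r$ occurrence has left node $|r|^{h}$ and right node $|r|^{t}$, while the $-r$ occurrence has left node $|r|^{t}$ and right node $|r|^{h}$. Hence a given node, together with the information of whether it plays the role of a left node or a right node, pins down a \emph{unique} occurrence in $\tau$ (for genes this is trivial, since they appear once). I will call this property \emph{node-uniqueness}; it is exactly what the balanced assumption buys us, and it is the engine of the whole argument. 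I will also use the identities $l(-z)=r(z)$ and $r(-z)=l(z)$, which follow directly from the definition of $l(\cdot)$ and $r(\cdot)$.

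For an \emph{MPS}, let it be the segment $[x_{a}, \dots, x_{b}]$ of $\pi$, so that each adjacency $\langle r(x_{a+k}), l(x_{a+k+1}) \rangle$ (for $0\le k\le b-a-1$) is positive, and is therefore matched to an adjacency $\langle r(y_{q_{k}}), l(y_{q_{k}+1}) \rangle$ of $\tau$ with $r(x_{a+k})=r(y_{q_{k}})$ and $l(x_{a+k+1})=l(y_{q_{k}+1})$. I would then run the following threading induction. From the $k$-th adjacency, $l(x_{a+k+1})=l(y_{q_{k}+1})$, so node-uniqueness gives $x_{a+k+1}=y_{q_{k}+1}$, whence $r(x_{a+k+1})=r(y_{q_{k}+1})$. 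Comparing this with $r(x_{a+k+1})=r(y_{q_{k+1}})$ coming from the $(k+1)$-st adjacency and invoking node-uniqueness again yields $y_{q_{k+1}}=y_{q_{k}+1}$, i.e. $q_{k+1}=q_{k}+1$. Thus the matched adjacencies occupy consecutive positions $q_{0}, q_{0}+1, \dots$ in $\tau$, and reading off $x_{a+k}=y_{q_{0}+k}$ (the endpoints $x_{a}$ and $x_{b}$ being handled by applying the matching relations to the first and last adjacencies) shows that $[x_{a}, \dots, x_{b}]$ equals the segment $[y_{q_{0}}, \dots, y_{q_{0}+b-a}]$ of $\tau$ verbatim.

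For an \emph{MNS} the argument is the mirror image: now every adjacency is negative, so $r(x_{a+k})=l(y_{q_{k}+1})$ and $l(x_{a+k+1})=r(y_{q_{k}})$. Using $l(-z)=r(z)$ and $r(-z)=l(z)$ together with node-uniqueness, the relation $r(x_{a+k})=l(y_{q_{k}+1})$ gives $x_{a+k}=-y_{q_{k}+1}$, hence $r(x_{a+k+1})=l(y_{q_{k}})$; feeding this into the next negative adjacency forces $q_{k+1}=q_{k}-1$, so the matched positions now \emph{decrease}. Writing $c$ and $d$ for the extreme indices reached, one obtains $[x_{a}, \dots, x_{b}]=[-y_{d}, -y_{d-1}, \dots, -y_{c}]$, which is precisely the reversed and negated form of the segment $[y_{c}, \dots, y_{d}]$ of $\tau$.

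The only real obstacle is establishing node-uniqueness and verifying that it is legitimately invoked at each step --- in particular that the equalities forced by positivity (resp. negativity) always compare a left node with a left node and a right node with a right node, so that uniqueness isolates a single occurrence rather than merely fixing the underlying repeat. Lemma~\ref{3-properties} supplies the complementary structural fact that the interior elements of an \emph{MPS} or \emph{MNS} are not boundaries, which guarantees the threading chain never leaves the segment prematurely. Once node-uniqueness is in hand, the two inductions above reduce to routine index bookkeeping, and I expect no further difficulty.
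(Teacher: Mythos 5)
Your proposal is correct and takes essentially the same approach as the paper: the paper's proof is exactly this threading induction along matched adjacencies, using the fact that in $\tau$ the two occurrences of each repeat have opposite signs (your ``node-uniqueness'') to force consecutive matched positions. The only difference is cosmetic --- you isolate node-uniqueness as an explicit preliminary observation and write out the \emph{MNS} case, which the paper dismisses as ``similar hence omitted.''
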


\begin{proof}
Let $I=[x_{i},x_{i+1},\dots, x_{i+k}]$ be a \emph{MPS} in $\pi$.
There must be an adjacency $\langle r(y_{j}), l(y_{j+1}) \rangle$ such that $r(x_{i})=r(y_{j}
)$ and $l(x_{i+1})=l(y_{j+1})$,
We conduct an inductive proof. Our inductive hypothesis is that $x_{i+t}=y_{j+t}$, for all $0\leq t\leq k$.
Initially, we have $x_{i}=y_{j}$. For the inductive step, let $y_{j'}$ be the other occurrence of $|y_{j+t}|$ in $\tau$.
Since $y_{j'}$ and $y_{j+t}$ have distinct signs, $r(y_{j+t})=l(y_{j'})$, together with the inductive hypothesis, we have $r(x_{i+t})=r(y_{j+t})=l(y_{j'})$,
which implies that $\langle r(x_{i+t}), l(x_{i+t+1}) \rangle$ must be matched to $\langle r(x_{j+t}), l(y_{i+t+1}) \rangle$.
Consequently $l(x_{i+t+1})=l(y_{i+t+1})$ and $x_{i+t+1}=y_{i+t+1}$.
This completes the induction.

The proof for an \emph{MNS} in $\pi$ is similar hence omitted.
\end{proof}

\begin{theorem}
\label{properboundaries}
If there exists an MNS in $\pi$, then there exists a pair of boundaries, which are occurrences of the same repeat  with different orientations.
\end{theorem}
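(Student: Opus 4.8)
The plan is to argue by contradiction: I will assume that \emph{every} boundary is an occurrence of a repeat whose two occurrences in $\pi$ carry the \emph{same} orientation, and derive that $\pi$ in fact has no \emph{MNS}, contradicting the hypothesis.

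First I would record two preliminary facts. (a) No gene can be a boundary: a gene occurs once in $\pi$ and once in $\tau$, and a short check of the two possible relative orientations shows that its left and right adjacencies always receive the \emph{same} direction (both positive, or both negative), so it is never a boundary. Hence every boundary is an occurrence of a repeat, and by Lemma~\ref{3-properties}(I) the two occurrences of such a repeat are both boundaries. (b) By Lemma~\ref{same-segment} together with the bijection between $\mathcal{A}[\pi]$ and $\mathcal{A}[\tau]$, the images of the maximal segments of $\pi$ (an \emph{MPS} maps to an identical segment, an \emph{MNS} to a reversed-and-negated one) partition the adjacencies of $\tau$ into disjoint contiguous runs; reading $\tau$ from $+r_0$ to $-r_0$ thus expresses it as a linear chain of these \emph{segment-images}, in which consecutive images share a single occurrence (a \emph{glue point}), and every glue point is a $\tau$-occurrence of a boundary repeat.

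The heart of the argument is to understand, at each boundary $x_i$ with $|x_i|=a$, which $\tau$-occurrence of $a$ each adjacent segment-image touches. Since $\tau$ is $2$-balanced, $a$ occurs in $\tau$ once as $+a$ and once as $-a$. From Lemma~\ref{same-segment}, the \emph{MPS}-side image (being identical) touches the occurrence of $a$ in $\tau$ of the \emph{same} sign as $x_i$, while the \emph{MNS}-side image (being reversed and negated) touches the occurrence of the \emph{opposite} sign. Under the contradiction hypothesis, fix a boundary repeat $a$ whose two occurrences are, say, both $+a$ (the both-$-a$ case is symmetric). Then the $\tau$-occurrence $+a$ can only be touched by \emph{MPS}-side endpoints and the $\tau$-occurrence $-a$ only by \emph{MNS}-side endpoints; consequently every glue point of $\tau$ joins two images of the \emph{same} type (two \emph{MPS}-images, or two \emph{MNS}-images).

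Finally I would finish by a connectivity observation on the chain of segment-images: its leftmost image is the identical image of the leftmost \emph{MPS} of $\pi$ (the one containing $+r_0$), hence an \emph{MPS}-image; but every glue point is homogeneous, so moving along the chain can never switch type, forcing \emph{all} images to be \emph{MPS}-images. Since the segment-images exhaust $\pi$, this means $\pi$ has no \emph{MNS}, contradicting the hypothesis. Therefore some boundary repeat must occur with two different orientations, and by Lemma~\ref{3-properties}(I) its two occurrences furnish the required pair of boundaries. I expect the main obstacle to be step (b) and the third paragraph: setting up the tiling of $\tau$ by segment-images rigorously and, in particular, the sign bookkeeping that makes the ``same-sign for \emph{MPS}, opposite-sign for \emph{MNS}'' identification of glue points airtight; once this is in place, the homogeneity of glue points and the final contradiction follow quickly.
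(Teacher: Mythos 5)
Your proof is correct and takes essentially the same route as the paper's: both argue by contradiction that if every same-repeat boundary pair had equal orientation, then (via Lemma~\ref{3-properties} and Lemma~\ref{same-segment}) the images of the \emph{MPS}s would glue to one another end-to-end across all of $\tau$, starting from the positive chromosome ends, leaving no room for the reversed-and-negated images of the \emph{MNS}s. Your ``homogeneous glue point'' propagation from $+r_0$ is a more detailed write-up of the paper's chaining of \emph{MPS}s, and your sign bookkeeping in the third paragraph simply re-derives the content of Lemma~\ref{3-properties}(III) from the 2-balancedness of $\tau$ instead of citing it.
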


\begin{proof}
Assume to the contrary that each pair of boundaries, which are duplications of the same repeat, have the same sign.
From Lemma~\ref{3-properties} and~\ref{same-segment}, each pair of boundaries that have equal absolute values locate on two distinct \emph{MPS}s,
whose corresponding segments are adjacent in $\tau$.
As each \emph{MPS} is associated with two boundaries, there is a path along all the \emph{MPS}s and boundaries in $\pi$, the corresponding segments of all the \emph{MPS}s forms a substring of $\tau$.
Note that $\langle r(x_{0}), l(x_{1}) \rangle$ and $\langle r(x_{n}), l(x_{n+1}) \rangle$ are both positive,
so the substring contains both $y_{0}$ and $y_{n+1}$, and it becomes the whole of $\tau$.
That is a contradiction, since $\tau$ also contains the corresponding segments of the \emph{MNS}s.
\end{proof}

Now, we formally present the algorithm as Algorithm 1 for computing the least number of symmetric reversals to transform $\pi$ into $\tau$.

\begin{algorithm}[tb]
\caption{An algorithm for the 2-Balanced Case of SMSR }
\hspace*{0.02in}{\bf Input:}: Two chromosomes $\pi$ and $\tau$, with $\mathcal{A}[\pi]=\mathcal{A}[\tau]$ and $dp[\pi]=dp[\tau]=2$, and the occurrences of each repeat has different orientation  on $\tau$.\\
\hspace*{0.02in}{\bf Output:}: a sequence of minimum number of symmetric reversals that transforms $\pi$ into $\tau$.

\begin{algorithmic}[1] 
    \STATE Delete the redundant repeats from $\pi$ and $\tau$ when $dp[\pi]=2$.(Lemma~\ref{norepetitive}) 
    \STATE Build the bijection between $\mathcal{A}[\pi]$ and $\mathcal{A}[\tau]$.
    \STATE Determine the direction of each adjacency (except entangled) of $\pi$ according to the bijection.
    \STATE Assign the direction of each entangled adjacency   the same as  its two  neighbors. (See Proposition \ref{alg111}, where the two neighbors always have the same direction.)
    \STATE Identify all the \emph{MPS}, \emph{MNS} and boundaries in $\pi$.
    \WHILE{there exists an \emph{MNS}}
         \STATE Find a pair of boundaries of the same repeat with  different orientations.
         \STATE Perform a symmetric reversal  on  this pair of inverted repeats.
    \ENDWHILE\\
\RETURN{} all the symmetric reversals.
\end{algorithmic}
\end{algorithm}

\begin{theorem}
Algorithm 1  gives  a sequence of minimum number of symmetric reversals that transforms $\pi$ into $\tau$, and runs in $O(n^2)$ time.
\end{theorem}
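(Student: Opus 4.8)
The plan is to prove two independent claims: that the while loop terminates with the running chromosome equal to $\tau$, and that the number of reversals it performs matches a lower bound. The quantity $N_{\emph{MNS}}$ of the current chromosome will serve as the potential driving both arguments: it is nonnegative, it vanishes exactly when the chromosome equals $\tau$ (by the theorem asserting $\pi=\tau$ iff $N_{\emph{MNS}}[\pi]=0$), and after the direction-assignment steps 3--4 it starts at its minimum possible value by Proposition~\ref{alg111}. Throughout, $N_{\emph{MNS}}$ is understood with the entangled adjacencies assigned to match their neighbors, so that it is a well-defined function of the chromosome together with its matching to $\tau$.

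First I would establish correctness of the transformation. Each entry into the loop body assumes an MNS exists, so Theorem~\ref{properboundaries} supplies a pair of occurrences $x_i,x_j$ of one repeat with opposite orientations that are both boundaries; since $x_i=-x_j$, the symmetric reversal $\rho(i,j)$ is applicable and, by Lemma~\ref{preserve}, preserves $\mathcal{A}[\pi]$ and hence the entire direction structure. The crux is to show this reversal lowers $N_{\emph{MNS}}$ by \emph{exactly} one. Lemma~\ref{atmost1} already caps the drop at one, so it remains to exclude a drop of zero. Here I would track the effect of $\rho(i,j)$ on the sequence of adjacency directions: the two flanking adjacencies $\langle r(x_{i-1}),l(x_i)\rangle$ and $\langle r(x_j),l(x_{j+1})\rangle$ retain both position and sign, whereas every interior adjacency keeps its (now reversed) position but has its node-order, and therefore its sign, flipped. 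Combined with Lemma~\ref{3-properties}(II) -- which forces $\langle r(x_i),l(x_{i+1})\rangle$ and $\langle r(x_{j-1}),l(x_j)\rangle$ to share a direction opposite to that of the two flanks -- this makes both junctions same-direction after the reversal, so the two boundaries at $x_i,x_j$ disappear, while the internal transition count is preserved and no new boundary is created. Counting boundaries (each MNS contributing exactly two, since both extreme segments are positive), the total falls by two, i.e.\ $N_{\emph{MNS}}$ falls by exactly one. Hence the loop strictly decreases $N_{\emph{MNS}}$, terminates after $N_{\emph{MNS}}[\pi]$ iterations, and leaves $N_{\emph{MNS}}=0$, so the chromosome equals $\tau$.

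For optimality I would reuse the potential. By the previous paragraph the algorithm uses exactly $N_{\emph{MNS}}[\pi]$ reversals, and by Proposition~\ref{alg111} this is the minimum value of $N_{\emph{MNS}}[\pi]$ over all admissible assignments of the entangled adjacencies. For the matching lower bound, any sequence $\rho_1,\dots,\rho_m$ transforming $\pi$ into $\tau$ yields chromosomes $\pi_0=\pi,\dots,\pi_m=\tau$ with $N_{\emph{MNS}}[\pi_m]=0$; since Lemma~\ref{atmost1} gives $N_{\emph{MNS}}[\pi_{k-1}]-N_{\emph{MNS}}[\pi_k]\le 1$ at every step, telescoping yields $m\ge N_{\emph{MNS}}[\pi_0]=N_{\emph{MNS}}[\pi]$. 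Thus the algorithm attains the minimum.

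Finally, for the running time, the preprocessing (the redundant-repeat deletions of Lemma~\ref{norepetitive}, the bijection between $\mathcal{A}[\pi]$ and $\mathcal{A}[\tau]$, the direction assignment including the entangled adjacencies via Proposition~\ref{alg111}, and the initial scan for MPS/MNS/boundaries) runs in linear time after an $O(n)$ bucketing of the node labels, and in any case within $O(n^2)$. The loop executes $N_{\emph{MNS}}[\pi]=O(n)$ times, and each iteration finds a qualifying pair of boundaries, performs the reversal, and updates the boundary and segment lists in $O(n)$ time, giving $O(n^2)$ overall. I expect the main obstacle to be the exact bookkeeping of the ``revenue equals one'' step: it requires carefully tracking that a symmetric reversal flips the signs of interior adjacencies while fixing the flanks, and then arguing via Lemma~\ref{3-properties} that precisely the two chosen boundaries -- and no others -- are destroyed.
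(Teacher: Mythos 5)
Your proposal is correct and follows essentially the same route as the paper's own proof: existence of an applicable reversal via Theorem~\ref{properboundaries}, a drop of $N_{\emph{MNS}}$ by exactly one via Lemma~\ref{3-properties}-(II), and optimality by telescoping the per-reversal bound of Lemma~\ref{atmost1} down to $N_{\emph{MNS}}[\pi_m]=0$. Your careful boundary-counting argument (flanks keep their direction, interior adjacencies flip, so precisely the two chosen boundaries vanish) is just an expanded version of the step the paper dispatches by citing Lemma~\ref{3-properties}-(II).
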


\begin{proof}
Theorem~\ref{properboundaries} guarantees the existence of two boundaries with $x_{i}=-x_{j}$, whenever there is an \emph{MNS}.
From Lemma~\ref{3-properties}-(II), performing the symmetric reversal on $|x_{i}|$ will decrease the number of \emph{MNS} by one.
Thus, the number of symmetric reversals performed by the algorithm \emph{Balanced-2 SMSR} is equal to the number of  \emph{MNS} in $\pi$,
which is optimum according to Theorem~\ref{atmost1}.

As for the time complexity, it takes $O(n^2)$ time to build the bijection between $\mathcal{A}[\pi]$ and $\mathcal{A}[\tau]$.
Obviously, steps 2-4 take linear time. The {\bf while}-loop runs at most $n/2$ rounds, and in each round, it takes linear time to find a pair of proper boundaries.
Totally, the time complexity is $O(n^2)$.
\end{proof}

We comment that the result in this section is perhaps the best that we can hope, since   we will show in section 6 that, when the duplication number is 2 but $\tau$ is not required to contain both $+|x|$ and $-|x|$ for each repeat $|x|\in \Sigma_2$, the \emph{SMSR} problem becomes NP-hard. More generally, when the duplication number of $\pi$ and $\tau$ is unlimited, the scenario is quite different. Nevertheless, in the next section, we proceed to handle this general case.


\section{An $O(n^2)$ Decision Algorithm for the General Case}

For the general case, i.e., when the duplication number for the two related input genomes is arbitrary,
the extension of the algorithm in Section 3 is non-trivial as it is impossible to make the genomes
simple. Our overall idea is to fix any bijection $f$ between the (identical) adjacencies of the input genomes, 
and build the corresponding alternative-cycle graph. This alternative-cycle graph is changing according to the 
corresponding symmetric reversals; and we show that, when the graph contains only 1-cycles, then the target 
$\tau$ is reached. Due to the changing nature of the alternative-cycle graph, we construct a blue edge intersection graph
to capture these changes. However, this is not enough as the blue intersection graph built from the alternative-cycle 
graph could be disconnected and we need to make it connected by adding additional vertices such that the resulting 
sequence of symmetric reversals are consistent with the original input genomes, and can be found in the new intersection
graph (called IG, which is based on the input genomes $\pi$ and $\tau$ as well as $f$). We depict the details in the following.

Suppose that we are given two related chromosomes $\pi=[x_{0}, x_{1}, \dots, x_{n+1}]$ and $\tau=[y_{0}, y_{1}, \dots, y_{n+1}]$, such that $x_{0}=y_{0}=+r_0$ and $x_{n+1}=y_{n+1}=-r_0$.
Theorem~\ref{necessary} shows that $\mathcal{A}[\pi]=\mathcal{A}[\tau]$ is a necessary condition,
thus there is a bijection $f$ between identical adjacencies in $\mathcal{A}[\pi]$ and $\mathcal{A}[\tau]$, as shown in Figure~\ref{acg}.
Based on the bijection $f$ , we construct the alternative-cycle graph $ACG(\pi,\tau, f)$  as follows.
For each $x_{i}$ in $\pi$, construct an ordered pair of nodes, denoted by $l(x_{i})$ and $r(x_{i})$, which are connected by a red edge.
For each $y_{k}$ in $\tau$, assume that $\langle r(y_{k-1}), l(y_{k})\rangle$ is matched to $\langle r(x_{i-1}), l(x_{i})\rangle$, and $\langle r(y_{k}), l(y_{k+1})\rangle$ is matched to $\langle r(x_{j-1}), l(x_{j})\rangle$, in the bijection $f$. There are four cases:
\begin{enumerate}
\item  $l(y_{k})=l(x_{i})$ and $r(y_{k})=r(x_{j-1})$, then connect $l(x_{i})$ and $r(x_{j-1})$ with a blue edge,
\item  $l(y_{k})=r(x_{i-1})$ and $r(y_{k})=r(x_{j-1})$, then connect $r(x_{i-1})$ and $r(x_{j-1})$ with a blue edge,
\item  $l(y_{k})=l(x_{i})$ and $r(y_{k})=l(x_{j})$, then connect $l(x_{i})$ and $l(x_{j})$ with a blue edge,
\item  $l(y_{k})=r(x_{i-1})$ and $r(y_{k})=l(x_{j})$, then connect $r(x_{i-1})$ and $l(x_{j})$ with a blue edge.
\end{enumerate}
\begin{figure}[htbp]
  \centering
  \includegraphics[width=0.7\textwidth]{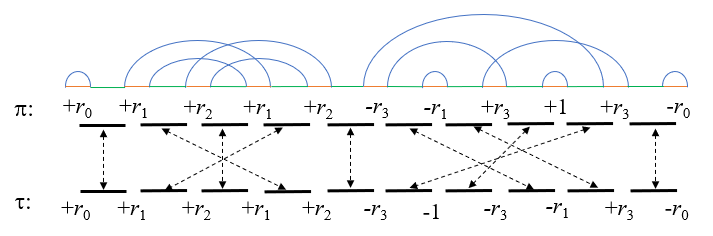}
  \caption{The bijection between identical adjacencies in $\mathcal{A}[\pi]$ and $\mathcal{A}[\tau]$, and the corresponding alternative-cycle graph.}
  \label{acg}
\end{figure}

Actually, two nodes connected by a red edge implies they are from the same occurrence of some repeat/gene in $\pi$, so each occurrence of some repeat/gene in $\pi$ corresponds to a red edge; and similarly, two nodes connected by a blue edge implies that they are from the same occurrence of some repeat/gene in $\tau$, thus each occurrence of some repeat/gene in $\tau$ corresponds to a blue edge.
Note that each node associates with one red edge and one blue edge, so $ACG(\pi,\tau, f)$ is composed of edge disjoint cycles, on which the red edges and blue edge appears alternatively.
A cycle composed of $c$ blue edges as well as $c$ red edges is called a $c$-cycle, it is called a long cycle when $c\geq2$.

\begin{theorem}
\label{all1-cycle}
Given two chromosomes $\pi^*$ and $\tau$, $\pi^*=\tau$ if and only if $\mathcal{A}[\pi^*]=\mathcal{A}[\tau]$, and there is a bijiection $f$ between the identical adjacencies in $\mathcal{A}[\pi^*]$ and $\mathcal{A}[\tau]$, such that all the cycles in the resulting alternative-cycle graph $ACG(\pi^*, \tau, f)$ are 1-cycles.
\end{theorem}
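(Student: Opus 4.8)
The plan is to prove both implications, mirroring the spirit of the duplication-$2$ argument in Theorem~\ref{equcnd} but replacing its parity bookkeeping with the cycle structure of the ACG; the forward direction is a direct verification and the backward direction carries the real content. For necessity, suppose $\pi^*=\tau$. Then $\mathcal{A}[\pi^*]=\mathcal{A}[\tau]$ is immediate, and I would take $f$ to be the identity bijection matching the adjacency $\langle r(x_i),l(x_{i+1})\rangle$ at position $i$ of $\pi^*$ to the adjacency at the same position of $\tau$ (these agree since $\pi^*=\tau$). A direct check through the four cases of the construction shows that, under this $f$, the blue edge built from $y_i$ lands in Case 1 with $i$ playing the role of the left index and $j=i+1$, so its endpoints are $l(x_i)$ and $r(x_{j-1})=r(x_i)$ — exactly the endpoints of the red edge of $x_i$. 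Hence every red edge forms a $1$-cycle with its partner blue edge.

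For sufficiency, assume $\mathcal{A}[\pi^*]=\mathcal{A}[\tau]$ and that some $f$ makes every cycle of $ACG(\pi^*,\tau,f)$ a $1$-cycle. Since the ACG is a disjoint union of alternating cycles and each is a $1$-cycle, every red edge (occurrence $x_a$ of $\pi^*$) is paired with a unique blue edge (occurrence $y_{\sigma(a)}$ of $\tau$) spanning the same node pair $\{l(x_a),r(x_a)\}$; this defines a permutation $\sigma$ of $\{0,\dots,n+1\}$. Inspecting the construction, a $1$-cycle forces $l(y_{\sigma(a)})=l(x_a)$ and $r(y_{\sigma(a)})=r(x_a)$, so $x_a=y_{\sigma(a)}$ as signed symbols, and moreover $f$ matches the left (resp. right) adjacency of $x_a$ to the left (resp. right) adjacency of $y_{\sigma(a)}$.

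The crux is then a telescoping argument. The right adjacency of $x_a$ and the left adjacency of $x_{a+1}$ are one and the same physical adjacency $\langle r(x_a),l(x_{a+1})\rangle$ of $\pi^*$, so $f$ sends it to a single adjacency of $\tau$. Reading that image off the $1$-cycle of $x_a$ identifies it as the adjacency between $y_{\sigma(a)}$ and $y_{\sigma(a)+1}$; reading it off the $1$-cycle of $x_{a+1}$ identifies it as the adjacency between $y_{\sigma(a+1)-1}$ and $y_{\sigma(a+1)}$. Because a physical adjacency of $\tau$ occupies a unique position, these coincide only if $\sigma(a+1)=\sigma(a)+1$. Since $+r_0$ occurs uniquely (at position $0$) we have $\sigma(0)=0$, and induction gives $\sigma(a)=a$ for every $a$; hence $x_a=y_{\sigma(a)}=y_a$ and $\pi^*=\tau$.

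I expect the main obstacle to be the orientation bookkeeping hidden in the phrase ``reading that image off.'' The four cases of the blue-edge construction correspond to the two ways each of $y_k$'s flanking adjacencies can be oriented relative to its $\pi^*$-partner, and I must verify that the $1$-cycle condition is compatible only with the orientation-preserving case at each occurrence, so that the matched $\tau$-adjacency is genuinely read forward and the increment is $+1$ rather than $-1$. The fixed endpoints $x_0=y_0=+r_0$ and $x_{n+1}=y_{n+1}=-r_0$ pin the global orientation and rule out a wholesale reversal of $\sigma$; propagating this local sign consistency through the four cases is the only delicate point, the remainder being the clean combinatorial telescoping described above.
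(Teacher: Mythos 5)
Your proposal is correct and takes essentially the same route as the paper: the forward direction via the positional bijection, and the backward direction via an induction/telescoping along the chromosome anchored at $+r_0$, using the fact that $f$ sends the shared adjacency of consecutive occurrences $x_a,x_{a+1}$ to a single physical adjacency of $\tau$, forcing $\sigma(a+1)=\sigma(a)+1$. The orientation subtlety you flag is genuine (a $1$-cycle alone does not force sign agreement, e.g.\ $+r$ in $\pi^*$ can form a $1$-cycle with $-r$ in $\tau$; the paper's own proof silently glosses over this), but it resolves exactly as you anticipate: once $x_a$ is paired orientation-preservingly with $y_{\sigma(a)}$, an orientation-reversing $1$-cycle at $x_{a+1}$ would require the blue edge of $y_{\sigma(a)}$ to lie in two distinct $1$-cycles, a contradiction, so sign consistency propagates from position $0$.
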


\begin{proof}
Assume that $\pi^*=[z_{0}, z_{1}, \dots, z_{n}, z_{n+1}]$ and $\tau=[y_{0}, y_{1}, \dots, y_{n}, y_{n+1}]$, where $z_{0}=y_{0}=+r_0$ and $z_{n+1}=y_{n+1}=-r_0$.

The sufficiency part is surely true, since we have $z_{i}=z_{i}$ and bijection $f$
is to match $\langle r(z_{i-1}),l(z_i)\rangle$ with $\langle r(y_{i-1}),l(y_i)\rangle$.

Now we prove the necessity part inductively.
Our inductive hypothesis is that $z_{i}=y_{i}$ for $0\leq i\leq n+1$. Initially, we have $l(z_{0})=l(y_{0})=r_0^{h}$ and
$r(z_{0})=r(y_{0})=r_0^{t}$, as $\mathcal{A}[\pi^*]=\mathcal{A}[\tau]$, we always assume that $z_{1}=y_{1}$, since otherwise, we can reverse the whole chromosome $\tau$.
For the inductive step, as $l(z_{i})$ and $r(z_{i})$ form a 1-cycle, so then adjacency $\langle r(z_{i}), l(z_{i+1})\rangle$ is matched to $\langle r(y_{i}), l(y_{i+1})\rangle$,
and from the inductive hypothesis that $r(z_{i})=r(y_{i})$, we have, $l(z_{i+1})=l(y_{i+1})$, and thus, $z_{i+1}=y_{i+1}$.
\end{proof}

The above theorem gives us a terminating condition for our algorithm: let $\pi$ and $\tau$ be the input chromosomes,
and our algorithm keeps updating the alternative-cycle graph until all cycles in it become 1-cycles.
unfortunately, in the following, we observe that some cycles can not be performed on symmetric reversals directly, then we consider these cycles intersecting with each other as a connected component. But this is still not enough, since there could also be some connected components which do not admit any symmetric reversal, we managed to handle this case by joining all the cycles of the same repeat into a whole connected component. 

\begin{lemma}
\label{sameelement}
In an  alternative-cycle graph, each cycle corresponds to a unique repeat and  every edge (both red and blue)    in the cycle  corresponds to  an  occurrence of the unique repeat.
\end{lemma}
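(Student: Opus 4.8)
The plan is to exploit the node labelling that underlies $ACG(\pi,\tau,f)$: every node carries a label of the form $|z|^h$ or $|z|^t$, so we may speak of the \emph{base repeat} $|z|$ of a node. The whole lemma then reduces to a single local claim, namely that \emph{every edge of $ACG(\pi,\tau,f)$, red or blue, joins two nodes that share the same base repeat}. Once this claim is established, the global statement follows immediately from connectivity, since a cycle is a closed walk in which consecutive edges share a node.

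First I would dispose of the red edges. By construction a red edge is created for each occurrence $x_i$ of $\pi$ and joins $l(x_i)$ and $r(x_i)$; since $\{l(x_i),r(x_i)\}=\{|x_i|^h,|x_i|^t\}$, both of its endpoints have base repeat $|x_i|$, and the edge is naturally labelled by the occurrence $x_i$ of $|x_i|$ in $\pi$. For the blue edges I would go through the four cases of the construction. A blue edge is created for an occurrence $y_k$ of $\tau$ out of the matched images of its two flanking adjacencies $\langle r(y_{k-1}),l(y_k)\rangle$ and $\langle r(y_k),l(y_{k+1})\rangle$; inspecting each of the four cases shows that its two endpoints are precisely the nodes $l(y_k)$ and $r(y_k)$. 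For instance, in case~1 the endpoints are $l(x_i)$ and $r(x_{j-1})$, and the two matchings give $l(x_i)=l(y_k)$ and $r(x_{j-1})=r(y_k)$; the remaining cases are handled analogously. Hence the two endpoints lie in $\{|y_k|^h,|y_k|^t\}$, both have base repeat $|y_k|$, and the edge is labelled by the occurrence $y_k$ of $|y_k|$ in $\tau$. This establishes the local claim.

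Finally I would assemble the cycle argument. Fix a cycle $C$ and pick any edge on it; it determines a base repeat $x$ shared by both of its endpoints. Walking along $C$, each successive edge shares an endpoint with the previous one, and by the local claim its other endpoint carries the same base repeat; inducting around the whole cycle shows that every node of $C$ has base repeat $x$ and that every edge of $C$ joins the two nodes $|x|^h$ and $|x|^t$. Consequently $C$ corresponds to the unique repeat $x$: each red edge of $C$ is an occurrence of $x$ in $\pi$ and each blue edge of $C$ is an occurrence of $x$ in $\tau$. The only nontrivial step, hence the main obstacle, is the blue-edge verification: one must check that in each of the four construction cases the blue edge genuinely connects $l(y_k)$ to $r(y_k)$ (equivalently, that the adjacency matching forces both endpoints to carry the label $|y_k|$). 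The red-edge case and the connectivity wrap-up are routine.
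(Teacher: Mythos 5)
Your proposal is correct and follows essentially the same route as the paper: the key step in both is that, by the construction of $ACG(\pi,\tau,f)$, every blue edge has endpoints exactly $\{l(y_k),r(y_k)\}$ for some occurrence $y_k$ in $\tau$, so both endpoints carry the same repeat label (the red-edge case being immediate), and the cycle-wide conclusion follows by walking along shared endpoints. The paper states this more tersely (treating one blue-edge case ``w.l.o.g.'' and leaving the connectivity induction implicit), while you spell out all four cases and the induction, but the substance is identical.
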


\begin{proof}
W.l.o.g, assume that $l(x_{i})$ are $l(x_{j})$ connected with a blue edge, from the construction of the alternative-cycle graph,
there must be an occurrence in $\tau$, say $y_{k}$, such that $\{l(x_{i}), l(x_{j}\}=\{l(y_{k}), r(y_{k})\}$,
thus, $|x_{i}|=|x_{j}|=|y_{k}|$, and the blue edge $(l(x_{i}), l(x_{j}))$ corresponds the occurrence $y_{k}$ of the repeat $|y_{k}|$.
\end{proof}

Since each gene appears once in $\pi$, Lemma~\ref{sameelement} implies that each gene has a 1-cycle in $ACG(\pi,\tau,f)$, these 1-cycles will be untouched throughout our algorithm.
\begin{lemma}
\label{a path}
In an alternative-cycle graph, if we add a {\it green}  edge  connecting each pairs of nodes $r(x_{i})$ and $l(x_{i+1})$ (for all $0\leq i\leq n$), then all the blue edges and green edges together form a (blue and green  alternative) path. (See Figure 6.)
\end{lemma}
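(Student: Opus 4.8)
The plan is to show that the blue--green subgraph $H$ of $ACG(\pi,\tau,f)$ is precisely the alternating path obtained by reading $\tau$ from left to right, turning each occurrence $y_k$ into its blue edge and each adjacency of $\tau$ into the green edge matched to it through $f$. I first record the relevant incidence counts. The blue edges are in bijection with the $n+2$ occurrences $y_0,\dots,y_{n+1}$ of $\tau$, and by construction every node lies on exactly one blue edge, so the blue edges form a perfect matching on the $2(n+2)$ nodes. The green edges are the adjacencies $\langle r(x_i),l(x_{i+1})\rangle$ of $\pi$, one for each $0\le i\le n$; thus every node lies on exactly one green edge except $l(x_0)$ and $r(x_{n+1})$, which lie on none. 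Consequently $H$ has maximum degree $2$, and $l(x_0),r(x_{n+1})$ are its only degree-$1$ nodes.

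Next I would trace the walk. Write $b_k$ for the blue edge of $y_k$ and $g_k$ for the green edge matched by $f$ to the $\tau$-adjacency $\langle r(y_k),l(y_{k+1})\rangle$ (the $\pi$-adjacency identified with it). The decisive local fact is that $b_k$ and $g_k$ share the node corresponding to $r(y_k)$, while $g_k$ and $b_{k+1}$ share the node corresponding to $l(y_{k+1})$. This is read off directly from the four-case definition of the blue edge: the endpoint of $b_k$ on the ``$r(y_k)$ side'' is, in each of the four cases, exactly the $\pi$-node at which $r(y_k)$ is matched, and that node is by definition one endpoint of $g_k$; symmetrically for the $l(y_{k+1})$ side shared by $g_k$ and $b_{k+1}$. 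Hence $b_0,g_0,b_1,g_1,\dots,g_n,b_{n+1}$ is a blue--green alternating walk.

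Finally I would close the argument by counting. This walk lists all $n+2$ blue edges (distinct occurrences give distinct blue edges) and all $n+1$ green edges (since $f$ is a bijection, distinct $\tau$-adjacencies map to distinct $\pi$-adjacencies, hence distinct green edges), so it is a trail through every edge of $H$. Because $H$ has maximum degree $2$, any trail in it is automatically a simple path; and since the blue edges already meet every node, this path is spanning, so $H$ is connected and equals this single alternating path, whose endpoints are the two degree-$1$ nodes $l(x_0)$ and $r(x_{n+1})$, matching the two ends $l(y_0)$ and $r(y_{n+1})$ of $\tau$.

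I expect the only real work to be the local node-sharing step: one must verify, across all four cases of the blue-edge construction, that the blue edge of $y_k$ attaches exactly at the green-edge endpoints labelled $r(y_k)$ and (for $b_k$'s other end) $l(y_k)$, while keeping the per-occurrence node identifications (such as $l(y_k)$ versus $r(x_{i-1})$ or $l(x_i)$) straight. Once this bookkeeping is settled, the path structure and the identification of its endpoints follow immediately from the degree count.
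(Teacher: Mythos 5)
Your proof is correct and takes essentially the same route as the paper's: the core step in both is that, under the bijection $f$, each green edge (a $\pi$-adjacency) has its two endpoints on the blue edges of two consecutive occurrences of $\tau$, which yields the alternating walk $b_0,g_0,b_1,\dots,g_n,b_{n+1}$. Your closing argument (blue edges form a perfect matching, green edges miss only $l(x_0)$ and $r(x_{n+1})$, so the resulting trail of maximum degree $2$ with distinct endpoints is a spanning simple path) is in fact more rigorous than the paper's, which simply asserts the path structure once the local node-sharing fact is established.
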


\begin{proof}
Actually, the green edge connecting $r(x_{i})$ and $l(x_{i+1})$ ($0\leq i\leq n$) is the adjacency $\langle r(x_{i}), l(x_{i+1})\rangle$ of $\mathcal{A}[\pi]$, which is identical to some adjacency $\langle r(y_{j}), l(y_{j+1})\rangle$ of $\mathcal{A}[\tau]$ according to the bijection between identical adjacencies of $\mathcal{A}[\pi]$ and $\mathcal{A}[\tau]$.
Therefore, $y_{j}$ and $y_{j+1}$ appears consecutively in $\tau$,
and following the construction of $ACG(\pi,\tau, f)$ and Lemma~\ref{sameelement}, they correspond to the two blue edges, one of which is associated with $r(x_{i})$ and the other is associated with $l(x_{i+1})$ in $ACG(\pi,\tau, f)$, thus, the two blue edges are connected through the green edge $(r(x_{i}), l(x_{i+1}))$.
The above argument holds for every green edge, therefore, all the blue edges and green edges constitute a path. We show an example in Figure~\ref{acg}.
\end{proof}

Let $x\in \Sigma$ be a repeat.
Let $x_i$ and $x_{j}$ be two occurrences of $x$ in $\pi$, where $i\neq j$.
A blue edge is \emph{opposite} if it connects $l(x_{i})$ and $l(x_{j})$ or $r(x_{i})$ and $r(x_{j})$.
A blue edge is \emph{non-opposite} if it connects $l(x_{i})$ and  $r(x_{j})$ or $r(x_{i})$ and $l(x_{j})$.

Specially, the blue on any 1-cycle (with a blue edge  and a red edge) is non-opposite.
A cycle is \emph{opposite} if it contains at least one opposite  blue edge.

\begin{lemma}
\label{samesign}
Let $x_i$ and $x_{j}$ be two occurrences of repeat $x$ in $\pi$. In the alternative-cycle graph $ACG(\pi,\tau,f)$,
if $l(x_{i})$ and $l(x_{j})$ ( or $r(x_{i})$ and $r(x_{j})$) are connected with an opposite edge, $x_i$ and $x_j$ has different orientations;
and if $l(x_{i})$ and $r(x_{j})$ ( or $r(x_{i})$ and $l(x_{j})$) are connected with a non-opposite edge, $x_i$ and $x_j$ has the same orientations.
\end{lemma}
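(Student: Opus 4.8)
The plan is to reduce the whole statement to a bookkeeping fact about the two node-labels $|x|^h$ and $|x|^t$ attached to a repeat, together with the orientation convention fixed in the preliminaries: recall that $l(x_i)=|x|^h,\ r(x_i)=|x|^t$ when $x_i=+|x|$, while $l(x_i)=|x|^t,\ r(x_i)=|x|^h$ when $x_i=-|x|$. Thus, for a single occurrence, the label of either of its two nodes determines its orientation, and two occurrences $x_i,x_j$ of $x$ agree in orientation if and only if $l(x_i)$ and $l(x_j)$ carry the same label (equivalently, $r(x_i)$ and $r(x_j)$ carry the same label). The entire lemma then follows from a single structural claim: \emph{every blue edge of $ACG(\pi,\tau,f)$ joins a node labelled $|x|^h$ to a node labelled $|x|^t$}, where $|x|$ is the repeat the blue edge corresponds to by Lemma~\ref{sameelement}.

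First I would establish this claim directly from the construction of the blue edges. A blue edge arises from some occurrence $y_k$ of $\tau$, and by Lemma~\ref{sameelement} we have $|y_k|=|x|$. In the construction, the left adjacency $\langle r(y_{k-1}),l(y_k)\rangle$ of $y_k$ is matched by $f$ to some $\langle r(x_{i-1}),l(x_i)\rangle$ and its right adjacency $\langle r(y_k),l(y_{k+1})\rangle$ to some $\langle r(x_{j-1}),l(x_j)\rangle$; the blue edge then joins the $\pi$-node whose label equals that of $l(y_k)$ to the $\pi$-node whose label equals that of $r(y_k)$, which is exactly what the four construction cases record. Since $f$ matches identical adjacencies, matched endpoints carry equal labels, so the two blue endpoints carry precisely the labels of $l(y_k)$ and of $r(y_k)$. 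Because $y_k$ is one occurrence of $|x|$, we have $\{\,\text{label of }l(y_k),\ \text{label of }r(y_k)\,\}=\{|x|^h,|x|^t\}$, which is the claim.

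With the claim in hand I would finish by the two-case split of the statement. If the edge is opposite, say it joins $l(x_i)$ and $l(x_j)$ (the case joining $r(x_i)$ and $r(x_j)$ is symmetric), then by the claim these two nodes carry the distinct labels $|x|^h$ and $|x|^t$; by the orientation convention, $l(x_i)$ and $l(x_j)$ having different labels forces $x_i$ and $x_j$ to have different orientations. If the edge is non-opposite, say it joins $l(x_i)$ and $r(x_j)$, the claim again gives these nodes the two distinct labels $\{|x|^h,|x|^t\}$; checking the two possibilities $l(x_i)=|x|^h$ (so $x_i=+|x|$, forcing $r(x_j)=|x|^t$, i.e. $x_j=+|x|$) and $l(x_i)=|x|^t$ (so $x_i=-|x|$, forcing $r(x_j)=|x|^h$, i.e. $x_j=-|x|$) shows that $x_i$ and $x_j$ always share the same orientation.

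The only genuinely delicate point, and the step I expect to demand the most care, is the structural claim itself: one must keep in mind that $f$ matches \emph{unordered} adjacencies, so it guarantees only equality of the multiset of endpoint labels, and then verify across all four construction cases that the two chosen blue endpoints really do inherit the labels of $l(y_k)$ and $r(y_k)$ separately (rather than both inheriting the same label). Once that is pinned down, everything else is a mechanical translation between node labels and orientations.
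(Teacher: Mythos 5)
Your proof is correct and takes essentially the same route as the paper's: both identify the blue edge with an occurrence $y_k$ of $x$ in $\tau$, so that its two endpoints carry the distinct labels $\{|x|^h,|x|^t\}$ of $l(y_k)$ and $r(y_k)$, and then read off the orientations of $x_i$ and $x_j$ from the sign convention. Your reorganization---factoring out the claim that every blue edge joins an $|x|^h$-labelled node to an $|x|^t$-labelled node before the opposite/non-opposite case split---is only a cosmetic repackaging of the paper's direct case analysis relative to $y_k$.
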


\begin{proof}
We conduct the proof according to the construction of the alternative-cycle graph $ACG(\pi,\tau,f)$.

If $l(x_{i})$ and $l(x_{j})$ are connected with an opposite edge, there must be an occurrence
of $x$ in $\tau$, say $y_{k}$, such that $\{l(x_{i}), l(x_{j})\}=\{l(y_{k}), r(y_{k})\}$.
If $l(x_{i})=l(y_{k})$ and $l(x_{j})=r(y_{k})$,
then the orientations of $x_{i}$ and $y_{k}$ are the same, and the orientations of $x_{j}$ and $y_{k}$ are different, thus $x_i$ and $x_j$ have different orientations.
If $l(x_{i})=r(y_{k})$ and $l(x_{j})=l(y_{k})$, then the orientations of $x_{i}$ and $y_{k}$ are different, and the orientations of $x_{j}$ and $y_{k}$ are the same, also $x_i$ and $x_j$ have different orientations.
It will be similar when $r(x_{i})$ and $r(x_{j})$ are connected with an opposite edge.

If $l(x_{i})$ and $r(x_{j})$ are connected with a non-opposite edge, there must be an occurrence
of $x$ in $\tau$, say $y_{k}$, such that $\{l(x_{i}), r(x_{j}\}=\{l(y_{k}), r(y_{k})\}$.
If $l(x_{i})=l(y_{k})$ and $r(x_{j})=r(y_{k})$,
then both $x_{i}$ and $x_{j}$ have the same orientations as $y_{k}$, thus $x_i$ and $x_j$ have the same orientations.
If $l(x_{i})=r(y_{k})$ and $r(x_{j})=l(y_{k})$, then both $x_{i}$ and $x_{j}$ have different orientations with $y_{k}$, thus $x_i$ and $x_j$ have the same orientations.
It will be similar when $r(x_{i})$ and $l(x_{j})$ are connected with an opposite edge.
\end{proof}

\begin{proposition}
\label{joinorsolitclycle}
Given a $k$-cycle $C$ of $x$, performing a symmetric reversal on two occurrences of $x$ that are connected by an opposite blue edge, will break $C$ into a $(k-1)$-cycle as well as a 1-cycle. Given a $k_{1}$-cycle $C_{1}$ and a $k_{2}$-cycle $C_{2}$ of $x$, performing a symmetric reversal on the two occurrences of $x_{i}\in C_{1}$ and $x_{j}\in C_{2}$,
will join $C_{1}$ and $C_{2}$ into a $(k_{1}+k_{2})$-cycle.
\end{proposition}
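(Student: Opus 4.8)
The plan is to reduce both parts of the proposition to a single structural fact about how one symmetric reversal acts on the alternative-cycle graph. Let $x_i,x_j$ ($i<j$) be the two occurrences of $x$ on which the reversal $\rho(i,j)$ is performed. A symmetric reversal requires $x_i=-x_j$; in the split case this is guaranteed because, by Lemma~\ref{samesign}, an opposite blue edge between $x_i$ and $x_j$ forces them to have different orientations, so $\rho(i,j)$ is admissible (in the join case the fact that the reversal is performed is part of the hypothesis). The fact I want to isolate is: passing from $ACG(\pi,\tau,f)$ to $ACG(\pi',\tau,f)$ leaves every red edge, and every blue edge except two, exactly where it was, and the \emph{only} change is that the blue edges incident to the two boundary nodes $l(x_i)$ and $r(x_j)$ of the reversed segment exchange those endpoints.

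To establish this swap I would argue from Lemma~\ref{preserve}: the reversal preserves the adjacency multiset and recreates the two boundary adjacencies $\langle r(x_{i-1}),l(x_i)\rangle$ and $\langle r(x_j),l(x_{j+1})\rangle$, now realized between the occurrence pairs $(x_{i-1},-x_j)$ and $(-x_i,x_{j+1})$ respectively. Since a blue-edge endpoint is pinned to a $\pi$-node only through the adjacency that $f$ matches to it, every node identification except those at the two re-realized boundary adjacencies is untouched; and because $l(-x_j)=r(x_j)$ and $r(-x_i)=l(x_i)$, the $\tau$-endpoint formerly identified with $l(x_i)$ is now identified with $r(x_j)$ and vice versa. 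Hence exactly the two blue edges at $l(x_i)$ and $r(x_j)$ trade these endpoints, while all other blue edges and all red edges are unchanged. I expect this to be the main obstacle, since it requires checking, through the four-case construction of the blue edges, that the interior adjacencies (in particular those at $r(x_i)$ and $l(x_j)$) are re-realized between the same occurrences and so keep their blue edges fixed. The swap keeps each node incident to exactly one red and one blue edge, so the result is again a disjoint union of alternating cycles and I may trace them freely.

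For the split case the opposite blue edge $b$ is incident to exactly one boundary node: if $b=(l(x_i),l(x_j))$ it carries the boundary node $l(x_i)$, and the swap moves that endpoint to $r(x_j)$, so afterwards $b$ joins $l(x_j)$ and $r(x_j)$, the two endpoints of the red edge of $x_j$; thus $b$ and that red edge form a $1$-cycle. Symmetrically, if $b=(r(x_i),r(x_j))$ the moved endpoint is $r(x_j)\mapsto l(x_i)$ and $b$ closes up with the red edge of $x_i$ into a $1$-cycle. In either case one red edge together with $b$ is split off as a $1$-cycle, while the remaining $k-1$ red edges and $k-1$ blue edges of $C$, re-linked through the second swapped blue edge now sitting at the other boundary node, form a single $(k-1)$-cycle.

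For the join case $x_i$ and $x_j$ lie in distinct cycles $C_1$ and $C_2$ (both of $x$ by Lemma~\ref{sameelement}), so the blue edge at the boundary node $l(x_i)$ belongs to $C_1$ and the blue edge at $r(x_j)$ belongs to $C_2$, and they are distinct. After the swap these two blue edges cross between the cycles, and tracing the cyclic walk $l(x_i)\to r(x_i)\to\cdots$ shows that one traverses all of $C_1$, crosses into $C_2$ through the first re-routed blue edge, traverses all of $C_2$, and returns through the second; the merged object contains all $k_1+k_2$ red edges and all $k_1+k_2$ blue edges of $C_1\cup C_2$ and hence is a single $(k_1+k_2)$-cycle. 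To finish, I would verify the alternation of red and blue edges and the edge counts in both cases.
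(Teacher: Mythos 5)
Your proof is correct, but there is nothing in the paper to compare it against: Proposition~\ref{joinorsolitclycle} is stated with no proof at all, presented as an immediate consequence of the construction of $ACG(\pi,\tau,f)$, so your write-up fills a genuine gap rather than retracing the authors' argument. The invariant you isolate is exactly the right formalization of what the paper leaves implicit: identifying each occurrence's two nodes across the reversal (inside the reversed segment the roles of $l(\cdot)$ and $r(\cdot)$ swap), every red edge and every interior adjacency is fixed as a node pair, and only the two boundary adjacencies are re-realized, so precisely the two blue edges at $l(x_{i})$ and $r(x_{j})$ exchange those endpoints; this does follow from Lemma~\ref{preserve} together with $l(-x_{j})=r(x_{j})$ and $r(-x_{i})=l(x_{i})$, as you argue. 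Your case analysis is then complete and the edge counts come out right: in the split case the opposite edge is incident to exactly one of the two boundary nodes (its endpoints are both left nodes or both right nodes, while the boundary pair consists of one left and one right node), so after the swap it closes with one red edge of $x$ into a $1$-cycle while the other swapped blue edge re-links the remaining $k-1$ red and $k-1$ blue edges into a single cycle; in the join case the two swapped blue edges lie in the disjoint cycles $C_{1}$ and $C_{2}$, so the swap concatenates them into one $(k_{1}+k_{2})$-cycle, and alternation is automatic since every node keeps exactly one red and one blue edge. Invoking Lemma~\ref{samesign} to justify that the reversal in the split case is admissible is also the right move. What your approach buys is a reusable, explicit description of how a symmetric reversal acts on the alternative-cycle graph, which the paper relies on repeatedly (e.g., in the proofs of Theorems~\ref{existgoodreversal} and~\ref{finalcon}) but never writes down.
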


Now, we construct the \emph{blue edge intersection graph} $BG(\pi, \tau,f)=(BV_{\pi}, BE_{\pi}, f)$ according to $ACG(\pi,\tau, f)$, viewing each blue edge as an interval of the two nodes it connects. For each interval, construct an original vertex in $BV_{\pi}$, and set its weight to be 1, set its color to be black if the blue edge is opposite, and white otherwise.
An edge in $BE_{\pi}$ connects two vertices if and only if their corresponding intervals intersect but neither overlaps the other. An example of the blue edge intersection graph is shown in Figure~\ref{final-IG}-($b$).

Note that each connected component of $BG(\pi, \tau,f)$ forms an interval on $\pi$, for each connected component $P$ in $BG(\pi, \tau,f)$, we use $\overline{P}$ to denote its corresponding interval on $\pi$.
\begin{lemma}
\label{includewholered}
Let $P$ be some connected component of $BG(\pi, \tau,f)$, the leftmost endpoint of $\overline{P}$ must be a left node of some $x_{i}$, i.e., $l(x_{i})$,
and the rightmost endpoint of $\overline{P}$ must be a right node of some $x_{j}$, i.e., $r(x_{j})$, where $i<j$.
\end{lemma}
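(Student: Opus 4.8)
The plan is to work with the left-to-right order of the $2(n+2)$ nodes $l(x_0),r(x_0),l(x_1),r(x_1),\dots,l(x_{n+1}),r(x_{n+1})$ on $\pi$, to treat each blue edge as the interval between the two nodes it joins, and to recall that since the blue intervals of a component are tied together by crossings (``intersect but neither contains the other''), their union $\overline{P}$ is a single contiguous interval every node of which is covered by some blue edge of $P$. Write $\overline{P}=[a,b]$ for its leftmost and rightmost nodes. I would prove the statement in two stages: first that $[a,b]$ is \emph{blue-closed}, i.e.\ no blue edge has exactly one endpoint strictly inside it, which pins down the parity relationship between $a$ and $b$; and then that the directed structure of the adjacency (green) edges fixes which of $a,b$ is a left node and which is a right node.

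For the first stage, suppose some blue edge $e=(c,d)$ had its left endpoint outside and its right endpoint inside, $c<a\le d\le b$. Since every node of $[a,b]$ lies under some blue edge of $P$, the node $d$ is covered by a $P$-edge $e'=(c',d')$ with $a\le c'\le d\le d'$. Because distinct blue edges never share a node (each node carries exactly one blue edge), $d\ne c'$ and $d\ne d'$, so $c'<d<d'$; together with $c<a\le c'$ this gives $c<c'<d<d'$, i.e.\ $e$ and $e'$ cross. Then $e$ is adjacent in $BG$ to a vertex of $P$, so $e\in P$, and its endpoint $c<a$ lies in $\overline{P}$, contradicting $a=\min\overline{P}$. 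The symmetric argument rules out a blue edge with only its right endpoint inside $[a,b]$. Consequently every node of $[a,b]$ is matched by its unique blue edge to another node of $[a,b]$, so $[a,b]$ contains an even number of nodes and $a,b$ are of opposite type: exactly one is a left node and the other a right node. I regard this blue-closedness as the main obstacle, since it is where the crossing definition of $BE_\pi$ is genuinely used, and since a purely parity-based count is symmetric under reflecting the line and therefore cannot by itself decide which endpoint is the left node.

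The asymmetry is supplied by the adjacency edges, so in the second stage I would invoke Lemma~\ref{a path}: the blue and green edges together form a single path $\mathcal{P}$ running from $l(x_0)$ to $r(x_{n+1})$. If $[a,b]$ is not all of $\pi$, then $\mathcal{P}$ has nodes both inside and outside $[a,b]$, hence at least one edge of $\mathcal{P}$ joins an inside node to an outside node; by blue-closedness this boundary edge cannot be blue, so it is a green edge $\langle r(x_k),l(x_{k+1})\rangle$. Such an edge joins two consecutive nodes of the line, so it can straddle the left boundary of $[a,b]$ only as $\langle r(x_{i-1}),l(x_i)\rangle$ with $l(x_i)=a$, forcing $a$ to be a \emph{left} node, and it can straddle the right boundary only as $\langle r(x_j),l(x_{j+1})\rangle$ with $r(x_j)=b$, forcing $b$ to be a \emph{right} node. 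Thus at least one of ``$a$ is a left node'' and ``$b$ is a right node'' holds, and combined with the opposite-type conclusion of the first stage, both hold; that is, $a=l(x_i)$ and $b=r(x_j)$. Finally $a<b$ places $l(x_i)$ strictly to the left of $r(x_j)$, whence $i<j$ for every component containing a crossing (the only case with $i=j$ is the degenerate single $1$-cycle, whose blue edge is exactly $l(x_i)$--$r(x_i)$ and crosses nothing). The extreme components are immediate, since $l(x_0)$ is a left node and $r(x_{n+1})$ is a right node by construction.
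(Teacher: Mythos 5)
Your proof is correct, and it rests on the same two pillars as the paper's own proof --- the evenness of the number of nodes spanned by $\overline{P}$ and Lemma~\ref{a path} --- but it executes the second half differently and makes explicit a step the paper only asserts. The paper opens with ``since each blue edge connects two nodes of the interval $\overline{P}$, the number of nodes in $\overline{P}$ must be even,'' i.e.\ it takes blue-closedness for granted; your stage-one crossing argument (an intruding blue edge must cross some edge of $P$, hence lie in $P$, hence have both endpoints in $\overline{P}$) is precisely the justification that sentence needs, so spelling it out is a genuine improvement in rigor. For the second half, the paper argues by contradiction: assuming the endpoints are $r(x_i)$ and $l(x_j)$, it counts $j-i$ green edges and $j-i$ blue edges among the $2(j-i)$ nodes of $\overline{P}$, so every node there has one blue and one green edge, and this $2$-regular subgraph decomposes into cycles, contradicting the path structure of Lemma~\ref{a path}. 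You instead argue directly: any edge of the blue--green path crossing the boundary of $[a,b]$ must be green (by blue-closedness), and a green edge, joining two consecutive nodes, can straddle the left boundary only at a left node $l(x_i)=a$ and the right boundary only at a right node $r(x_j)=b$; such a boundary edge exists unless $\overline{P}$ is the whole chromosome, and your parity conclusion upgrades ``at least one of the two'' to ``both.'' The two finishes buy slightly different things: the paper's cycle-counting contradiction is self-contained once blue-closedness is granted, while your boundary analysis avoids the edge count altogether and, as a bonus, explicitly disposes of the whole-chromosome case and the degenerate $i=j$ component (a lone $1$-cycle edge), both of which the paper's proof leaves implicit.
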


\begin{proof}
Since each blue edge connects two nodes of the interval $\overline{P}$, the number of nodes in $\overline{P}$ must be even. Hence the boundary nodes of an interval can neither be both left nodes nor be both right nodes.

Assume to the contrary that the leftmost endpoint of $\overline{P}$ is $r(x_{i})$ and the rightmost endpoint of $\overline{P}$ is $l(x_{j})$. There are a total of $2(j-i)$ nodes appearing in $\overline{P}$. If we connect each pairs of nodes $r(x_{k})$ and $l(x_{k+1})$ (for all $i\leq k\leq j$) with a green edge, then there are $j-i$ green edges as well as $j-i$ blue edges between the $2(j-i)$ nodes of $\overline{P}$. Since each node is associated with one green edge and one blue edge, these green edges and blue edges form cycles, which is a contradiction to Lemma~\ref{a path}.
\end{proof}

\begin{lemma}
\label{samecycle}
All the vertices in $BG(\pi, \tau,f)$  corresponding to the blue edges on the same  long cycle in $ACG(\pi, \tau,f)$ are in the same connected component of $BG(\pi, \tau,f)$.
\end{lemma}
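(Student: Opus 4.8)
The plan is to reduce the statement to a purely local claim about \emph{consecutive} blue edges along the cycle, and then to cash in the interval structure of the connected components of $BG(\pi,\tau,f)$. Let $C$ be a long cycle, say a $c$-cycle with $c\geq 2$, and write its edges in cyclic order as $b_1,e_1,b_2,e_2,\dots,b_c,e_c$ (returning to $b_1$), where the $b_k$ are the blue edges and the $e_k$ the red edges. By the construction of the alternative-cycle graph, each red edge $e_k$ joins the two nodes $l(x_{i_k})$ and $r(x_{i_k})$ of a single occurrence of the repeat $x$ associated with $C$ (Lemma~\ref{sameelement}); hence $b_k$ and $b_{k+1}$ (indices mod $c$) are precisely the blue edges incident, respectively, to $l(x_{i_k})$ and to $r(x_{i_k})$. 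Since in the linear layout $[l(x_0),r(x_0),l(x_1),\dots]$ these two nodes occupy the \emph{adjacent} positions $2i_k$ and $2i_k+1$, it suffices to prove that $b_k$ and $b_{k+1}$ lie in the same connected component of $BG(\pi,\tau,f)$; transitivity around $C$ then places all of $b_1,\dots,b_c$ in a single component.

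First I would fix such a consecutive pair and let $P$ be the component containing $b_k$, with corresponding interval $\overline P$. By Lemma~\ref{includewholered} the leftmost endpoint of $\overline P$ is a left node $l(x_a)$ and the rightmost is a right node $r(x_d)$, i.e.\ $\overline P$ occupies positions $[\,2a,\,2d+1\,]$. Because $l(x_{i_k})$ (position $2i_k$) is an endpoint of $b_k\in P$, we have $2a\leq 2i_k\leq 2d+1$, so $a\leq i_k\leq d$. The right endpoint $2d+1$ is odd, and $2d+1\geq 2i_k$ forces $2d+1\geq 2i_k+1$; combined with $2i_k+1>2a$, this shows that the node $r(x_{i_k})$ (position $2i_k+1$) also lies inside $\overline P$. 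Now I invoke the fact, recorded just before Lemma~\ref{includewholered}, that each component of $BG(\pi,\tau,f)$ covers its whole interval: every position strictly inside $\overline P$ is an endpoint of a blue edge of $P$. Since the unique blue edge incident to $r(x_{i_k})$ is $b_{k+1}$, I conclude $b_{k+1}\in P$, which is exactly what is needed.

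Walking around $C$ then yields that all of $b_1,\dots,b_c$ belong to the single component $P$, proving the lemma. The heart of the argument is the second paragraph, and the main obstacle is to be precise about \emph{where} a component boundary is allowed to fall: the claim works because the boundary between two distinct components can never lie strictly inside one occurrence, i.e.\ never between $l(x_i)$ and $r(x_i)$. This is exactly where Lemma~\ref{includewholered} is indispensable---since every component interval begins at a left node and ends at a right node, components can only begin or end at the adjacency gaps between some $r(x_j)$ and the following $l(x_{j+1})$, so the two nodes of a single occurrence are never separated. The one point that must be handled with care is using the interval property in its strong form (a component occupies \emph{every} position between its two endpoints, with no internal gap belonging to another component); this is the structural fact stated before Lemma~\ref{includewholered}, whose justification through the single blue--green alternating path of Lemma~\ref{a path} is precisely what excludes a smaller component nesting inside the span of $P$.
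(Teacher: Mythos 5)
Your reduction of the lemma to consecutive blue edges along the cycle is sound: $b_k$ and $b_{k+1}$ share a red edge, so their incident nodes $l(x_{i_k})$ and $r(x_{i_k})$ occupy adjacent positions, and your parity argument via Lemma~\ref{includewholered} correctly shows that if $l(x_{i_k})\in\overline{P}$ then also $r(x_{i_k})\in\overline{P}$. The gap is the ``fact'' you invoke to finish: that every position strictly inside $\overline{P}$ is an endpoint of a blue edge of $P$, i.e.\ that no other component can nest inside $\overline{P}$. This is false, and it is not what the remark before Lemma~\ref{includewholered} says --- that remark merely assigns to each component its spanning interval $\overline{P}$; it does not assert that $P$ covers every node of that interval, and Lemma~\ref{a path} does not exclude nesting. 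Concretely, every $1$-cycle of $ACG(\pi,\tau,f)$ is an isolated vertex of $BG(\pi,\tau,f)$ (its blue edge joins two adjacent nodes, so it crosses nothing), and such $1$-cycles routinely lie strictly inside the span of a long cycle's component: whenever a gene $g$ (genes always form $1$-cycles, by Lemma~\ref{sameelement}) sits between two occurrences of a repeat whose blue edges cross over $g$, the blue edge of $g$ lies strictly inside $\overline{P}$ but is not in $P$. For instance, with blue edges occupying positions $(1,6),(2,9),(5,10),(3,4),(7,8)$ the first three pairwise cross and form a component of span $[1,10]$, while $(3,4)$ and $(7,8)$ are isolated components strictly inside that span; this configuration is consistent with Lemma~\ref{a path}. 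So from ``$r(x_{i_k})\in\overline{P}$'' you cannot conclude ``the blue edge at $r(x_{i_k})$ belongs to $P$'' by your cited fact.

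The step you need is true, but proving it is essentially the content of the lemma and requires an argument you do not supply: (i) no blue edge has exactly one endpoint in the span of a component (because the union of a component's intervals is itself an interval, a boundary-crossing blue edge would properly cross some edge of the component and hence belong to it); (ii) hence the spans of distinct components form a laminar family; and (iii) if the blue edge at $r(x_{i_k})$ belonged to a component $Q\neq P$, then Lemma~\ref{includewholered} applied to $Q$ puts $l(x_{i_k})$ in $\overline{Q}$ as well, and (i) together with connectivity forces $\overline{P}\subseteq\overline{Q}$ and $\overline{Q}\subseteq\overline{P}$, which is impossible for distinct components since the leftmost position of the common span is the endpoint of a unique blue edge. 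The paper sidesteps this local analysis entirely: it assumes some component $P$ contains part but not all of the blue edges of a cycle $C$, chooses $P$ minimal among such components, and follows the red--blue alternating path of $C$ from a node inside $\overline{P}$ to a node outside it; Lemma~\ref{includewholered} guarantees that no red edge crosses the boundary of $\overline{P}$, so some blue edge of $C$ must cross it, contradicting that $P$ is a connected component. Your strategy can be repaired along the lines of (i)--(iii), but as written the key step is unsupported.
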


\begin{proof}
Assume to the contrary that there exist some connected components each containing a part of blue edges of some cycle.
Let $P$ be such a type of connected component that $\overline{P}$ does not overlap any other interval of such type of connected components.
Let $e=(x_{i}^{t}, x_{j}^{h})$ and $e'=(x_{i'}^{t}, x_{j'}^{h})$ be two blue edges on a cycle $C$, such that their corresponding vertices $v\in P$, but $v'\notin P$.
From the way we choose $P$, the two nodes $x_{i'}^{t}$ and $x_{j'}^{h}$ connected by $e'$ in $C$ can not be both inside the interval of $P$.
So they must be both outside the interval of $P$.
Also, the two nodes $x_{i}^{t}$ and $x_{j}^{h}$ connected by $e$ in $C$ can not be both on the boundary of the interval of $P$, since otherwise,
$e$ will not intersect with any other blue edge corresponding to a vertex of $P$.
W.l.o.g, assume that $x_{i}^{t}$ appears inside the interval of $P$.
Since $e$ and $e'$ are on the same cycle $C$, besides $e$, there must be an alternative path from $x_{i}^{t}$ to $x_{j'}^{h}$ on $C$.
From Lemma~\ref{includewholered}, it is impossible that the travel along an alternative path outside the scope of an interval via a red edge,
so the alternative path from $x_{i}^{t}$ to $x_{j'}^{h}$ must contain a blue edge which connects a node inside $\overline{P}$ with a node outside $\overline{P}$,
which contracts that $P$ is a connected component.
\end{proof}

As the two blue edges of a non-opposite 2-cycle do not intersect each other, we have,
\begin{corollary}
\label{atleast3}
A non-opposite 2-cycle can not form a connected component of $BG(\pi, \tau,f)$.
\end{corollary}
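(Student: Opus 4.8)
The plan is to read Corollary~\ref{atleast3} off the geometry of the two blue edges of a non-opposite 2-cycle together with Lemma~\ref{samecycle}. First I would fix a non-opposite 2-cycle $C$. By Lemma~\ref{sameelement}, $C$ corresponds to a single repeat $x$, and since $C$ is a $2$-cycle its two red edges are exactly $(l(x_i), r(x_i))$ and $(l(x_j), r(x_j))$, where $x_i, x_j$ ($i<j$) are the two occurrences of $x$ in $\pi$. The four nodes of $C$ are therefore $l(x_i), r(x_i), l(x_j), r(x_j)$, each lying on exactly one of the two blue edges. Because $C$ is non-opposite, neither blue edge is opposite, so each blue edge joins a left node of one occurrence to a right node of the other; the only perfect matching of the four nodes by cross-occurrence non-opposite edges is $\{(r(x_i), l(x_j)),\, (l(x_i), r(x_j))\}$, and I would check that this pairing, together with the two red edges, does close up into a single alternating $2$-cycle.

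Next I would translate the two blue edges into their intervals on $\pi$. Reading $\pi$ as $[\dots, l(x_i), r(x_i), \dots, l(x_j), r(x_j), \dots]$, the four nodes occur in the order $l(x_i) < r(x_i) < l(x_j) < r(x_j)$. Hence the blue edge $(r(x_i), l(x_j))$ induces the interval $[r(x_i), l(x_j)]$, which is strictly nested inside the interval $[l(x_i), r(x_j)]$ induced by the other blue edge. Since one interval contains the other, the two intervals do not cross, so by the adjacency rule of $BG(\pi,\tau,f)$ (which joins two vertices only when their intervals intersect with neither containing the other) the two vertices corresponding to the blue edges of $C$ are \emph{not} adjacent in $BG(\pi,\tau,f)$.

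Finally I would combine this non-adjacency with Lemma~\ref{samecycle}. A non-opposite $2$-cycle is a long cycle ($c=2$), so its two blue-edge vertices must lie in one and the same connected component of $BG(\pi,\tau,f)$. If these two vertices by themselves formed a connected component, connectivity would force a direct edge between them; but we have just shown that no such edge exists. Consequently the component containing them must contain at least one additional vertex, and the two vertices of a non-opposite $2$-cycle cannot on their own constitute a connected component, which is exactly the claim.

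The only step requiring genuine care is the enumeration that pins down the blue-edge pairing: I must argue that non-oppositeness, together with the facts that each of the four nodes lies on a unique blue edge and that red and blue edges alternate around $C$, leaves $\{(r(x_i), l(x_j)),\, (l(x_i), r(x_j))\}$ as the unique admissible choice (the alternative cross-matching $\{(l(x_i), l(x_j)),\, (r(x_i), r(x_j))\}$ being precisely the opposite $2$-cycle). Once that case check is dispatched, the nesting of intervals is immediate, so I expect no real obstacle beyond this bookkeeping.
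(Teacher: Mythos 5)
Your proof is correct and takes essentially the same route as the paper: the paper derives the corollary directly from Lemma~\ref{samecycle} together with the one-line observation that the two blue edges of a non-opposite 2-cycle do not cross (being nested intervals), which is exactly your argument, only spelled out with the explicit case check on the blue-edge pairing.
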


For each repeat $x$, assume that it constitutes $k$ cycles in $ACG(\pi, \tau, f)$. Let $x_{i_1}$, $x_{i_2}$, $\dots$, $x_{i_k}$ be the $k$ occurrences of $x$ that are in distinct cycles in $ACG(\pi, \tau, f)$, where $1\leq i_1<i_2<\cdots<i_k\leq n$.
We construct $k-1$ \emph {additional vertices} corresponding to the intervals $[r(x_{i_{j}})-\epsilon, l(x_{i_{j+1}})+\epsilon]$ to $BV_{\pi}$ ($1\leq j\leq k-1$), for each such vertex, set its weight to be 1, and set its color to be black if the signs of $x_{i_{j}}$ and $x_{i_{j+1}}$ are distinct, and white otherwise. See the vertex marked with 10 in Figure~\ref{final-IG}-($c$) for an example. Also, there is an edge between two vertices of $BV_{\pi}$ if and only if their corresponding intervals intersect, but none overlaps the other.
The resulting graph is called the \emph{intersection graph} of $\pi$, denoted as $IG(\pi, \tau, f)=(V[\pi], E[\pi])$. An example is shown in Figure~\ref{final-IG}-($c$).
Let $V_{\pi}^w\subseteq V[\pi]$ be the subset of vertices which corresponding to non-opposite blue edges on long cycles in $ACG(\pi,\tau,f)$.  

\begin{figure}[htbp]
  \centering
  \includegraphics[width=0.7\textwidth]{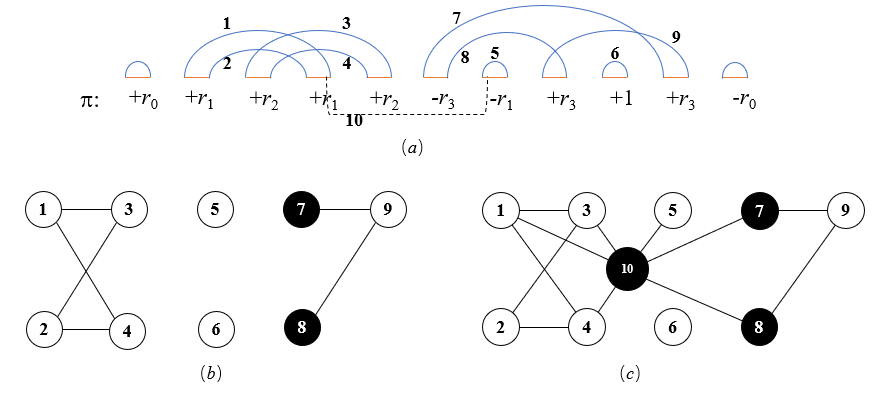}
  \caption{($a$) The alternative-cycle graph $ACG(\pi, \tau, f)$, where each blue edge is marked with a number. ($b$) The blue edge intersection graph $BG(\pi, \tau,f)$. ($c$) The intersection graph $IG(\pi, \tau, f)$ with additional vertices, where each number represents an interval.}
  \label{final-IG}
\end{figure}

From Lemma~\ref{includewholered} and the construction of the intersection graph of $\pi$, all the vertices corresponding to all the blue edges of the same repeat are in the same connected component.
Note that the intersection graph of $\pi$ may be distinct, when the bijection between identical adjacencies of $\mathcal{A}[\pi]$ and $\mathcal{A}[\tau]$ differs.
Nevertheless, we have,
\begin{lemma}
\label{sameCC}
Let $\pi$ and $\tau$ be two related chromosomes with $\mathcal{A}[\pi]=\mathcal{A}[\tau]$.
Let $x_{i}$ and $x_{j}$ ($i<j$) be two occurrences of $x$ in $\pi$, and $x_{i'}$ and $x_{j'}$ ($i'<j'$) be two occurrences of $x'$ in $\pi$, if either $i<i'<j<j'$ or $i'<i<j'<j$ is satisfied,
then, based on any bijection $f$ between $\mathcal{A}[\pi]=\mathcal{A}[\tau]$, in the intersection graph $IG(\pi, \tau, f)$, the vertices corresponding to all the intervals of $x$ and $x'$ are in the same connected component.   
\end{lemma}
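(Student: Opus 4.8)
The plan is to prove the two relevant connected components coincide. Write $P_x$ (resp.\ $P_{x'}$) for the connected component of $IG(\pi,\tau,f)$ containing all intervals of $x$ (resp.\ $x'$); by the remark preceding the lemma (which follows from Lemma~\ref{includewholered} together with the insertion of the additional vertices) all intervals of a single repeat already lie in one component, so it suffices to show $P_x=P_{x'}$. I would first record the geometry of a component. Because an edge of the intersection graph requires the two intervals to genuinely intersect, a spanning walk of any connected component shows that the union of its intervals is a \emph{contiguous} interval $\overline{P}$ of $\pi$, whose left and right endpoints are a left node and a right node by Lemma~\ref{includewholered}. In particular $\overline{P_x}\supseteq[l(x_{\min}),r(x_{\max})]$, where $x_{\min},x_{\max}$ are the leftmost and rightmost occurrences of $x$, since $P_x$ contains the blue edges incident to these extreme $x$-nodes; the analogous statement holds for $\overline{P_{x'}}$. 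I also record that if $P_x\ne P_{x'}$ then, since every interval of any repeat lies in a single component, the two components involve \emph{disjoint} sets of repeats; hence no interval of $P_x$ has an endpoint at an $x'$-node, and conversely.

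Next I would extract the two consequences of the interleaving $i<i'<j<j'$ (the case $i'<i<j'<j$ is symmetric). As $x_{\min}\le i<i'$, the $x'$-node $l(x_{i'})$ lies strictly inside $\overline{P_x}$; and since $l(x'_{\min})\le l(x_{i'})<r(x_j)<r(x_{j'})\le r(x'_{\max})$, the $x$-node $r(x_j)$ lies strictly inside $\overline{P_{x'}}$. Thus each component's span contains in its interior a node of the \emph{other} repeat. Assuming $P_x\ne P_{x'}$ for contradiction, I would then split on whether $P_{x'}$ has a node lying outside $\overline{P_x}$. If it does, $P_{x'}$ has a node strictly inside and a node strictly outside $\overline{P_x}$. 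If it does not, then every node of $P_{x'}$ sits inside $\overline{P_x}$, so $\overline{P_{x'}}\subsetneq\overline{P_x}$; but then an extreme node of $P_x$ lies outside $\overline{P_{x'}}$, while $r(x_j)\in P_x$ lies strictly inside $\overline{P_{x'}}$. Either way we reach the same situation with the roles of $w,w'\in\{x,x'\}$ possibly swapped, so it remains to prove the following Connectivity claim: \emph{if $P_w\ne P_{w'}$ and $P_{w'}$ has a node strictly inside and a node strictly outside $\overline{P_w}$, a contradiction ensues.}

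Establishing this Connectivity claim is the heart of the matter, and the step I expect to be the main obstacle, precisely because interleaving occurrences do \emph{not} force the spans to cross: the spans may \emph{nest}, so one cannot simply argue by span overlap. I would classify each interval of $P_{w'}$ relative to $\overline{P_w}=(L,R)$ as \emph{inside} (both endpoints in $(L,R)$), \emph{outside} (no endpoint in $(L,R)$), or \emph{straddling} (exactly one endpoint in $(L,R)$). A straddling interval $J$ contains a boundary point, say $R$, in its interior. By contiguity of $\overline{P_w}$ the inside endpoint $a$ of $J$ is covered by some interval $I=[s',e']\in P_w$, and $a$ is interior to $I$ (as $a$ is a $w'$-node and $P_w$ holds no $w'$-endpoint), so $s'<a<e'$; since $e'\le R$ and $J$ extends beyond $R$, one checks $I$ and $J$ properly cross, giving an edge between $P_w$ and $P_{w'}$ — a contradiction. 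Hence no straddling interval exists.

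Finally, with no straddling interval, an inside interval $[a,b]\subseteq(L,R)$ can cross only other inside intervals: an outside interval is either disjoint from $(L,R)$ (both endpoints on one side of $\overline{P_w}$) or contains $[L,R]$ (one endpoint on each side), so it is disjoint from or nested with $[a,b]$ and never crosses it. Therefore the inside intervals of $P_{w'}$ form a union of whole components of the overlap graph and cannot be joined to any outside interval within $P_{w'}$. This contradicts the fact that the guaranteed inside node and outside node both belong to the single connected component $P_{w'}$. The contradiction forces $P_w=P_{w'}$, hence $P_x=P_{x'}$, and all intervals of $x$ and $x'$ lie in one connected component of $IG(\pi,\tau,f)$, as claimed.
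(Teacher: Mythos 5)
Your proof is correct and follows essentially the same route as the paper's: both rely on each repeat's intervals lying in a single component whose span is a contiguous interval (Lemma~\ref{includewholered}), observe that the interleaving $i<i'<j<j'$ forces each component's span to contain a node of the other repeat in its interior, and then argue that connectivity would force some interval of one component to cross the boundary of the other's span, yielding a crossing pair and hence an edge merging the two components. Your inside/outside/straddling classification simply makes rigorous the step the paper asserts informally (``among the intervals corresponding to the vertices on this path, there must be one that intersects with $\overline{P'}$''), so the two arguments coincide in substance, with yours supplying the details (endpoint distinctness, the contiguity-based crossing argument) that the paper leaves implicit.
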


\begin{proof}
Assume that $i<i'<j<j'$, since the other case is similar.
From Lemma~\ref{includewholered} and the construction of the intersection graph,
in any intersection graph of $\pi$, the vertices corresponding to the intervals associated with $x_{i}$ and $x_{j}$ are in the same connected component $P$;
also, the vertices corresponding to the intervals associated with $x_{i'}$ and $x_{j'}$ are in the same connected component $P'$.
We show that $\overline{P}$ intersects $\overline{P'}$, and thus $P=P'$.

It is impossible that $\overline{P}$ and $\overline{P'}$ are disjoint, because the interval $[l(x_{i}), \dots, r(x_{j})]$ is a part of $\overline{P}$,
and the interval $[l(x_{i'}), \dots, r(x_{j'})]$ is a part of $\overline{P'}$, but $[l(x_{i}), \dots, r(x_{j})]$ intersects with $[l(x_{i'}), \dots, r(x_{j'})]$.

If $\overline{P}$ overlaps $\overline{P'}$, in the intersection graph $IG(\pi, \tau, f)$,
there exists a path from some interval associated with $x_{j}$ to some interval associated with $x_{i}$,
among the intervals corresponding to the vertices on this path, there must be one that intersects with $\overline{P'}$.

If $\overline{P'}$ overlaps $\overline{P}$, in the intersection graph $IG(\pi, \tau, f)$, there exists a path from some interval associated with $x_{i'}$ to some interval associated with $x_{j'}$,
among the intervals corresponding to the vertices on this path, there must be one that intersects with $\overline{P}$.
\end{proof}

Actually, the connected components of the intersection graph partition the repeats on $\pi$ into groups.
From Lemma~\ref{includewholered} and Lemma~\ref{sameCC}, the group partition of the repeats is independent of the bijection between identical adjacencies of $\mathcal{A}[\pi]$ and $\mathcal{A}[\tau]$. In other words, the group partition will be fixed once $\pi$ and $\tau$ are given.



Similar to the intersection graph of chromosomes with a duplication number of 2,
the intersection graph of chromosomes with unrestricted duplication number also admit the rule-I, rule-II, and rule-III, as in Section 3, while performing a symmetric reversal on $\pi$.

\begin{theorem}
\label{existgoodreversal}
If a connected component of $IG(\pi, \tau, f)$ contains a black vertex, then there exists a symmetric reversal, after performing it, we obtain $\pi'$, 
any newly created connected component containing a white vertex, which corresponds to a blue edge on a non-opposite long cycle in $ACG(\pi', \tau, f)$,  also contains a black vertex.
\end{theorem}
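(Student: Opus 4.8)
The plan is to mirror the structure of the duplication-number-2 proof (Theorem~\ref{existareversal}), but now working with the alternative-cycle graph $ACG(\pi,\tau,f)$ underneath the intersection graph $IG(\pi,\tau,f)$, since the vertices we must "protect" are precisely those corresponding to non-opposite blue edges on \emph{long} cycles. As in Theorem~\ref{existareversal}, I would first dispose of the easy moves: a symmetric reversal on a black vertex of weight $2$ (if any exists in the component) creates no new connected component and keeps that vertex black, so we may assume every black vertex in the component has weight $1$. I would then introduce a potential function $\Delta(x)$ analogous to the one in the weight-$2$ case, but redefined so that $\Delta(x)$ counts the vertices lying in newly created connected components that contain a bad white vertex (one corresponding to a non-opposite blue edge on a long cycle in the \emph{updated} graph $ACG(\pi',\tau,f)$) but no black vertex. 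The goal is to show that the black vertex $x$ minimizing $\Delta(x)$ must have $\Delta(x)=0$, giving the desired reversal.

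The heart of the argument is the same descent: assuming $\Delta(x)>0$ is minimal, I pick a component $C_1$ created by reversing $x$ that contains a bad white vertex but no black vertex, let $x'$ be a neighbor of $x$ in $C_1$ (black of weight $1$ before the reversal), and show $\Delta(x')<\Delta(x)$, a contradiction. The transfer lemmas I would reprove in this setting are the analogues of Lemma~\ref{keepcolor} and Lemma~\ref{keepedge}: for vertices $y,z$ lying in a component $C_i$ ($i\neq j$, where $x'\in C_j$) disjoint from $x'$, reversing $x'$ preserves their colors and their incident edges, so the "good" components $C_{i+1},\dots,C_m$ (those already containing a black vertex) are untouched and $\Delta(x')$ cannot count any of their vertices, giving $\Delta(x')\le\Delta(x)$. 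Then, exactly as in the strict-inequality step, I would use the analogue of Lemma~\ref{cannot1white} to guarantee $x'$ has a further neighbor $x''$ inside $C_1$, and a case analysis on whether $(x,x'')$ is an edge shows $x''$ becomes black after reversing $x'$ and hence is not counted by $\Delta(x')$, yielding $\Delta(x')<\Delta(x)$.

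The genuinely new difficulty, relative to Section~3, is that the notion of "bad white vertex" now depends on the \emph{cycle structure} in $ACG$, which is itself rewired by the reversal according to Proposition~\ref{joinorsolitclycle}: reversing along an opposite edge splits a $k$-cycle into a $(k-1)$-cycle and a $1$-cycle, while reversing across two cycles of the same repeat merges them. So the hard part will be verifying that Lemma~\ref{cannot1white}'s analogue still holds, namely that a single bad white vertex (a non-opposite blue edge on a long cycle) cannot be an isolated connected component. I would prove this by the same recursion-along-$\tau$ argument as in Lemma~\ref{cannot1white}, combined with Lemma~\ref{samecycle} and Corollary~\ref{atleast3}: since all blue edges of one long cycle lie in a common connected component, and a non-opposite $2$-cycle cannot form a component on its own, an isolated white vertex carrying a long-cycle blue edge would force its companion blue edges to lie outside its interval, contradicting Lemma~\ref{includewholered} (one cannot leave the interval via a red edge). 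I would also need Lemma~\ref{samesign} to confirm that the color assignment (black $=$ opposite $=$ oppositely-oriented occurrences) is exactly what makes a symmetric reversal available, so that reversing $x$ and $x'$ are legitimate operations. Once these structural lemmas are in place, the potential-descent argument carries through verbatim, and the minimizing vertex with $\Delta(x)=0$ supplies the required reversal.
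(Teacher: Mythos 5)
Your overall strategy --- a potential-function descent using analogues of Lemma~\ref{keepcolor} and Lemma~\ref{keepedge}, plus Lemma~\ref{samecycle}, Corollary~\ref{atleast3} and Lemma~\ref{includewholered} to guarantee the further neighbor $x''$ exists --- is exactly the paper's, but your potential function is not, and that is where the proof breaks. You define $\Delta(x)$ to count only vertices lying in newly created components that contain a \emph{bad} white vertex (one on a non-opposite long cycle of $ACG(\pi',\tau,f)$) and no black vertex; this copies the Section~3 definition from Theorem~\ref{existareversal}. In Section~3 that definition is safe because of a dichotomy proved there (via Lemma~\ref{hasneighbor} and the reduction to no weight-2 black vertices): every newly created component either contains a black vertex or contains a weight-1 white vertex, so ``bad components'' and ``black-free components'' coincide. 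In the general setting this dichotomy fails: a newly created black-free component may consist entirely of white additional vertices and white vertices whose blue edges lie on 1-cycles --- precisely the middle class $C_{i+1},\dots,C_j$ that the paper's proof singles out. Your monotonicity step $\Delta(x')\le\Delta(x)$ only argues that vertices of the good components (those with a black vertex) are never counted; it does not rule out that vertices of these harmless black-free components, which your $\Delta(x)$ does \emph{not} count, end up after reversing $x'$ in a component that does contain a bad white and no black, and hence inflate $\Delta(x')$. The transfer lemmas preserve colors and edges inside a component $C_k$ avoiding $x'$, but they do not prevent $C_k$ from \emph{merging} with remnants of $C_1$ (rule-II can create edges between $C_k$ and other vertices whenever the relevant pairs lie in $N(x')$), so this scenario is not excluded.

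There is a second, related leak: ``badness'' is defined relative to the cycle structure of the alternative-cycle graph \emph{after} the reversal, and $ACG(\pi',\tau,f)$ and $ACG(\pi'',\tau,f)$ differ, since by Proposition~\ref{joinorsolitclycle} reversing $x$ splits or joins different cycles than reversing $x'$. A white vertex whose blue edge lands on a 1-cycle when you reverse $x$ (hence uncounted by your $\Delta(x)$) may lie on a non-opposite long cycle when you reverse $x'$ instead, again inflating $\Delta(x')$; so your claim that ``the potential-descent argument carries through verbatim'' is not justified. The paper sidesteps both problems by defining $\Delta(x)$ as the number of white vertices in \emph{all} newly created black-free components, regardless of badness; badness is used only to launch the contradiction (some bad component exists) and, via Lemma~\ref{samecycle} and Corollary~\ref{atleast3}, to guarantee that $C_1$ has at least three vertices so that $x''$ exists. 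With that coarser potential, every vertex counted by $\Delta(x')$ is provably among those counted by $\Delta(x)$, while $x''$ is counted by $\Delta(x)$ but not by $\Delta(x')$, and the descent closes. (A minor remark: your preliminary reduction to ``no black vertex of weight 2'' is vacuous here, since in the general construction every vertex of $IG(\pi,\tau,f)$, original or additional, has weight 1.)
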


\begin{proof}
Each vertex in $IG(\pi, \tau, f)=(V[\pi], E[\pi])$ corresponds to an interval that are flanked by two occurrences of the same repeat.
A vertex is black when the two occurrences have different signs,
thus admits a symmetric reversal.
For each black vertex $x$, let $C_{1}, C_{2}, \dots, C_{m}$ be the newly created connected components after performing the symmetric reversal on $x$,
where each of $C_{1}, C_{2}, \dots, C_{i}$ contains a white vertex corresponding to a blue edge on a non-opposite long cycle in $ACG(\pi', \tau, f)$, but does not contain a black vertex, 
each of $C_{i+1}, C_{i+2}, \dots, C_{j}$ contains additional vertices and vertices corresponding to blue edges on 1-cycles in $ACG(\pi', \tau, f)$, but does not black vertex,
and each of $C_{j+1}, C_{j+2}, \dots, C_{m}$ contains a black vertex, $0\leq i\leq j\leq m$.
Let $\Delta(x)$ be the number of white vertices in $C_{1}, C_{2}, \dots, C_{j}$.

We show next that if there exists a connected component, which contains a white vertex corresponding to a blue edge on a non-opposite long cycle in $ACG(\pi', \tau, f)$ , but does not contain a black vertex, then there would be a black vertex $x'$ such that $\Delta(x')<\Delta(x)$.

Assume to the contrary that $x$ is the black vertex with the minimum $\Delta(x)>0$. 
Let $x'$ be the neighbor of $x$ in $C_{1}$. Then, $x'$ is a black vertex prior to performing the symmetric reversal of $x$.
It is sufficient to show that $\Delta(x')<\Delta(x)$, and hence contradicting with that $\Delta(x)$ is minimum.
From Lemma~\ref{keepcolor} and Lemma~\ref{keepedge}, the color of all the vertices in $C_{j+1}, C_{j+2}, \dots, C_{m}$ are preserved, and
all the edges in $C_{j+1}, C_{j+2}, \dots, C_{m}$ are preserved after performing the symmetric reversal of $x'$.
Therefore, $\Delta(x')$ will also not count any vertex in $C_{j+1}, C_{j+2}, \dots, C_{m}$,
consequently, $\Delta(x')\leq\Delta(x)$.

Since $C_{1}$ contains a white vertex corresponding to a blue edge on a non-opposite long cycle, following Lemma~\ref{samecycle} and Corollary~\ref{atleast3},
$C_{1}$ contains at least 3 vertices. Let $N_{1}(x)$ be the neighbor set of $x$ in $C_{1}$, and $x'\in N_{1}(x)$.
There must exist a neighbor of $x'$ in $C_{1}$, say $x''$. 
If $x''\in N_{1}(x)$, then $(x',x'')\notin E[\pi]$, and $x''$ is black prior to performing the symmetric reversal on $x$,
thus, performing a symmetric reversal on $x'$ instead of $x$ will keep $x''$ be black, which implies $\Delta(x')<\Delta(x)$.
On the other side, if $x''\notin N_{1}(x)$, then $(x',x'')\in E[\pi]$, and $x''$ is white prior to performing the symmetric reversal on $x$,
Hence performing a symmetric reversal on $x'$ instead of $x$ will make $x''$ be black, which also implies $\Delta(x')<\Delta(x)$.
Both contract the assumption that $\Delta(x)$ is the minimum.
\end{proof}

\begin{theorem}
\label{finalcon}
A chromosome $\pi$ can be transformed into the other  chromosome $\tau$ by symmetric reversals if and only if (I) $\mathcal{A}[\pi]=\mathcal{A}[\tau]$, and (II) each white  vertex in $V_{\pi} ^w$ belongs to a connected component of $IG(\pi, \tau, f)$ containing  a black vertex.
\end{theorem}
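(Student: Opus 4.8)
The plan is to prove both directions by leveraging the terminating condition from Theorem~\ref{all1-cycle}, which states that $\pi$ equals $\tau$ (up to whole-chromosome reversal) precisely when there exists a bijection making all cycles in the alternative-cycle graph into 1-cycles. The overall strategy is to show that condition (II) is exactly what guarantees we can repeatedly apply symmetric reversals to drive every long cycle down to a collection of 1-cycles, while condition (I) is the necessary adjacency-preservation condition established in Theorem~\ref{necessary}.

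For the necessity direction $(\Rightarrow)$, I would argue by contraposition. Condition (I) follows immediately from Lemma~\ref{preserve} and Theorem~\ref{necessary}. For condition (II), suppose some white vertex $x \in V_\pi^w$ lies in a connected component $C$ of $IG(\pi,\tau,f)$ that contains no black vertex. Since $x$ corresponds to a non-opposite blue edge on a long cycle, by Lemma~\ref{samesign} the relevant occurrences share the same orientation, so no symmetric reversal can be performed on them directly. Because $C$ contains no black vertex, no vertex in $C$ admits a symmetric reversal at all; moreover, performing reversals outside $C$ cannot alter the structure inside $C$ (the group partition is fixed and independent of $f$ by Lemma~\ref{sameCC} and the remark following it). Thus the long cycle carrying $x$ can never be reduced to 1-cycles, so by Theorem~\ref{all1-cycle} we can never reach $\tau$. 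This establishes that both (I) and (II) are necessary.

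For the sufficiency direction $(\Leftarrow)$, I would give a constructive/inductive argument driven by Theorem~\ref{existgoodreversal}. Assuming (I) and (II) hold, every component containing a white vertex of $V_\pi^w$ also contains a black vertex, so by Theorem~\ref{existgoodreversal} there is a symmetric reversal whose application keeps the invariant: after performing it, any newly created component containing such a white vertex still contains a black vertex. Performing this reversal joins or splits cycles per Proposition~\ref{joinorsolitclycle}, and crucially it can be chosen so as to make progress toward all-1-cycles. I would set up an induction on the total number of blue edges lying on long cycles (equivalently, $\sum_C (c_C - 1)$ over all cycles $C$): each carefully chosen reversal strictly decreases this quantity while preserving the invariant of condition (II) on the updated graph $IG(\pi',\tau,f)$. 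When the measure reaches zero, every cycle is a 1-cycle, and Theorem~\ref{all1-cycle} certifies that we have reached $\tau$.

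The main obstacle will be rigorously justifying that the invariant of condition (II) is preserved under the updated intersection graph after each reversal, and that a strictly progress-making reversal always exists within this invariant. Theorem~\ref{existgoodreversal} supplies exactly the preservation of the ``every $V_\pi^w$-white vertex sees a black vertex'' property, but I must also confirm that the chosen reversal genuinely reduces the long-cycle blue-edge count rather than merely shuffling cycles via the joining operation of Proposition~\ref{joinorsolitclycle}. The delicate point is that a reversal on an \emph{opposite} blue edge splits a $k$-cycle into a $(k-1)$-cycle and a 1-cycle (reducing the measure), whereas a reversal joining two cycles increases it; I would therefore argue that black vertices correspond to opposite edges (via Lemma~\ref{samesign}) and that Theorem~\ref{existgoodreversal} always furnishes a splitting reversal on such a vertex, so that progress is guaranteed and the induction closes.
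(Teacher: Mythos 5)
Your necessity direction is essentially the paper's: a component of $IG(\pi,\tau,f)$ consisting entirely of white vertices admits no symmetric reversal, reversals performed elsewhere cannot affect it, so a long cycle carrying a vertex of $V_{\pi}^{w}$ can never be resolved into 1-cycles, and Theorem~\ref{all1-cycle} finishes the argument. The sufficiency direction, however, has a genuine gap, and it is exactly at the point you flag as ``delicate''. Your induction measure --- the number of blue edges on long cycles, i.e.\ $\sum_{C}(c_{C}-1)$ --- does not strictly decrease under the reversals furnished by Theorem~\ref{existgoodreversal}, because your claim that ``black vertices correspond to opposite edges (via Lemma~\ref{samesign})'' is true only for \emph{original} vertices. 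The intersection graph $IG(\pi,\tau,f)$ also contains the \emph{additional} vertices, one for each pair of consecutive occurrences of a repeat lying in distinct cycles of $ACG(\pi,\tau,f)$; such a vertex is black precisely when those two occurrences have distinct signs, and by Proposition~\ref{joinorsolitclycle} a symmetric reversal on it \emph{joins} two cycles into one, which strictly \emph{increases} your measure. These joining reversals cannot be argued away: a repeat may form several long cycles all of whose blue edges are non-opposite (so every original vertex on them is white, e.g.\ two disjoint non-opposite $2$-cycles of the same repeat), in which case the only black vertex Theorem~\ref{existgoodreversal} can hand you is an additional one, and a join must happen before any split becomes available. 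Hence Theorem~\ref{existgoodreversal} does not ``always furnish a splitting reversal'', and your induction does not close.

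The way to close it --- and what the paper's (terse) sufficiency argument implicitly relies on --- is a different termination measure. In the general construction every vertex of $IG(\pi,\tau,f)$, original or additional, has weight $1$, so by rule-III each symmetric reversal deletes the vertex it is performed on; thus the total number of vertices strictly decreases at every step, and the process stops after at most $|V[\pi]|$ reversals regardless of how many joins occur. The invariant maintained by Theorem~\ref{existgoodreversal} guarantees the process cannot get stuck while a long cycle survives: if some long cycle has an opposite blue edge, that edge is a black vertex; if all its blue edges are non-opposite, its component contains a black vertex by the invariant (using Lemma~\ref{samecycle} and Corollary~\ref{atleast3} to know such a cycle contributes white vertices of $V_{\pi}^{w}$). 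So at termination no long cycle remains, every blue edge corresponding to an original vertex lies on a 1-cycle, and Theorem~\ref{all1-cycle} yields that $\pi$ has been transformed into $\tau$. Substituting this vertex-counting argument for your cycle-counting induction repairs the proof; everything else in your proposal stands.
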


\begin{proof}
$(\Rightarrow)$ If there exists a connected component of $IG(\pi, \tau, f)$, say $C$, which contains a  white  vertex in $V_{\pi} ^w$ but does not contain a black vertex, then all the vertices of $C$ are white,
and they do not admit any symmetric reversal. Moreover, there must be a white vertex, say $x$, which corresponds to a blue edge of a non-opposite long cycle in $ACG(\pi,\tau,f)$.
Thus, it is impossible to find a series of symmetric reversals to transform this long cycle into 1-cycles.
According to Theorem~\ref{all1-cycle}, $\pi$ can not be transformed into  $\tau$.

$(\Leftarrow)$ Theorem~\ref{existgoodreversal} guarantees that, each original vertex either can be performed with a symmetric reversal or its corresponding blue edge has been in a 1-cycle.
Finally, all the blue edges which correspond to the original vertices are in 1-cycles; following Theorem~\ref{all1-cycle}, $\pi$ has been transformed into $\tau$.
\end{proof}

Now, we are ready to formally present the decision algorithm based on Theorem \ref{finalcon} for both the general case and the case, where  the  duplication number 2 in Algorithm 2. We just directly test conditions (I) and (II) in Theorem \ref{finalcon}. Note that each connected component in $IG(\pi, \tau,f)$
may contain more than one black vertex.
By setting  $Q = V_{\pi}^b$ in line 11, we can guarantee that each connected component in $IG(\pi, \tau, f)$ is explored once during the breadth first search so that $O(n^2)$ running time can be kept.
\begin{algorithm}[tb]
\caption{The decision algorithm for SSR}
\textbf{Input}: Two related chromosomes $\pi$ and $\tau$.\\
\textbf{Output}: $YES\backslash NO$
\begin{algorithmic}[1] 
\IF {$\mathcal{A}[\pi]\neq \mathcal{A}[\tau]$}
\STATE return NO.
\ENDIF
\STATE Delete the redundant repeats from $\pi$ and $\tau$ when $dp[\pi]=2$.(Lemma~\ref{norepetitive}) 
\STATE Build a bijection $f$ between $\mathcal{A}[\pi]$ and $\mathcal{A}[\tau]$ when $dp[\pi]>2$ by mapping identical adjacencics  
together and adopting any bijection among multiple occurences of the same adjacency.
\STATE Construct the alternative-cycle graph $ACG(\pi, \tau, f)$ based on $f$
when $dp[\pi]>2$.\STATE Construct the corresponding intersection graph $IG(\pi, \tau, f)=(V_{\pi}, E_{\pi})$ (based on $ACG(\pi, \tau, f)$ when $dp[\pi]>2$ and directly from $\pi$ and $\tau$ if $dp[\pi]=2$).
 \IF { $dp[\pi]=2$ }
 \STATE set $V_{\pi}^w$ to be the set of weight 1 white vertices
\ENDIF
 \IF { $dp[\pi]>2$}
 \STATE set $V_{\pi}^w$ to be the set of white vertices corresponding to non-opposite blue edge on long cycle.
\ENDIF
\STATE Let $V_{\pi}^b$ be set of black vertices in $IG(\pi, \tau,f)$.
\STATE Set queue $Q = V_{\pi}^b$.
\STATE  Do a breadth first search using $Q$ as the initial value to mark all vertices in the same component of every $x\in V_{\pi}^b$.
\IF {there exists an $u\in V_{\pi}^w$, which is not marked}
   \RETURN{} NO.
\ENDIF\\
\RETURN{} YES.
\end{algorithmic}
\end{algorithm}

\noindent{\bf Running time of Algorithm 2:}
Let us analyze the time complexity of Algorithm 2.
Verifying whether $\mathcal{A}[\pi]=\mathcal{A}[\tau]$ can be done in $O(n^2)$ time.
It takes $O(n^2)$ time to build an bijection between $\mathcal{A}[\pi]$ and $\mathcal{A}[\tau]$, and construct the cycle graph $ACG(\pi, \tau, f)$, as well as the corresponding intersection graph $IG(\pi, \tau,f)$.
It remains to analyze the size of $IG(\pi, \tau,f)$. For each repeat, say $x$, there are $dp[x,\pi]$ original vertices and $c[x]-1$ additional vertices in $IG[\pi]$, where $c[x]$ is the number of cycles of $x$ in $ACG(\pi,\tau,f]$. Note that $c[x]\leq dp[x,\pi]$ and $\sum_{x\in\sum}dp[x,\pi]=n$. Thus, the total number of vertices in $IG(\pi,\tau,f)$ is bounded by $\sum_{x\in\sum}(dp[x,\pi]+c[x]-1)\leq 2\sum_{x\in\sum}dp[x,\pi]-1=2n-1$, then the number of edges in $IG[\pi]$ is at most $4n^2$.
The whole breadth-search process takes $O(n^2)$ time, since there are at most $2n-1$ vertices and at most $4n^2$ edges in $IG(\pi, \tau,f)$. Therefore, Algorithm 2 runs in $O(n^2)$ time.

\section{Hardness Result for SMSR}
In this section, we show that the optimization problem \emph{SMSR} is NP-hard.
When we initially investigated SMSR on the special case when the input genomes have duplication number 2,
we found that it is somehow related to searching a Steiner set on the intersection graph, which is a type of circle graph with its vertices colored and weighted.
We review the definitions on circle graphs as follows.

\begin{definition}
A graph is a \emph{circle graph} if it is the intersection graph of chords in a circle. 
\end{definition}
\begin{definition}
A graph is an \emph{overlap graph}, if each vertex corresponds to an interval on the real line, and there is an edge between two vertices if and only if their corresponding intervals intersect but one does not contain the other.
\end{definition}
The graph class of overlap graphs is equivalent to circle graphs if we join the two endpoints of the real line into a circle.
Circle graph can be recognized in polynomial time \cite{GHS}.

\begin{definition}{Minimum Steiner Tree Problem: }

\textbf{Instance:}~~A connected graph $G$, a non-empty subset $X\subseteq V(G)$, called terminal set.

\textbf{Task:}~~Find a minimum size subset $S\subseteq V(G)\backslash X$ such that the induced subgraph $G[S\cup X]$ is connected.
\end{definition}
 More than thirty five years ago, Johnson stated that the \emph{Minimum Steiner Tree} problem on \emph{Circle Graphs} was in $P$ \cite{Johnson} and referred to a personal communication as a reference. However, to date, there is no published   polynomial algorithm to solve the problem. Figueiredo et al. revisited Johnson's table recently, and although they still marked the problem as in  $P$, the reference remains ``ongoing'' \cite{FMSS}.
Here as a by-product,  we  clarify that  the \emph{Minimum Steiner Tree} problem on \emph{Circle Graphs} is NP-hard.
The reduction is from \emph{MAX-(3,B2)-SAT}, in which each clause is of size exactly 3 and each variable occurs exactly twice in its positive and twice in the negative form.
\begin{theorem}
\emph{MAX-(3,B2)-SAT} is NP-hard \emph{\cite{Berman}}.
\end{theorem}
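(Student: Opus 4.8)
The plan is to establish NP-hardness of \emph{MAX-(3,B2)-SAT} by a polynomial-time reduction from ordinary \emph{3-SAT} (equivalently from \emph{MAX-E3-SAT}), transferring the hardness through a controlled gap. Starting from an arbitrary 3-CNF formula $\phi$ whose clauses have size three, the obstruction to feeding $\phi$ directly into \emph{MAX-(3,B2)-SAT} is twofold: a variable of $\phi$ may occur many times, and its positive and negative occurrences need not be balanced two-and-two. First I would replace each variable $x$ occurring $d$ times in $\phi$ by $d$ fresh copies $x^1,\dots,x^d$, substituting the $i$-th occurrence of $x$ by the copy $x^i$ with the same sign. This immediately caps each copy's appearance inside the original clauses, but it decouples the copies, so I must reattach them with \emph{consistency (equality) constraints} that in any optimal assignment force $x^1=\cdots=x^d$.

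For the consistency gadget I would connect the copies by an \emph{amplifier} (a bounded-degree expander-like graph), adding, for each edge $(x^a,x^b)$ of the amplifier, the pair of implication constraints realizing $x^a=x^b$, namely the clauses corresponding to $(\neg x^a\vee x^b)$ and $(x^a\vee\neg x^b)$. The point of using an expander rather than a plain cycle is quantitative: the expansion guarantees that any assignment splitting the copies of a single original variable into a ``true'' set and a ``false'' set must violate a number of equality constraints proportional to the size of the smaller side, so cheating on consistency is never beneficial once the amplifier is dense enough. I would then pad each two-literal implication clause up to size exactly three and choose the amplifier's degree so that, after all occurrences (those in the original clauses and those in the gadgets) are tallied, every copy and every auxiliary variable occurs \emph{exactly} twice positively and twice negatively; this is the precise sense in which the output lands in the \emph{(3,B2)} form.

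With the construction in hand, correctness follows from two directions. Any assignment of $\phi$ lifts to an assignment of the constructed instance in which all copies of each variable agree, satisfying every consistency constraint and exactly the clauses corresponding to those satisfied in $\phi$; conversely, from any assignment of the constructed instance I would, by the amplifier's expansion property, argue that the consistent assignment obtained by a majority vote within each group of copies satisfies at least as many clauses, so the optimum is attained by consistent assignments, which biject with assignments of $\phi$. Setting the threshold to the number of gadget clauses plus the target number of clauses of $\phi$ then makes ``there is an assignment satisfying at least $k$ clauses'' equivalent to the corresponding \emph{3-SAT}/\emph{MAX-E3-SAT} question, yielding NP-hardness.

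The hard part will be the simultaneous bookkeeping in the second step: forcing every variable to occur exactly twice in each polarity is a rigid global constraint, and the naive cycle-of-implications gadget either overshoots or unbalances the occurrence counts. The delicate engineering is to design the amplifier together with its padding auxiliary variables so that (i) the expansion is strong enough that inconsistent assignments are strictly suboptimal, and (ii) the per-variable occurrence vector is \emph{exactly} $(2,2)$ everywhere. Reconciling these two requirements is where the real work lies, and it is essentially the combinatorial core of the Berman--Karpinski--Scott argument.
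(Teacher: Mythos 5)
The paper does not actually prove this theorem: it is imported verbatim from Berman--Karpinski--Scott \cite{Berman}, so your proposal is best judged as a reconstruction of their argument, and in spirit (variable copies, consistency gadgets, occurrence balancing) it is the right family of techniques. However, there is a concrete quantitative flaw in the gadget you propose, not mere deferred bookkeeping. If an amplifier edge $(x^a,x^b)$ is realized by the \emph{pair} of implication clauses $(\neg x^a\vee x^b)$ and $(x^a\vee\neg x^b)$, then each incident edge contributes one positive and one negative occurrence to each endpoint, so a copy of amplifier-degree $d$ accumulates $2d$ gadget occurrences plus its one occurrence in an original clause, i.e.\ $2d+1$ occurrences in total. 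The $(3,\mathrm{B}2)$ format allows exactly $4$, forcing $d\le 1$ --- a perfect matching, which is disconnected and has no expansion whatsoever. Even a plain cycle with both implications per edge already gives $5$ occurrences per copy. Padding the $2$-clauses to size $3$ only adds occurrences of fresh variables and cannot reduce these counts. So the ``expander with equality constraints on each edge'' plan cannot be salvaged by delicate engineering; the feasible designs use a \emph{single} implication per directed edge arranged in a consistent cycle (each copy then gets one positive and one negative gadget occurrence plus one original occurrence, leaving exactly one spare slot), and the expansion needed for the MAX/gap version must be injected through additional checker variables and layered amplifier structures --- which is precisely the nontrivial content of the construction in \cite{Berman}.

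It is also worth noting that for the statement as quoted --- plain NP-hardness, which is all that the paper's Steiner-tree reduction in Section 6 uses (it only needs satisfiability of $(3,\mathrm{B}2)$ instances to be NP-hard) --- the expander machinery is unnecessary: reduce from 3-SAT satisfiability and set the threshold to the total number of clauses. At that threshold every consistency clause must be satisfied, so a directed implication cycle through the copies enforces equality outright, and the remaining work is only the occurrence balancing via the spare fourth slot of each copy and small gadgets for the padding variables. Your majority-vote/expansion argument is only needed if one wants the inapproximability result that \cite{Berman} actually proves, and as designed it is both unachievable (by the degree count above) and avoidable (for the claim as stated).
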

Next, we conduct a reduction from \emph{MAX-(3,B2)-SAT} to the \emph{Minimum Steiner Tree} problem on \emph{Circle Graphs}.
\begin{theorem}
\label{inappr3sat}
The \emph{Minimum Steiner Tree} problem on \emph{Circle Graphs} is NP-hard.
\end{theorem}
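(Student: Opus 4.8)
The plan is to reduce the decision version of \emph{MAX-(3,B2)-SAT} to the decision version of the \emph{Minimum Steiner Tree} problem on \emph{circle graphs}: given a formula $\phi$ with $n$ variables and $m$ clauses together with an integer $k$, I would produce in polynomial time a circle graph $G$, a terminal set $X$, and a budget $s$ so that $\phi$ has a truth assignment satisfying at least $k$ clauses if and only if $G$ admits a Steiner set $S$ with $|S|\le s$ and $G[S\cup X]$ connected. Since circle graphs and overlap graphs coincide (as recalled just before the statement), I would construct $G$ directly as an overlap graph by exhibiting intervals on the line, which is by far the most convenient way to control the adjacencies. Using the identity $4n=3m$ forced by the $(3,\mathrm{B2})$ structure, the budget will have the affine form $s=n+(m-k)$, so that a smaller Steiner set corresponds to fewer unsatisfied clauses.

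The construction would use three groups of intervals. First, a \emph{backbone} of mutually overlapping terminal intervals that must all be connected; this is the skeleton the Steiner tree has to span. Second, for each variable $x_i$ a \emph{choice gadget} consisting of a positive interval $P_i$ and a negative interval $N_i$ attached to the backbone, so that it suffices to include one of $P_i,N_i$ to connect the part of the backbone assigned to $x_i$, with $P_i$ encoding $x_i=\mathrm{true}$ and $N_i$ encoding $x_i=\mathrm{false}$. Third, for each clause $C_j$ a \emph{clause terminal} $t_j\in X$ whose interval overlaps precisely the literal intervals of the literals occurring in $C_j$; thus a selected literal interval simultaneously attaches $t_j$ to the tree exactly when the corresponding literal is true. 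Finally, each $t_j$ is equipped with a private \emph{bypass} Steiner interval $z_j$ overlapping the backbone, so that an unsatisfied clause can still be connected at the extra cost of one vertex. A consistent assignment then selects one literal interval per variable, connects every clause terminal it satisfies for free, and pays one bypass vertex per unsatisfied clause, giving Steiner cost exactly $n+(m-k)$ when $k$ clauses are satisfied.

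For the equivalence I would argue both directions. Forward: from an assignment satisfying $\ge k$ clauses, select the $n$ literal intervals matching the assignment together with the $\le m-k$ bypass vertices for the unsatisfied clauses; a short check that these intervals overlap the backbone and the satisfied clause terminals shows $G[S\cup X]$ is connected with $|S|\le s$. Backward: given any Steiner set of size $\le s$, I would first \emph{normalize} it so that it selects at most one of $P_i,N_i$ for every $i$ and no redundant bypass vertices, without increasing its size or breaking connectivity; reading off the polarity chosen at each variable then yields an assignment whose number of unsatisfied clauses is at most the number of bypass vertices used, hence at least $k$ satisfied clauses. I would close by writing down the interval endpoints explicitly to certify that $G$ is genuinely an overlap (circle) graph and that the transformation is polynomial in $n+m$.

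The hard part will be the variable \emph{choice gadget} together with the backward normalization: in pure Steiner-tree minimization nothing a priori forbids selecting \emph{both} $P_i$ and $N_i$, and doing so could connect extra clause terminals and appear to undercut the honest cost, which would destroy the correspondence. The crux is therefore to lay out the backbone and the two literal intervals so that the spanning requirement effectively threads through each variable gadget exactly once, making a single polarity sufficient and a second polarity strictly wasteful, and then to prove in the normalization step that any Steiner set using both polarities (or a superfluous bypass) can be rewritten to a consistent one of no larger size. The second delicate point is \emph{realizability}: I must verify that the intended overlap relations are actually achievable by chords of a circle, so that $G$ is provably a circle graph rather than merely an abstract intersection pattern.
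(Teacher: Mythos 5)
Your high-level strategy (reduce from \emph{MAX-(3,B2)-SAT}, realize the instance as an overlap graph, force a polarity choice per variable, attach clause terminals through literal intervals) is the same as the paper's, but your specific accounting is unsound, and the lemma you yourself identify as the crux is false. With budget $s=n+(m-k)$ and a private bypass vertex per clause, selecting \emph{both} $P_i$ and $N_i$ costs $2$ vertices yet can replace one polarity plus \emph{two} bypasses ($3$ vertices) whenever both clauses containing the opposite literal have no other selected literal. Concretely, suppose some optimal assignment satisfies $k^*$ clauses, sets $x_i$ true, and leaves both clauses containing $\neg x_i$ unsatisfied (such (3,B2) instances exist: it only requires that flipping $x_i$ would lose the two clauses that are satisfied solely by $x_i$). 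Then taking the $n$ honest literal intervals, adding $N_i$, and adding bypasses only for the remaining unsatisfied clauses yields a connected Steiner set of size $n+(m-k^*)-1=n+\bigl(m-(k^*+1)\bigr)$, although no assignment satisfies $k^*+1$ clauses. So the equivalence itself fails at $k=k^*+1$; no normalization argument can repair it, because the cheating solution is strictly smaller than every consistent one. The paper avoids this entirely by reducing from satisfiability (your $k=m$) with a tight budget and no bypass vertices, so each variable gadget is forced to spend exactly its quota and every clause must be reached through a true literal; if you restrict to $k=m$ your first gap disappears, but then the bypass machinery is pointless.

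The second issue you defer --- realizability --- is not a verification detail but a genuine obstruction to your incidence pattern. Two intervals cross (give an overlap edge) iff each contains exactly one endpoint of the other, so an interval can only be crossed by intervals containing one of its two endpoints. Hence a clause interval $t_j$ cannot cross three literal intervals confined to three pairwise disjoint variable regions: its two endpoints meet at most two of those regions. Symmetrically, a single interval $P_i$ with one endpoint anchored in its variable gadget has only one free endpoint and therefore can cross at most one of the two (disjoint) clause intervals in which $x_i$ occurs. Even the remaining trick --- making the two clause intervals overlap and ending $P_i$ in the part of their intersection avoided by all other clause intervals --- fails in general: three pairwise-intersecting intervals share a common point, and a short endpoint-ordering argument shows their three pairwise-exclusive regions cannot all be nonempty, so clause-graph triangles (which (3,B2) formulas can contain) are unrealizable this way. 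This geometry is precisely why the paper introduces one long connector interval per \emph{occurrence} (the $G[i]^j$, each with one endpoint in the variable region and one inside the clause interval), a per-occurrence anchor terminal ($f[i]^j$) forcing it to be bought, and the ladder gadgets ($B[i]$, $b[i]$ with the $P[i]/N[i]$ pairs) whose role is to force the four occurrences of each variable to be resolved consistently --- the consistency that your single $P_i/N_i$ intervals were supposed to provide for free. These per-occurrence gadgets and the consistency-forcing ladders are the missing ideas; without them neither direction of your reduction can be carried out on an actual circle graph.
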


\begin{proof}
Given an instance of \emph{MAX-(3,B2)-SAT} with $n$ variables $\{x_{1}, x_{2}, ..., x_{n}\}$ and $m$ clauses $\{c_{1}, c_{2}, ..., c_{m}\}$,
we construct a \emph{circle graph} denoted by $OLG(3,B2)$ as follows.

\begin{itemize}
\item \emph{Variable Intervals}. For each variable $x_{i}$, let $q_{i}=300(i-1)$. Construct four group of ladder intervals, where $B[i]_{a}^b=[q_{i}+50(a-1)+4(b-1)+1, q_{i}+50(a-1)+4(b-1)+7]$, for $a=1,2,3,4$ and $b=1,2,3,4,5,6$. Let $b[i]=\{B[i]_{a}^{b}|a=1,2,3,4$, and $b=2,5\}$, and $B[i]=\{B[i]_{a}^{b}|a=1,2,3,4$, and $b=1,3,4,6\}$.
    Then construct four intervals: $P[i]_{1}=[q_{i}+12, q_{i}+125]$, which intersects with both $B[i]_{1}^3$ and  $B[i]_{3}^6$; $P[i]_{2}=[q_{i}+62, q_{i}+175]$, which intersects with both $B[i]_{2}^3$ and $B[i]_{4}^6$; $N[i]_{1}=[q_{i}+3, q_{i}+53]$, which intersects with both $B[i]_{1}^1$ and $B[i]_{2}^1$; and $N[i]_{2}=[q_{i}+116, q_{i}+166]$, which intersects with both $B[i]_{3}^4$ and $B[i]_{4}^4$. Let $P[i]=\{P[i]_{1}, P[i]_{2}\}$, and $N[i]=\{N[i]_{1}, N[i]_{2}\}$.
\item \emph{Clause Intervals}. For each clause $c_{a}$ ($1\leq a\leq m$), construct an interval $c_{a}=[300n+3an, 300n+3an+2n+1]$. We still use $\mathbb{C}=\{c_{1}, c_{2}, \dots, c_{m}\}$ to denote the set of intervals constructed by the clauses.
\item \emph{Positive Literal Intervals}.
If $x_{i}$ appears in $c_{j}$ and $c_{k}$($j<k$) as positive literals, construct two intervals $f[i]^{j}=[q_{i}+200, q_{i}+210]$ and $f[i]^{k}=[q_{i}+220, q_{i}+230]$, which are independent with all the previous intervals; and construct $G[i]^j=[q_{i}+209, 300n+3jn+i]$, which intersects with $f[i]^{j}$ and $c_{j}$;
$G[i]^k=[q_{i}+229, 300n+3kn+i]$, which intersects with $f[i]^{k}$ and $c_{k}$.
Construct $D[i]^{j}=[q_{i}+25, q_{i}+201]$, which intersects with $f[i]^{j}$ and $B[i]_{1}^6$, and $D[i]^{k}=[q_{i}+75, q_{i}+221]$, which intersects with $f[i]^{k}$ and $B[i]_{2}^6$.
\item \emph{Negative Literal Intervals}. If $x_{i}$ appears in $c_{j'}$ and $c_{k'}$($j'<k'$) as negative literals, construct two intervals $f[i]^{j'}=[q_{i}+240, q_{i}+250]$ and $f[i]^{k'}=[q_{i}+260, q_{i}+270]$, which are independent with all the previous intervals; and construct $G[i]^{j'}=[q_{i}+249, 300n+3j'n+i]$, which intersects with $f[i]^{j'}$ and $c_{j'}$, and $G[i]^{k'}=[q_{i}+269, 300n+3k'n+i]$, which intersects with $f[i]^{k'}$ and $c_{k'}$.
Construct $D[i]^{j'}=[q_{i}+103, q_{i}+241]$, which intersects with $f[i]^{j'}$ and $B[i]_{1}^6$, and $D[i]^{k'}=[q_{i}+153, q_{i}+261]$, which intersects with $f[i]^{k'}$ and $B[i]_{2}^6$.
Let $f[i]=\{f[i]^{j}, f[i]^{k}, f[i]^{j'}, f[i]^{k'}\}$ and let $D[i]=\{D[i]^{j}, D[i]^{k}, D[i]^{j'}, D[i]^{k'}\}$.
\item \emph{Subtree Intervals}.
For each variable $x_{i}$, construct four intervals $t[i]=\{t[i]_{1}, t[i]_{2}, t[i]_{3}, t[i]_{4} \}$, where $t[i]_{a}=[-(4i+a)+4, q_{i}+50a-1]$ for $a=1,2,3,4$.
Finally, construct two intervals $r=\{r_{1}=[-4n-1, 301n], r_{2}=[-4n-2,0]\}$.
\end{itemize}

Let $b=\cup_{1}^{n}b[i]$, $f=\cup_{1}^{n}f[i]$, $t=\cup_{1}^{n}t[i]$, $\mathbb{B}=\cup_{1}^{n}B[i]$, $\mathbb{P}=\cup_{1}^{n}P[i]$, $\mathbb{N}=\cup_{1}^{n}N[i]$,  $\mathbb{D}=\cup_{1}^{n}D[i]$, $\mathbb{G}=\cup_{1}^{n}G[i]$, $t\cup_{1}^{n}t[i]$.
Let $OLG(3,B2)$ be the corresponding circle graph of all the constructed intervals, define the input terminal set of the \emph{Steiner Tree Problem} as the vertices corresponding to intervals in $b\cup f\cup c\cup t\cup r$, and the vertices corresponding to the intervals in $\mathbb{B}\cup\mathbb{P}\cup \mathbb{N}\cup \mathbb{D}\cup \mathbb{G}$ are candidate Steiner vertices.

It is not hard to observe that the vertices corresponding to the intervals in $t\cup r$ have already formed a subtree,
and all the candidate Steiner vertices are connect to this subtree, thus it does not matter whether they are connected themselves or not.
The terminal vertices corresponding to the intervals in $b\cup f\cup c$ are mutually independent, then they must connect to the subtree through Steiner vertices.

We show that the \emph{MAX-(3,B2)-SAT} instance is satisfiable if and only if the \emph{minimum Steiner tree} problem on $OLG(3,B2)$ has an optimum solution of $14n$ vertices.

For each variable $x_{i}$, assume that $c_{j}$ and $c_{k}$ are the two clauses where $x_{i}$ appears as positive literals,
and $c_{j'}$ and $c_{k'}$ are the two clauses where $x_{i}$ appears as negative literals.

($\Rightarrow$) Assume that there is a truth assignment to the \emph{MAX-(3,B2)-SAT} instance,
we can obtain a solution of $14n$ intervals for the \emph{Minimum Steiner Tree} problem as follows.

If $x_{i}$ is assigned true, by selecting the 6 vertices corresponding to intervals $\{P[i]_{1}$, $P[i]_{2}$, $D[i]_{j'}$, $D[i]_{k'}$, $G[i]_{j}$, $G[i]_{k}\}$,
and 8 vertices corresponding to the ladder intervals $\{B[i]_{1}^3$, $B[i]_{2}^3$, $B[i]_{3}^6$, $B[i]_{4}^6$, $B[i]_{1}^4$, $B[i]_{2}^4$, $B[i]_{3}^1$, $B[i]_{4}^1\}$,
all the 12 terminals vertices corresponding interval in $b\cup f$, as well as the terminals corresponding to $c_{j}$ and $c_{k}$ will connect to the subtree.

If $x_{i}$ is assigned false, by selecting the 6 vertices corresponding to intervals $\{N[i]_{1}$, $N[i]_{2}$, $D[i]_{j}$, $D[i]_{k}$, $G[i]_{j'}$, $G[i]_{k'}\}$,
and 8 vertices corresponding to the ladder intervals $\{B[i]_{1}^1$, $B[i]_{2}^1$, $B[i]_{3}^4$, $B[i]_{4}^4$, $B[i]_{1}^6$, $B[i]_{2}^6$, $B[i]_{3}^3$, $B[i]_{4}^3\}$,
all the 12 terminals vertices corresponding interval in $b\cup f$, as well as the terminals corresponding to $c_{j'}$ and $c_{k'}$ will connect to the subtree.
Therefore, we obtain a Steiner set of size $14n$.

($\Leftarrow$) Assume that the \emph{Minimum Steiner Tree} problem on $OLG(3,B2)$ has a Steiner set of $14n$ vertices, we show that there is a truth assignment to the \emph{MAX-(3,B2)-SAT} instance.

Firstly, for each variable $x_{i}$,
we have, constraint (I): in order to connect the 8 vertices corresponding to intervals in $b[i]$ to the subtree, the Steiner set includes at least 8 vertices corresponding to intervals in $B[i]$; and constraint (II): in order to connect the 4 vertices corresponding to intervals in $f[i]$ to the subtree,
the Steiner set must include at least 4 vertices corresponding to intervals in $D[i]\cup G[i]$.
But any 12 vertices satisfying the above constraints are not enough,
because the four terminals corresponding to the intervals in $\{B[i]_{1}^{2}, B[i]_{2}^{2}, B[i]_{3}^{5}, B[i]_{4}^{5}\}$ are still not connected to the subtree.
For each of them, say $B[i]_{1}^{2}$ for an example, if it must be connected to the subtree through $B[i]_{1}^{3}$ and $P[i]_{1}$, $B[i]_{1}^{1}$ and $N[i]_{1}$, or $B[i]_{1}^{3}$ and $B[i]_{1}^{4}$.
If it chooses $B[i]_{1}^{3}$ and $B[i]_{1}^{4}$, then the terminal corresponding to $B[i]_{1}^{5}$ has to connect to the subtree through $B[i]_{1}^{6}$ and $D[i]^j$,
this implies both neighbors of the vertex corresponding to $B[i]_{1}^{5}$ are selected in the Steiner set.
Similar argument holds for each of $\{B[i]_{1}^{2}, B[i]_{2}^{2}, B[i]_{3}^{5}, B[i]_{4}^{5}\}$.
Thus, to connect the four terminals corresponding to the intervals in $\{B[i]_{1}^{2}, B[i]_{2}^{2}, B[i]_{3}^{5}, B[i]_{4}^{5}\}$ to the subtree,
besides a neighbor of each vertex, each of them also needs an additional vertex, though two of them may share an additional vertex.
Hence the Steiner set must include at least another two vertices, which should either be $P[i]_{1}$ and $P[i]_{2}$, or $N[i]_{1}$ and $N[i]_{2}$.
Therefore, since the Steiner set is of size $14n$, for each variable $x_{i}$($1\leq i\leq n$), it includes exactly 14 vertices corresponding to intervals
in $B[i]\cup P[i]\cup N[i]\cup D[i]\cup G[i]$; moreover, either the vertices corresponds to intervals of $P[i]$ or the vertices corresponds to intervals of $N[i]$ are selected.

If the Steiner set includes $P[i]_{1}$ and $P[i]_{2}$, to connect the two vertices corresponding to $B[i]_{3}^{2}$ and $B[i]_{4}^{2}$ to the subtree,
it also includes $D[i]^{j'}$ and $D[i]^{k'}$, then to connect the two vertices corresponding to $f[i]^{j}$ and $f[i]^{k}$,
the Steiner set could contains one of $D[i]^{j}$ and $G[i]^{j}$, and one of $D[i]^{k}$ and $G[i]^{k}$,
we can revise the Steiner set such that it includes $G[i]^{j}$ and $G[i]^{k}$; subsequently, the vertices corresponding to the intervals $c_{j}$ and $c_{k}$ are connected to the subtree.
In this case, we assign $x_{i}$ true to satisfy $c_{j}$ and $c_{k}$.

If the Steiner set contains $N[i]_{1}$ and $N[i]_{2}$, to connect the two vertices corresponding to $B[i]_{1}^{5}$ and $B[i]_{2}^{5}$ to the subtree,
it also contains $D[i]^{j}$ and $D[i]^{k}$, then to connect the two vertices corresponding to $f[i]^{j'}$ and $f[i]^{k'}$,
the Steiner set could contain one of $D[i]^{j'}$ and $G[i]^{j'}$, and one of $D[i]^{k'}$ and $G[i]^{k'}$.
Hence we can revise the Steiner set such that it contains $G[i]^{j'}$ and $G[i]^{k'}$, subsequently, the vertices corresponding to the intervals $c_{j'}$ and $c_{k'}$ are connected to the subtree.
In this case, we assign $x_{i}$ false to satisfy $c_{j'}$ and $c_{k'}$.
\end{proof}

We take as instance of \emph{MAX-(3,B2)-SAT} as an example, where $\{x_{1}, x_{2}, x_{3}\}$ is the set of the variables and $\{c_{1}=\{x_{1}, x_{2}, x_{3}\}, c_{2}=\{x_{1}, \bar{x}_{2}, \bar{x}_{3}\}, c_{3}=\{\bar{x}_{1}, \bar{x}_{2}, x_{3}\}$, $c_{4}=\{\bar{x}_{1}, x_{2}, \bar{x}_{3}\}\}$ is the set of clauses. The constructed instance and the corresponding circle graph are in Fig.~\ref{intervals} and Fig.~\ref{olg}
respectively.
\begin{figure}[H]
\centering
\includegraphics[width=1.0\textwidth]{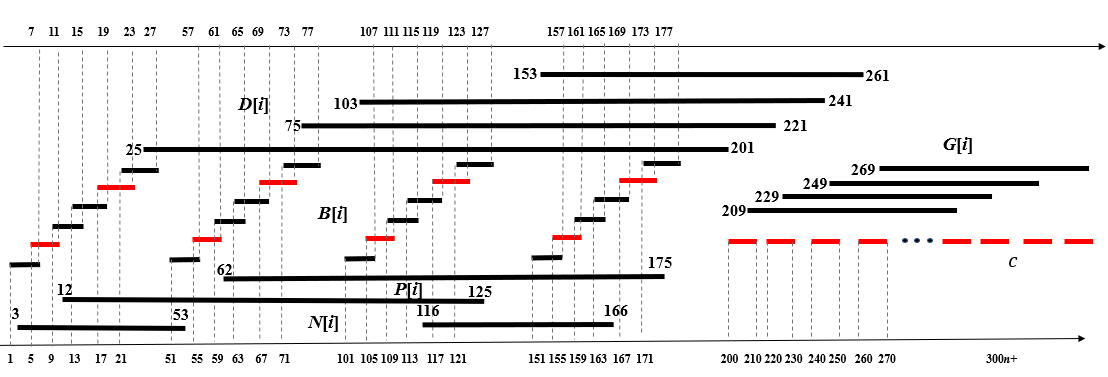}
\caption{The intervals base on $x_{1}$. Intervals corresponding to terminal vertices are red, and intervals corresponding to Steiner vertices are colored black.}
\label{intervals}
\end{figure}
\begin{figure}[H]
\centering
\includegraphics[width=0.9\textwidth]{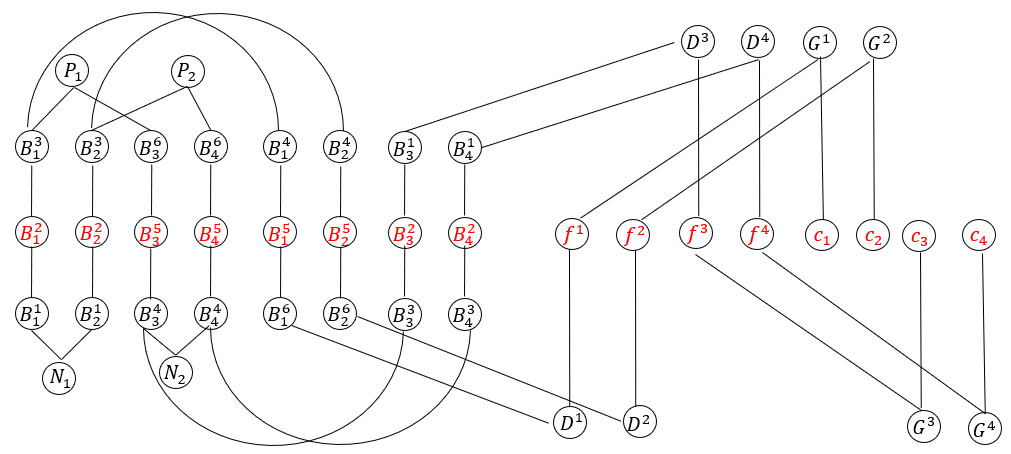}
\caption{The corresponding circle graph according to the intervals in Figure~\ref{intervals}.}
\label{olg}
\end{figure}

Finally, we conduct a reduction from the Minimum Steiner Tree problem on circle graphs to the SMSR problem even if the duplication number of the converted chromosomes is 2.

\begin{lemma}
  \label{twocasecycle}
Given two  related simple chromosomes $\pi$ and $\tau$, both have a duplication number 2, let $f$ be the unique bijection between $\mathcal{A}[\pi]$ and $\mathcal{A}[\tau]$. 
$ACG(\pi,\tau,f)$ is composed of 1-cycles and 2-cycles, and each appearance of an even repeat corresponds to a 1-cycle,
and the two appearances of every odd repeat corresponds to a 2-cycle.
\end{lemma}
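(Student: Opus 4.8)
The plan is to argue repeat-by-repeat, exploiting Lemma~\ref{sameelement}, which confines every cycle of $ACG(\pi,\tau,f)$ to the red and blue edges of a single repeat. Since $dp[\pi]=dp[\tau]=2$, a repeat $x$ has at most two occurrences, contributing at most two red edges (one per occurrence in $\pi$), two blue edges (one per occurrence in $\tau$), and the four nodes $l(x_i),r(x_i),l(x_j),r(x_j)$. Because each node meets exactly one red and one blue edge, the portion of $ACG(\pi,\tau,f)$ carrying $x$ decomposes into alternating cycles using precisely these two red and two blue edges, so the only possibilities are two $1$-cycles or a single $2$-cycle. Genes and repeats occurring once (which are even by definition) trivially give a single $1$-cycle, as their unique blue edge can only join the two nodes of their unique occurrence. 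It then remains to decide, for a duplication-$2$ repeat, which configuration arises and to match it to the even/odd dichotomy.

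Next I would translate neighbor-consistency into the endpoints of the blue edges. Recall that the blue edge produced for a $\tau$-occurrence $y_k$ joins the two $\pi$-nodes onto which the left adjacency $\langle r(y_{k-1}),l(y_k)\rangle$ and the right adjacency $\langle r(y_k),l(y_{k+1})\rangle$ of $y_k$ are mapped by $f$; this is exactly what the four cases of the $ACG$ construction record. If $x$ is \emph{even} (neighbor-consistent), then by definition the two adjacencies flanking $x_i$ are matched to the two adjacencies associated with a single occurrence $y_k$, so both endpoints of the blue edge of $y_k$ land on nodes of the single occurrence $x_i$, i.e.\ that blue edge joins $l(x_i)$ and $r(x_i)$. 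Together with the red edge between $l(x_i)$ and $r(x_i)$ this is a $1$-cycle, and symmetrically $y_l$ with $x_j$ yields a second $1$-cycle; hence an even repeat contributes two $1$-cycles, one per appearance.

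If $x$ is \emph{odd} (neighbor-inconsistent), the two adjacencies flanking $x_i$ are matched to adjacencies of two \emph{different} $\tau$-occurrences $y_k$ and $y_l$; equivalently, the two adjacencies associated with $y_k$ are matched to one adjacency of $x_i$ and one of $x_j$. Consequently the blue edge of $y_k$ has one endpoint among the nodes of $x_i$ and the other among the nodes of $x_j$, and likewise the blue edge of $y_l$ joins the remaining node of $x_i$ to the remaining node of $x_j$. These four edges then link all four nodes into a single alternating red--blue cycle, i.e.\ a single $2$-cycle, completing the correspondence.

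The step I expect to be most delicate is the endpoint bookkeeping underlying the two middle claims: confirming that ``both flanking adjacencies matched to the same $y_k$'' forces the blue edge of $y_k$ to stay inside one occurrence, and that the split case forces it to cross occurrences. Care is needed because an adjacency match may be orientation-reversed (using $\langle a,b\rangle=\langle \bar b,\bar a\rangle$), so one must verify all four cases of the $ACG$ construction rather than assuming the direct identification $l(y_k)=l(x_i)$ and $r(y_k)=r(x_i)$. Once these identifications are pinned down, the cycle structure follows immediately from the node-degree count, and Lemma~\ref{samesign} can be invoked to cross-check the orientations implicit in the opposite/non-opposite labels of the blue edges.
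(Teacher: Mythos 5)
Your proposal is correct and follows essentially the same route as the paper's proof: invoke Lemma~\ref{sameelement} to confine each cycle to a single repeat (hence at most two red edges when the duplication number is 2), then translate neighbor-consistency into blue edges staying within one occurrence (giving 1-cycles) and neighbor-inconsistency into blue edges crossing occurrences (giving a 2-cycle). Your explicit degree-counting argument for why the only possibilities are two 1-cycles or one 2-cycle, and your flagging of the orientation-reversed matching cases in the $ACG$ construction, are points the paper leaves implicit, so your write-up is if anything slightly more careful.
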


\begin{proof}
Since the duplication number of $\pi$ and $\tau$ is 2, from Lemma~\ref{sameelement}, each cycle of $ACG(\pi,\tau,f)$ contains at most 2 red edges.

Let $x$ be an even repeat and $x_{i}$, $x_{j}$ be its two occurrences in $\pi$. Since $x$ is neighbor-consistent, there must be two occurrences of $x$, say $y_{k}$ and $y_{l}$ in $\tau$ such that $\langle r(x_{i-1}), l(x_{i}) \rangle$ and $\langle r(x_{i}), l(x_{i+1}) \rangle$ are matched to $\langle r(x_{k-1}), l(x_{k}) \rangle$ and $\langle r(x_{k}), l(x_{k+1}) \rangle$, and $\langle r(x_{j-1}), l(x_{j}) \rangle$ and $\langle r(x_{j}), l(x_{j+1}) \rangle$ are matched to $\langle r(x_{l-1}), l(x_{l}) \rangle$ and $\langle r(x_{l}), l(x_{l+1}) \rangle$ in the bijection $f$. From the construction of $ACG(\pi,\tau,f)$, $l(x_{i})$ and $r(x_{i})$ are connected by a blue edge, and $l(x_{j})$ and $r(x_{j})$ are connected by a blue edge, which implies that 
$x_{i}$ and $x_{j}$ correspond to 1-cycles respectively.

Let $x$ be an odd repeat and $x_{i}$, $x_{j}$ be its two appearances in $\pi$. Since $x$ is neighbor-inconsistent, there must be two occurrences of $x$, say $y_{k}$ and $y_{l}$ in $\tau$ such that $\langle r(x_{i-1}), l(x_{i}) \rangle$
is matched to an adjacency involving $y_{k}$ while $\langle r(x_{i}), l(x_{i+1}) \rangle$ is matched to
an adjacency involving $y_{l}$, thus $l(x_{i})$ and $r(x_{i})$ are not connected by a blue edge, which implies that $x_{i}$ and $x_{j}$ are in a 2-cycle.
\end{proof}

\begin{theorem}
The SMSR problem is NP-hard even if the input genomes have duplication number 2.
\end{theorem}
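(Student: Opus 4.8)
The plan is to reduce the \emph{Minimum Steiner Tree} problem on \emph{circle graphs}, shown NP-hard in Theorem~\ref{inappr3sat}, to the decision form of \emph{SMSR} restricted to duplication number $2$. Given a circle graph $G$ with an independent terminal set $X$ and candidate Steiner set $V(G)\setminus X$, I would encode $G$ as the intersection graph of two related chromosomes $\pi$ and $\tau$ with $dp[\pi]=dp[\tau]=2$. Concretely, fix an overlap (interval) representation of $G$; for each vertex of $G$ create one repeat whose two occurrences in $\pi$ span an interval that reproduces the corresponding chord, so that two repeats are adjacent in $IG(\pi,\tau,f)$ exactly when the corresponding vertices are adjacent in $G$. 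The chromosome $\tau$ and the orientations of the occurrences in $\pi$ are then chosen so that (i) every terminal in $X$ becomes an \emph{odd} (neighbor-inconsistent, weight~$1$) repeat whose two occurrences carry the same orientation, i.e. a white vertex, equivalently a non-opposite $2$-cycle in $ACG(\pi,\tau,f)$ by Lemma~\ref{twocasecycle}; and (ii) every candidate becomes an \emph{even} (neighbor-consistent, weight~$2$) repeat, i.e. a $1$-cycle. A single auxiliary black repeat is added as a reversible ``seed'' so that the constructed instance is a YES-instance of the decision problem.

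The heart of the reduction is the cost identity $\mathrm{opt}_{SMSR}(\pi,\tau)=|X|+2\cdot\mathrm{opt}_{ST}(G,X)+c$, where $\mathrm{opt}_{ST}$ is the minimum Steiner-tree size and $c$ is the fixed seed cost. The inequality $\le$ is the constructive direction: given an optimal Steiner set $S^{*}$, I would sort $\pi$ into $\tau$ by performing reversals in a tree order rooted at the seed, invoking Proposition~\ref{joinorsolitclycle} together with the ``good reversal'' guarantee of Theorem~\ref{existareversal} (equivalently Theorem~\ref{existgoodreversal}) to always find an applicable symmetric reversal. Each odd repeat (terminal) is reversed exactly once, while each even candidate lying on the tree must be reversed once to propagate the black colour to its still-white neighbours (flipping it to odd) and then a second time to restore its weight-$2$ parity; hence each of the $|S^{*}|$ used candidates contributes exactly two reversals, giving $|X|+2|S^{*}|+c$ in total.

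The reverse inequality $\ge$ is where the real work lies and is the step I expect to be the main obstacle. Every odd repeat must be reversed an odd, hence positive, number of times (Theorem~\ref{determine-2} and Theorem~\ref{finalcon}), accounting for $|X|$ reversals; by the parity constraint the remaining reversals act on even repeats, and any even repeat that is touched must be reversed an even, hence at least twice. It therefore suffices to prove that the set $T$ of even repeats reversed at least once, together with the terminals and the seed, induces a \emph{connected} subgraph of $G$ spanning all of $X$, so that $T$ contains a Steiner set and the even reversals number at least $2\,\mathrm{opt}_{ST}(G,X)$. The crux is showing this connectivity is forced: since the terminals form an independent set, a white odd terminal can be made black (and thus reversed) only by first reversing a neighbour, and an inductive colour-propagation argument tracking how the rule-I flips of Section~3 spread blackness through $IG(\pi,\tau,f)$ shows that every terminal must be linked to the seed by a chain of reversed repeats, all of them even except the terminals themselves. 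The delicate part, rather than the routine construction, is ruling out ``cancellation'' schedules, i.e. proving that no clever ordering can fix a terminal while reversing a shared even vertex fewer than twice, or while employing a set of even vertices that fails to connect $X$ to the seed.
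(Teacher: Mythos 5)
Your high-level plan coincides with the paper's: reduce from the Minimum Steiner Tree problem on circle graphs (Theorem~\ref{inappr3sat}), encode terminals as white odd repeats, candidate Steiner vertices as white even (weight-2) repeats, attach one black ``seed'', and equate the optimal reversal count with (number of odd repeats) $+\,2\cdot$(Steiner size) $+$ a constant. The counting in both directions is then essentially the argument the paper gives, and your forward/backward sketches would go through once the instance exists.

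The genuine gap is that you treat the construction of $(\pi,\tau)$ as routine (``the chromosome $\tau$ and the orientations \dots are then chosen so that \dots''), when in fact proving that such a pair of chromosomes \emph{exists} is where the paper does most of its work, and your specific encoding fails this test. A terminal realized as a \emph{single} same-orientation odd repeat is, by Lemma~\ref{twocasecycle}, a non-opposite $2$-cycle; its two blue edges are nested, and every candidate repeat crossing its chord is a pair of $1$-cycles whose blue edges stay strictly inside or outside that chord's span. Consequently, in $ACG(\pi,\tau,f)$ the inner blue edge of the terminal together with the green (adjacency) edges and the pass-through blue edges of the $1$-cycles inside its span closes a blue--green \emph{cycle} disjoint from the rest, contradicting Lemma~\ref{a path}. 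Equivalently, no chromosome $\tau$ with $\mathcal{A}[\pi]=\mathcal{A}[\tau]$ makes that repeat odd and white while all repeats crossing it are even; in degenerate small cases one instead runs into redundancy (Lemma~\ref{norepetitive}) forcing the terminal to be even. This is exactly why the paper represents each terminal by \emph{two intersecting} non-opposite $2$-cycles: the partner's blue edges cross the boundary of the terminal's span, restoring the path property, and the paper then verifies realizability by an explicit induction (adding $1$-cycles, pairs of intersecting non-opposite $2$-cycles, and one opposite $2$-cycle while maintaining the blue--green path). This doubling also changes the accounting to $2|X|+2k+1$ rather than your $|X|+2k+c$. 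So the step you dismiss as routine is the crux, and the step you flag as the main obstacle (the lower-bound direction) is, in the paper, a short argument: there are $2|X|+1$ odd vertices and a single black vertex, a vertex can only be reversed while lying in a component containing a black vertex, hence the touched even vertices together with the terminals must form one connected component containing the seed, yielding a Steiner set of size at most $k$.
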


\begin{proof}
Given a circle graph $G=(V,E)$, where $X\subseteq V$ is the set of terminal vertices,
for each vertex $v\in V$, let $[l(v), r(v)]$ be its corresponding interval on the real line. 
Assume that $l(v_{min})$ is the minimum and $r(v_{max})$ is the maximum.   

We first construct two 1-cycles: $(l(v_{min})-1-\delta, l(v_{min})-1+\delta)$, which corresponds to ``+0''; and
$(r(v_{max})+1-\delta, r(v_{max})+1+\delta)$, which corresponds to ``-0''.

For each terminal vertex $t\in X$, we construct two intersecting non-opposite 2-cycles: one is $(l(t)-\delta, l(t)+\delta, r(t)-\delta, r(t)+\delta, l(t)-\delta)$, where $(l(t)-\delta$, $l(t)+\delta)$ and $(r(t)-\delta$, $r(t)+\delta)$ are red edges and $(l(t)-\delta$, $r(t)+\delta)$ and $(l(t)+\delta$, $r(t)-\delta)$ are blue edges. Let $t_{1}$ be the repeat corresponding to this cycle, the two occurrences both have a ``+''sign in $\pi$; the other is $(r(t)-3\delta, r(t)-2\delta, r(t)+2\delta, r(t)+3\delta, r(t)-3\delta)$, where $(r(t)-3\delta$, $r(t)-2\delta)$ and $(r(t)+2\delta$, $r(t)+3\delta)$ are red edges and $(r(t)-3\delta$, $r(t)+3\delta)$ and $(r(t)-2\delta$, $r(t)+2\delta)$ are blue edges. Let $t_{2}$ be the repeat corresponding to this cycle, the two occurrences both have a ``+''sign in $\pi$. 
Then, we denote $l(t)-\delta$ and $r(t)-\delta$ by $t_{1}^h$,
$l(t)+\delta$ and $r(t)+\delta$ by $t_{1}^t$, $r(t)-3\delta$ and $r(t)+2\delta$ by $t_{2}^h$, $r(t)-2\delta$ and $r(t)+3\delta$ by $t_{2}^t$. 

Choose an arbitrary terminal vertex $x\in X$, we construct an opposite 2-cycle: $(r(x)-5\delta, r(x)-4\delta, r(x)+5\delta, r(x)+4\delta, r(x)-5\delta)$, where $(r(x)-5\delta$, $r(x)-4\delta)$ and $(r(x)+4\delta$, $r(x)+5\delta)$ are red edges and $(r(x)-5\delta$, $r(x)+4\delta)$ and $(r(x)-4\delta$, $r(x)+5\delta)$ are blue edges. Let $x_{3}$ be the repeat corresponding to this cycle, the occurrence corresponding to the red edge $(r(x)-5\delta, r(x)-4\delta)$ has a ``+'' sign, and the occurrence corresponding to the red edge $(r(x)+4\delta, r(x)+5\delta)$ has a ``-'' sign in $\pi$.
Then, we denote $r(x)-5\delta$ and $r(x)+\delta$ by $x_{3}^h$,
$r(x)-4\delta$ and $r(x)+4\delta$ by $x_{3}^t$.

For each candidate Steiner vertex $s$, we construct two 1-cycles $(l(s)-\delta, l(s)+\delta)$, $r(s)-\delta, r(s)+\delta)$. Let $s_{1}$ be the repeat corresponding to this cycle, the two occurrences both have a ``+'' sign in $\pi$.
Then, we denote $l(s)-\delta$ and $r(s)-\delta$ by $s_{1}^h$,
$l(s)+\delta$ and $r(s)+\delta$ by $s_{1}^t$.

Let the graph constructed above be denoted by $ACG(G,X)$,
Next, we show that $ACG(G,X)$ is a well-defined  alternative-cycle graph, i.e., there exist two related simple chromosomes $\pi$ and $\tau$ with $\mathcal{A}[\pi]=\mathcal{A}[\tau]$ (then there is a bijection $f$), such that $ACG(\pi,\tau,f)= ACG(G,X)$.

In $ACG(G,X)$, each red edge corresponds to an occurrence of a repeat, thus clearly the chromosome $\pi$ is a sequence of the occurrences. Then we can view the blanks between red edges as adjacencies.
From the above construction, each blue edge connects in the cycles corresponding to $v_{i}$, then connects node $v_{i}^h$ with node $v_{i}^t$, thus an blue edge will correspond to an occurrence of $v_{i}$ in $\tau$. 
Therefore, from Lemma~\ref{a path}, it is sufficient to show that all the blanks and blue edges form a path. 

Sorting the vertices of $V$ in an increasing order of $l(v)$,
then $ACG(G,X)$ is obtained by adding a 1-cycle, two intersecting non-opposite 2-cycles, and an opposite 2-cycle iteratively. We prove that the blanks and blue edges from a path inductively. Initially, the two blue edges of the cycles $(0^h, 0^t)$ and $(0^t, 0^h)$ surely form a path. Assume that we have a path $\mathbb{P}$ of blanks
and blue edges, $\mathbb{P}=(0^h, n_{1}, \dots, n_{k},0^h)$. 

(1) in case that a 1-cycle $(s_1^h, s_1^t)$ is added in between the blank of $n_{i}$ and $n_{j}$, if $j=i+1$, then $\mathbb{P}'=(0^h, \dots, n_{i}, s_1^h, s_1^t, n_{j}, \dots, n_{k})$ is a path;  if $j=i-1$, then $\mathbb{P}'=(n_{0}, \dots, n_{j}, s_1^t, s_1^h, n_{i}, \dots, n_{k}, 0^h)$ is also a path. 

(2) in case that two intersecting non-opposite 2-cycles $(t_{1}^h, t_{1}^t, t_{1}^h, t_{1}^t, t_{1}^h)$ and $(t_{2}^h, t_{2}^t, t_{2}^h, t_{2}^t, t_{2}^h)$ are added in between the two blanks of $n_{i}$, $n_{j}$ and $n_{i'}$, $n_{j'}$. We just show the case that $i+1=j<i'=j'-1$, since all other cases are similar. 
$\mathbb{P}'=(0^h, \dots$, $n_{i}$, $t_{1}^h$, $t_{1}^t$, $t_{2}^h$, $t_{2}^t$, $t_{1}^h$, $t_{1}^t$, $n_{j}$, $\dots$, $n_{i'}$, $t_{2}^h$, $t_{2}^t$, $n_{j'}$, $\dots$, $n_{k},0^h)$ is also a path. Specially, if two intersecting non-opposite 2-cycles are added in between one blank, the new path can be obtained by deleting the segment $(n_{j}$, $\dots$, $n_{i'})$ from $\mathbb{P}'$.

(3) in case that an opposite 2-cycle $(x_{3}^h, x_{3}^t, x_{3}^h, x_{3}^t, x_{3}^h)$ is added in between the two blanks of $n_{i}$, $n_{j}$ and $n_{i'}$, $n_{j'}$. We just show the case that $i+1=j<i'=j'-1$, since all other cases are similar. 
$\mathbb{P}'=(0^h, \dots, n_{i}, x_{3}^h, x_{3}^t, n_{i'},$ $\dots, n_{j}, x_{3}^t, x_{3}^h, n_{j'}, \dots, n_{k},0^h)$ is also a path.

Therefore, in $ACG(G,X)$, all the blanks and blue edges form a path.
We can obtain $\tau$ along this path, if the path goes through a blue edge from $v_{a}^h$ to $v_{a}^t$, then it has a ``+''sign in $\tau$; and if the path goes through a blue edge from $v_{a}^t$ to $v_{a}^h$, then it has a ``-'' sign in $\tau$. Since the blanks represents adjacencies in both $\pi$ and $\tau$, then 
$\mathcal{A}[\pi]=\mathcal{A}[\tau]$. To make $\pi$ and $\tau$, we insert distinct genes between every adjacency.

Finally, we complete the proof by showing that $G$ has a Steiner set of size $k$ if and only if $\pi$ can be transformed into $\tau$ by $2|X|+2k+1$ symmetric reversals.

($\Rightarrow$) Assume that $G$ has a Steiner set $S$ of size $k$, which implies that the $G(X\cup S)$ forms a tree, 
thus, in the intersection graph $IG(\pi, \tau)$, the repeats corresponds to $X\cup S$ are in a single connected component, which contains a black vertex $x_{3}$. From our construction, there are also two odd vertices corresponding to each vertex of $X$, and one even vertex corresponding to each vertex of $S$, thus $\pi$ can be transformed into $\tau$ by $2|X|+2k+1$ symmetric reversals.   

($\Leftarrow$) Assume that $\pi$ can be transformed into $\tau$ by $2|X|+2k+1$ symmetric reversals. 
Since there are $2|X|+1$ odd vertices in the intersection graph $IG(\pi, \tau)$, but only one black vertices, 
then the solution must perform at least $2|X|+1$ symmetric reversals on these odd vertices, and at most $2k$ symmetric reversals on $k$ even vertices.
A symmetric reversal on a vertex is applicable if and only if the vertex is in a connected component that contains black vertices. Thus, the $2|X|+1$ odd vertices and $k$ even vertices are in a connected component, which implies that $G$
has a Steiner set of size $k$.
\end{proof}

We give an example of the above reduction in Figure~\ref{MSTandACG}.

\begin{figure}[H]
    \centering
    \includegraphics[width=1.0\textwidth]{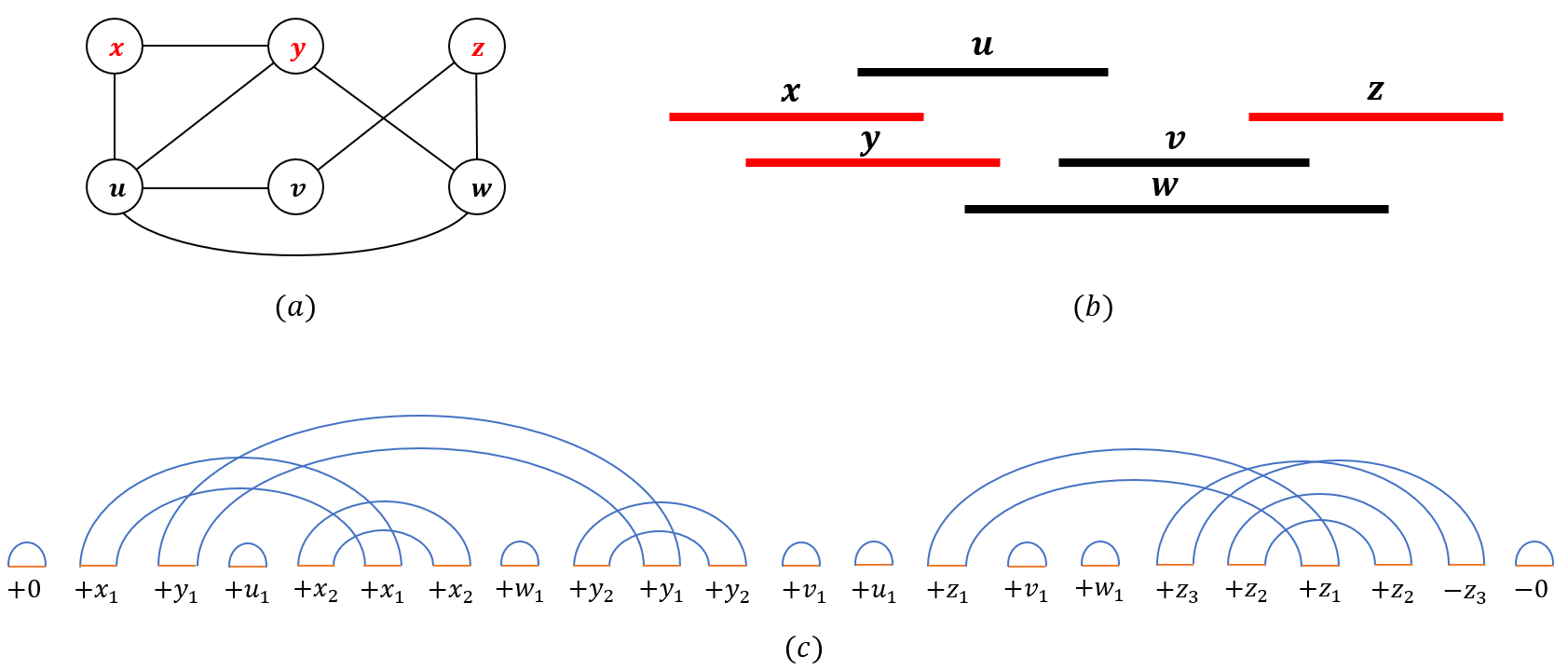}
    \caption{$(a)$. A circle graph $G$, where $X=\{x, y, z\}$ is the set of terminals. $(b)$. The corresponding intervals of $G$. $(c)$. The constructed alternative-cycle graph $ACG(G,X)$, and $\pi=[+0$, $+x_1,+y_1, +u_1,+x_2,+x_1,+x_2,+w_1$, $+y_2,+y_1,+y_2,+v_1,+u_1,+z_1,+v_1,+w_1,+z_3,+z_2,+z_1,+z_2,-z_3,-0]$, $\tau=[+0$, $+x_1,+x_2,+x_1,+y_1,+y_2,+y_1,+u_1,+x_2,+w_1,+y_2,+v_1,+u_1,+z_1,+z_2,+z_1,+v_1,\\+w_1,+z_3,-z_2,-z_3,-0]$.}
    \label{MSTandACG}
\end{figure}

\section{Concluding Remarks}
This paper investigates a new model of genome rearrangements named sorting by symmetric reversals. This model is based on recent new findings from genome comparison.  The decision problem, which asks whether a chromosome can be transformed into another by symmetric reversals, is polynomial solvable.
But the optimization problem, which pursues the minimum number of symmetric reversals during the transformation between two chromosomes, is NP-hard.
It is interesting to design some approximation algorithms for the optimization problem.
Perhaps, polynomial time algorithms to solve the optimization problem for more realistic special cases are also interesting.

\section*{Acknowledgments}
This research is supported by NSF of China under grant 61872427 and 61732009.

\end{document}